\definecolor{myellow}{RGB}{255,230,128}
\definecolor{gray20}{RGB}{204,204,204}
\acrodef{fe}[FE]{finite element}
\acrodef{dof}[DOF]{degree of freedom}
\acrodef{vef}[VEF]{vertex, edge, or face}
\acrodef{dg}[DG]{discontinuous Galerkin}
\newtheorem{theorem}{Theorem}[section]
\newtheorem{lemma}[theorem]{Lemma}
\newtheorem{corollary}[theorem]{Corollary}
\newtheorem{remark}[theorem]{Remark}
\newtheorem{method}[theorem]{Algorithm}
\newcommand{\opnorm}{\@ifstar\@opnorms\@opnorm}
\newcommand{\@opnorms}[1]{%
  \left|\mkern-1.5mu\left|\mkern-1.5mu\left|
   #1
  \right|\mkern-1.5mu\right|\mkern-1.5mu\right|
}
\newcommand{\@opnorm}[2][]{%
  \mathopen{#1|\mkern-1.5mu#1|\mkern-1.5mu#1|}
  #2
  \mathclose{#1|\mkern-1.5mu#1|\mkern-1.5mu#1|}
}
\def\opnormh#1{\opnorm{#1}_h}
\def\Eq#1{(\ref{eq:#1})}
\def\x{\boldsymbol{x}}
\def\Dom{\Omega}
\def\Domap{{\FV\Dom}}
\def\Domex{{\Omega_{\rm act}}}
\def\Domin{{\Omega_{\rm in}}}
\def\Domart{{\Omega_{\rm art}}}
\def\Domext{{\Omega_{\rm ext}}}
\def\Domcut{{\Omega_{\rm cut}}}
\def\meshex{\mathcal{T}^{\rm act}_h}
\def\meshin{\mathcal{T}^{\rm in}_h}
\def\meshact{\mathcal{T}^{\rm act}_h}
\def\meshart{\mathcal{T}^{\rm art}_h}
\def\meshext{\mathcal{T}^{\rm ext}_h}
\def\meshin{\mathcal{T}^{\rm in}_h}
\def\meshcut{\mathcal{T}^{\rm cut}_h}
\def\meshag{\mathcal{T}^{\rm agg}_h}
\def\cev{{C_{\rm e}}}
\def\cev{{C_{\rm e}}}
\def\cbou{{C_{\partial}}}
\def\cell{K}
\def\aggr{{A_K}}
\def\cellref{{\hat K}}
\def\aggrref{\hat{A}_\cell}
\def\uh{u_h}
\def\vh{v_h}
\def\shpf#1{\phi^{#1}}
\def\lset{\psi}
\def\hc{h_\cell}
\def\h{h}
\def\fespst{{V_h}}
\def\fespin{{V_h^{\rm in}}}
\def\fespex{{V_h^{\rm act}}}
\def\fespag{{V_h^{\rm agg}}}
\def\fespagnew{{V_h^{\rm agg,*}}}
\def\nodes#1{\mathcal{N}(#1)}
\def\nodesgin{\mathcal{N}_h^{\rm in}}
\def\nodesgex{\mathcal{N}_h^{\rm act}}
\def\nodesgou{\mathcal{N}_h^{\rm out}}
\def\uvin{{\underline{\mathbf{u}}^{\rm in}}}
\def\uvex{{\underline{\mathbf{u}}^{\rm act}}}
\def\uvou{{\underline{\mathbf{u}}^{\rm out}}}
\def\uv{\underline{\mathbf{u}}}
\def\ucv{\underline{\mathbf{u}}_K}
\def\grad{{\boldsymbol{\nabla}}}
\def\massmatrix{{\mathbf{M}}}
\def\stifmatrix{{\mathbf{A}}}
\def\condnum#1{\kappa(#1)}
\def\eigmin{{\lambda^{-}}}
\def\eigmax{{\lambda^{+}}}
\def\cmat{{\mathbf{C}}}
\def\cmatcref{{\mathbf{C}_{\aggrref}}}
\def\ext#1{\mathcal{E}(#1)}
\def\A{{\mathcal{A}}}
\def\Ac{{{\mathcal{A}}}}
\def\x{\boldsymbol{x}}
\def\next#1{\mathbf{E}#1}
\def\n{{\boldsymbol{n}}}
\def\Vc{{\FV\fespst}}
\newcommand{\gradient}{\boldsymbol{\nabla}}
\newcommand{\normal}{\boldsymbol{n}}
\newcommand{\FV}{}
\begin{document}

\title[Aggregated unfitted finite element method]{The aggregated unfitted finite element method for elliptic problems}

\author[S. Badia]{Santiago Badia}

\author[F. Verdugo]{Francesc Verdugo}

\author[A. F. Mart\'in]{Alberto F. Mart\'in}

\address{Department of Civil and Environmental Engineering. Universitat Polit\`ecnica de Catalunya, Jordi Girona 1-3, Edifici C1, 08034 Barcelona, Spain.}
\address{CIMNE – Centre Internacional de M\`etodes Num\`erics en 
Enginyeria, Parc Mediterrani de la Tecnologia, UPC, Esteve Terradas 5, 08860 
Castelldefels, Spain.}

\thanks{SB gratefully acknowledges the support received from the Catalan Government through the ICREA Acad\`emia Research Program.  E-mails: {\tt sbadia@cimne.upc.edu} (SB), {\tt fverdugo@cimne.upc.edu} (FV), {\tt amartin@cimne.upc.edu} (AM)}

\date{\today}

\begin{abstract}
Unfitted finite element techniques are valuable tools in different applications where the generation of body-fitted meshes
is difficult. However, these techniques are prone to severe ill conditioning problems that obstruct the efficient use of iterative Krylov methods and, in consequence, hinders the practical usage of unfitted methods for realistic large scale applications. In this work, we present a technique that addresses such conditioning problems by constructing enhanced finite element spaces based on a cell aggregation technique. The presented method, called \emph{aggregated unfitted finite element method},  is easy to implement, and can be used, in contrast to previous works, in Galerkin approximations of coercive problems with conforming Lagrangian finite element spaces. The mathematical analysis of the new method states that the condition number of the resulting linear system matrix scales as in standard finite elements for body-fitted meshes, without being affected by small cut cells, and that the method leads to the optimal finite element convergence order. These theoretical results are confirmed with 2D and 3D numerical experiments.
\end{abstract}

\maketitle


\noindent{\bf Keywords:} unfitted finite elements; embedded boundary methods; ill-conditioning. 

\tableofcontents

\section{Introduction} \label{sec:int}

Unfitted \ac{fe} techniques are specially appealing when the generation of \emph{body-fitted} meshes is difficult.  They are helpful in a number of contexts including  multi-phase and multi-physics applications with moving interfaces (e.g., fracture mechanics, fluid-structure interaction \cite{badia_fluidstructure_2008}, or free surface flows), or in situations in which one wants to avoid the generation of body-fitted meshes to simplify as far as possible the pre-processing steps
(e.g., shape or topology optimization frameworks, medical simulations based on CT-scan data, or
parallel large-scale simulations).  In addition, the huge success of isogeometrical analysis
(spline-based discretization) and the severe limitations of this approach in complex 3D geometries
will probably increase the interest of unfitted methods in the near future
\cite{kamensky_immersogeometric_2015}.  Unfitted \ac{fe} methods have been named in different ways. When
designed for capturing interfaces, they are usually denoted as eXtended \ac{fe} methods (XFEM)
\cite{belytschko_arbitrary_2001}, whereas they are usually denoted as embedded (or immersed)
boundary methods, when the motivation is to simulate a problem using a (usually simple Cartesian)
background mesh (see, e.g.,
\cite{burman_cutfem:_2015,mittal_immersed_2005,schillinger_finite_2014}).  

Yet useful, unfitted \ac{fe} methods have known drawbacks. They pose problems to numerical integration,
imposition of Dirichlet boundary conditions, and lead to ill conditioning problems. Whereas
different techniques have been proposed in the literature to address the issues related with
numerical integration (see, e.g., \cite{sudhakar_accurate_2014}) and the imposition of Dirichlet
boundary conditions (see, e.g., \cite{parvizian_finite_2007}), the conditioning problems are one of
the main showstoppers still today for the successful use of this type of methods in realistic large
scale applications. For most of the unfitted \ac{fe} techniques, the condition number of the discrete
linear system does not only depend on the characteristic element size of the background mesh, but
also on the characteristic size of the cut cells, which can be arbitrary small and have
arbitrarily high aspect ratios. This is an important problem. At large scales, linear systems are
solved with iterative Krylov sub-space methods \cite{saad_iterative_2003} in combination with
scalable preconditioners. Unfortunately, the well known scalable preconditioners based on
(algebraic) multigrid \cite{briggs_multigrid_2000} or multi-level domain decomposition
\cite{badia_multilevel_2016} are mainly designed for body-fitted meshes and cannot readily deal with
cut cells. Different preconditioners for unfitted \ac{fe} methods have been recently proposed, but they are
mainly serial non-scalable algorithms  (see, e.g.,
\cite{menk_robust_2011,berger-vergiat_inexact_2012,hiriyur_quasi-algebraic_2012,de_prenter_condition_2017}).
Recently, a robust domain decomposition preconditioner able to deal with cut cells has been proposed in
\cite{badia_robust_2017}. Even though this method has proven to be scalable in some complex 3D
examples, it is based on heuristic considerations without a complete mathematical analysis and its application to second (and higher) order \acp{fe} is involved. This lack of preconditioners for unfitted \acp{fe} can be
addressed with enhanced formulations that provide well-posed discrete systems independently of the
size of the cut cells. Once the conditioning problems related to cut cells are addressed, the application of standard
preconditioners for body-fitted meshes to the unfitted case is strongly simplified, opening the door to large-scale computations.

The main goal of this work is to develop such an enhanced unfitted \ac{fe} formulation that fixes the
problems associated with cut cells. The goal is to achieve condition numbers that scale only with
the element size of the background mesh in the same way as in standard \ac{fe} methods for
body-fitted meshes. Our purpose is to implement it in FEMPAR, our in-house large scale \ac{fe} code
\cite{badia_fempar:_2017}. Since FEMPAR is a parallel multi-physics multi-scale code that includes
different continuous and discontinuous \ac{fe} formulations and several element types, it is crucial for
us that the novel formulation fulfills the following additional properties: 1) It should be general
enough to be applied to several problem types, 2) it should deal with both continuous and
discontinuous \ac{fe} formulations, 3)  it should deal with high order interpolations, and 4) it should
be easily implemented in an existing parallel \ac{fe} package.

To our best knowledge, none of the existing unfitted \ac{fe}
formulations fulfill these requirements simultaneously. For instance,
one can consider the \emph{ghost penalty} formulation used in the
CutFEM method \cite{burman_cutfem:_2015,burman_fictitious_2012}
However, it leads to a weakly non-consistent algorithm, and it
requires to compute high order derivatives on faces for high order
\acp{fe}, which are not at our disposal in general \ac{fe} codes and
are expensive to compute, certainly complicating the implementation of
the methods and harming code performance. Alternatively, for
finite volume and \ac{dg} formulations, one can consider the so-called
\emph{cell aggregation} (or agglomeration) techniques \cite{helzel_high-resolution_2005,kummer_extended_2013}. E.g., for \ac{dg} formulations, the
idea is simple: cells with the \emph{small cut cell problem}, i.e., the
ratio between the volume of the cell inside the physical domain and
the total cell volume is close to zero, are merged with neighbor full
cells forming aggregates. A new polynomial space is defined in each
aggregate that replaces the local \ac{fe} spaces of all cells merged
in it. This process fixes the conditioning problems, since the support
of the newly defined shape functions is at least the volume of a full
cell. Even though this idea is simple and general enough to deal with
different problem types and high order interpolations, the resulting
discrete spaces are such that the enforcement of continuity through
appropriate local-to-global \acp{dof} numbering, as in standard \ac{fe}
codes (see, e.g., \cite{badia_fempar:_2017}), is not possible, limiting their usage to discontinuous
Galerkin or finite volume formulations. Up to our best knowledge, there
is no variant of cell agglomeration currently proposed in the
literature producing conforming \ac{fe} spaces, which could be used
for classical continuous Galerkin formulations. It is the purpose of this work.

In this article, we present an alternative cell aggregation technique that can be used for both
continuous and discontinuous formulations, the \emph{aggregated unfitted \ac{fe} method}. We start with the usual (conforming) Lagrangian \ac{fe} space that includes cut cells, which
is known to lead to conditioning problems. The main idea is to  eliminate
from this space all the potentially problematic \acp{dof} by introducing a set of judiciously
defined constraints. These constraints are introduced using information provided by the cell
aggregates, without altering the conformity of the original \ac{fe} space. Alternatively, the method can be understood as an extension operator from the interior (\emph{well-posed}) \ac{fe} space that only involves interior cells to a larger \ac{fe} space that includes cut cells and covers the whole physical domain. Discontinuous spaces can
also be generated as a particular case of this procedure, which makes the method compatible also
with \ac{dg} formulations. In contrast to previous works, we also include a
detailed mathematical analysis of the method, in terms of well-posedness, condition number estimates, and \emph{a priori} error estimates.  For elliptic problems, we mathematically prove that 1) the
method leads to condition numbers that are independent from small cut cells, 2) the condition
numbers scale with the size of the background mesh as in the standard \ac{fe} method, 3) the penalty parameter of
Nitsche's method required for stability purposes is bounded above, and 4) the optimal \ac{fe} convergence order is recovered. These
theoretical results are confirmed with 2D and 3D numerical experiments using the Poisson equation as
a model problem. 

The outline of the article is as follows. In Section \ref{sec:bac_mesh}, we
introduce our embedded boundary setup and the strategy to build the cell aggregates. In Section
\ref{sec:agg_unf}, we describe the construction of the novel \ac{fe} spaces based on the cell
aggregates. In Section \ref{sec_app_ell}, we introduce our elliptic model problem. The numerical
analysis of the method is carried out in Section \ref{sec:sta_ana}. Finally, we present a complete set of numerical experiments in Section \ref{sec:num_exp} and
draw some conclusions in Section~\ref{sec:concl}.

\section{\FV Embedded boundary setup and cell aggregation} \label{sec:bac_mesh}


Let $\Dom \subset \mathbb{R}^d$ be an open bounded polygonal domain, with $d\in\{2,3\}$ the number of spatial dimensions. For the sake of simplicity and without loss of generality, we consider in the numerical experiments below that the domain boundary is defined as the zero level-set of a given scalar function $\lset^\mathrm{ls}$, namely $\partial\Omega\doteq\{x\in\mathbb{R}^d:\lset^\mathrm{ls}(x)=0\}$.\footnote{Analogous assumption have to be made for body-fitted methods.} We note that the problem geometry could be described using 3D CAD data instead of level-set functions, by providing techniques to compute the intersection between cell edges and surfaces (see, e.g., \cite{marco_exact_2015}). In any case, the way the geometry is handled does not affect the following exposition. Like in any other embedded boundary method, we build the computational mesh by introducing an \emph{artificial} domain $\Domart$ such that it has a simple geometry that is easy to mesh using Cartesian grids and it includes the \emph{physical} domain $\Omega \subset\Domart$  (see Fig. ~\ref{fig:immersed-setup-a}).

\begin{figure}[ht!]
  \centering
  \begin{subfigure}{0.24\textwidth}
    \centering
    \includegraphics[width=0.9\textwidth]{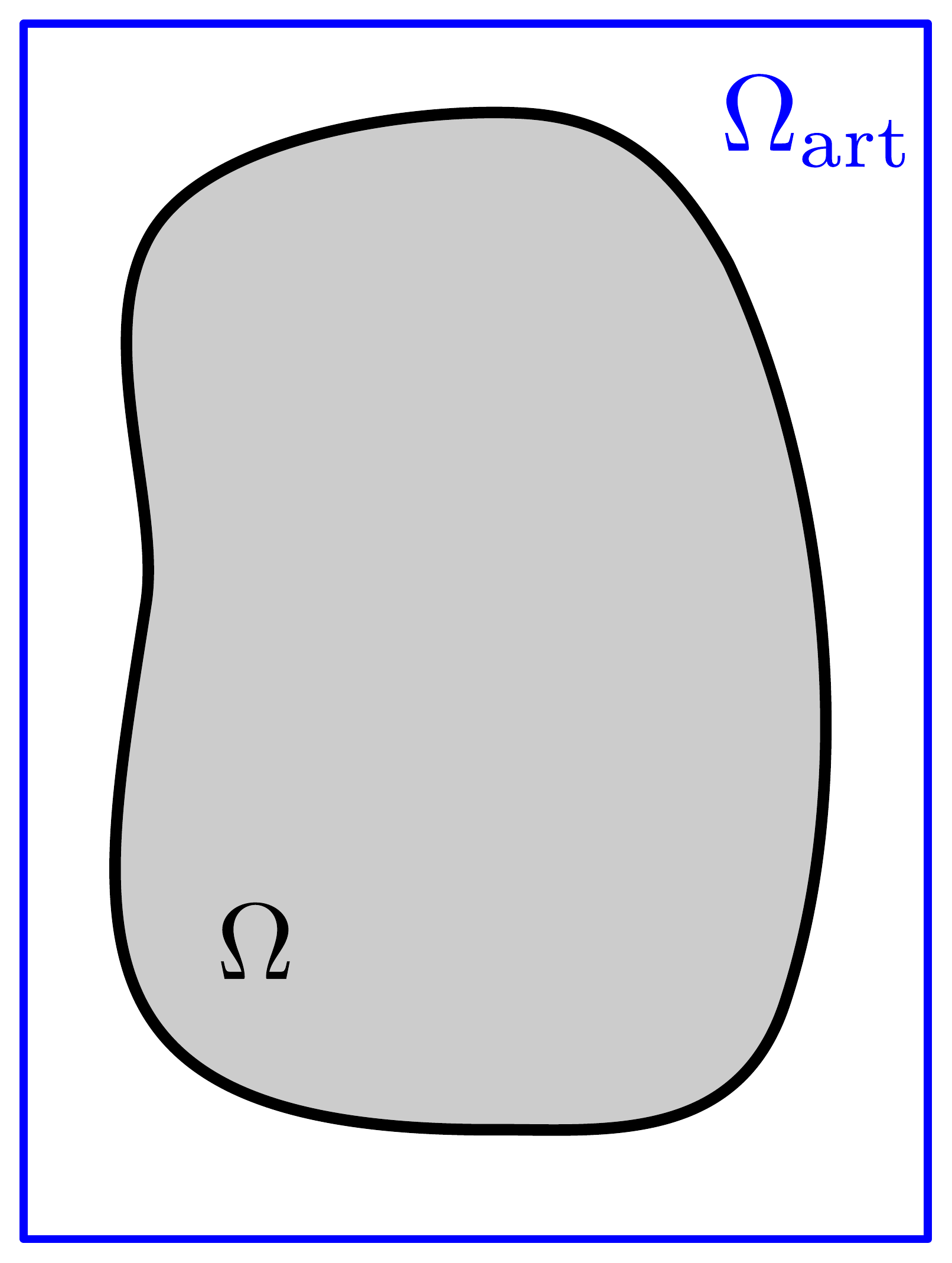}
    \caption{}
    \label{fig:immersed-setup-a}
  \end{subfigure}
  \begin{subfigure}{0.24\textwidth}
    \centering
    \includegraphics[width=0.9\textwidth]{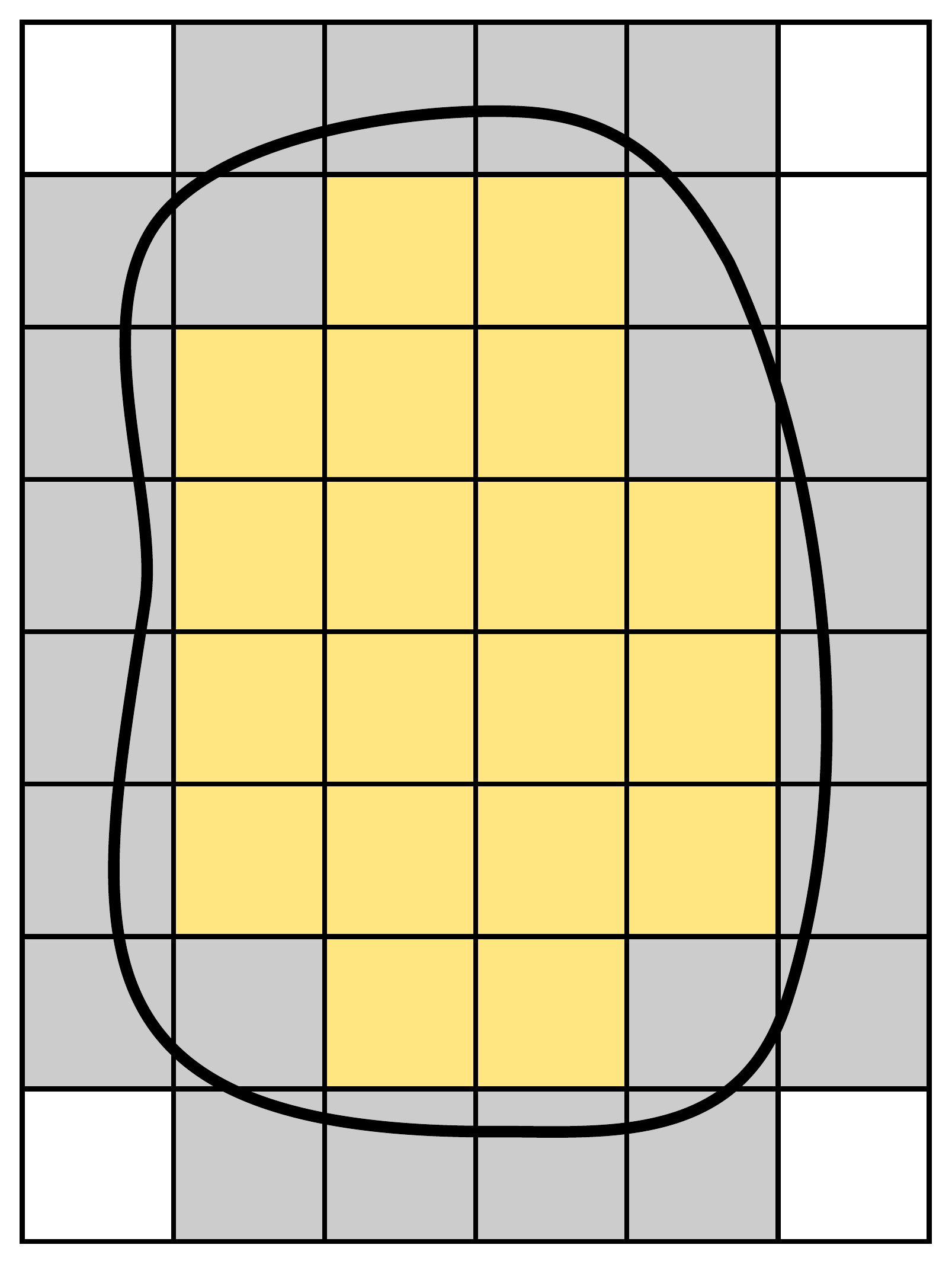}
    \caption{}
    \label{fig:immersed-setup-b}
  \end{subfigure}
    \begin{subfigure}{0.1\textwidth}
    \begin{tabular}{l}
    \tikz{\draw[fill=myellow]  (0,0) rectangle (1.4em,1.4em);} \emph{internal} cells
    \\
    \tikz{\draw[fill=gray20]  (0,0) rectangle (1.4em,1.4em);} \emph{cut} cells
    \\
    \tikz{\draw  (0,0) rectangle (1.4em,1.4em);} \emph{external} cells
    \end{tabular}
  \end{subfigure}
  \caption{Embedded boundary setup.}
  \label{fig:immersed-setup}
\end{figure}

Let us construct a partition of $\Domart$ into \emph{cells}, represented by $\meshart$, with characteristic cell size $h$. We are interested in $\meshart$ being a Cartesian mesh into hexahedra for $d=3$ or quadrilaterals for $d = 2$, even though unstructured n-simplex background meshes can also be considered. Cells in $\meshart$ can be classified as follows: a cell $\cell \in \meshart$ such that $\cell \subset \Omega$ is an \emph{internal cell}; if $\cell \cap \Omega = \emptyset$, $\cell$ is an \emph{external cell}; otherwise, $\cell$ is a \emph{cut cell} (see Fig.~\ref{fig:immersed-setup-b}).
The set of interior (resp., external and cut) cells is represented with $\meshin$ and its union $\Domin \subset \Omega$ (resp., $(\meshext,\Domext)$ and $(\meshcut, \Domcut))$. Furthermore, we define the set of \emph{active cells} as $\meshact \doteq \meshin \cup \meshcut$ and its union $\Domex$. In the numerical analysis, we assume that the background mesh is \emph{quasi-uniform} (see, e.g., \cite[p.107]{brenner_mathematical_2010}) to reduce technicalities, and define a characteristic mesh size $\h$. The maximum element size is denoted with $\h_{\rm max}$.



We can also consider non-overlapping cell aggregates $\aggr$ composed of cut cells and one interior cell $\cell$ such that the aggregate is connected, using, e.g., the strategy described in Algorithm \ref{algo:aggr_scheme}. It leads to another partition $\meshag$ defined by the aggregations of cells in $\meshex$; interior cells that do not belong to any aggregate remain the same. By construction of Algorithm \ref{algo:aggr_scheme}, there is only one interior cell per aggregate, denoted as the \emph{root cell} of the aggregate, and every cut cell belongs to one and only one aggregate. For a cut cell, we define its root cell as the root of the only aggregate that contains
the cut cell. The root of an interior cell is the cell itself. Thus, there is a one-to-one mapping between aggregates (including interior cells) $\aggr \in\meshag$ and the root cut cell $K \in \meshin$. As a result, we can use the same index for the aggregate and the root cell. We build the aggregates in $\meshag$ with Algorithm \ref{algo:aggr_scheme}. In any case, other aggregation algorithms could be considered, e.g., touching in the first step of the algorithm not only the interior cells, but also cut cells \emph{without} the small cut cell problem. It can be implemented by defining the quantity
$\eta_\cell \doteq  \frac{| \cell \cap \Omega|}{ |\cell|}$ and touch in the first step not only the interior cells but also any cut cell with $\eta_\cell > \eta_0 > 0$ for a fixed value $\eta_0$.
 






\begin{method}[Cell aggregation scheme]\

\begin{enumerate}
\item Mark all interior cells as touched and all cut cells as untouched.
\item For each untouched cell,  if there is at least one touched cell connected
to it through a facet  $F$ such that $F \cap \Omega \neq \emptyset$, we aggregate the cell to the touched cell belonging to the aggregate containing the closest interior cell. If more than one touched cell fulfills this requirement, we choose one arbitrarily, e.g., the one with smaller global id.  
\item Mark as touched all the cells aggregated in 2. 
\item Repeat 2. and 3. until all cells are aggregated.
\end{enumerate}
\label{algo:aggr_scheme}
\end{method}

\begin{figure}[ht!]
  \centering
  \begin{subfigure}{0.99\textwidth}
    \centering
    \begin{small}
      \begin{tabular}{llll}
         \tikz{\fill[fill=myellow]  (0,0) rectangle (1.4em,1.4em);} touched 
         &
         \tikz{\fill[fill=gray20]    (0,0) rectangle (1.4em,1.4em);} untouched
         &
         \tikz{ \draw[line width=0.5pt] (0,0) -- (2em,0);} Aggregates' boundary
         &
         \tikz{ \draw[line width=2pt] (0,0) -- (2em,0);} $\partial \Omega$
      \end{tabular}
    \end{small}
  \end{subfigure}
  \par
  \begin{subfigure}{0.24\textwidth}
    \centering
    \includegraphics[width=0.9\textwidth]{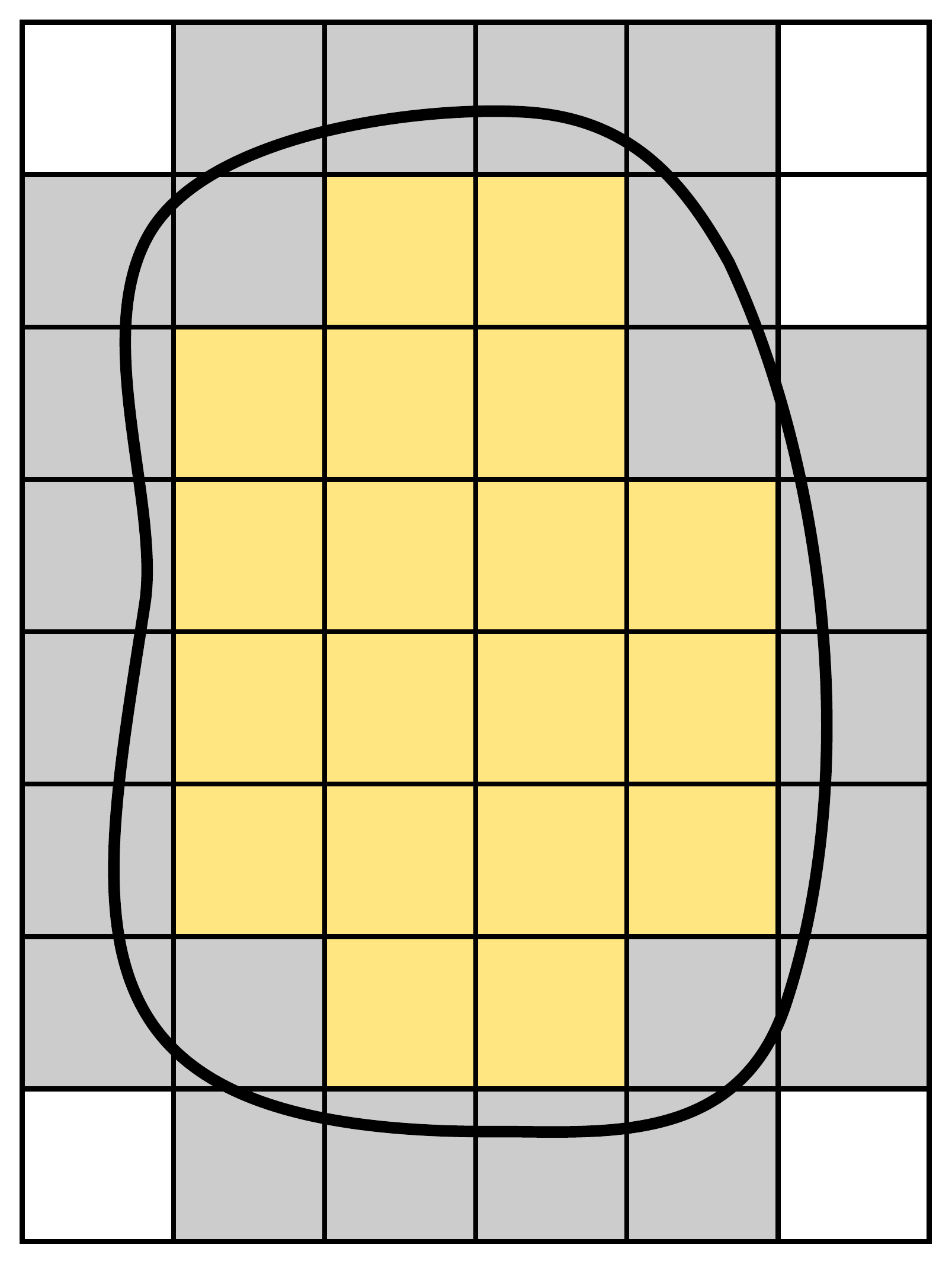}
    \caption{Step 1}
    \label{fig:aggr-steps-a}
  \end{subfigure}
  \begin{subfigure}{0.24\textwidth}
    \centering
    \includegraphics[width=0.9\textwidth]{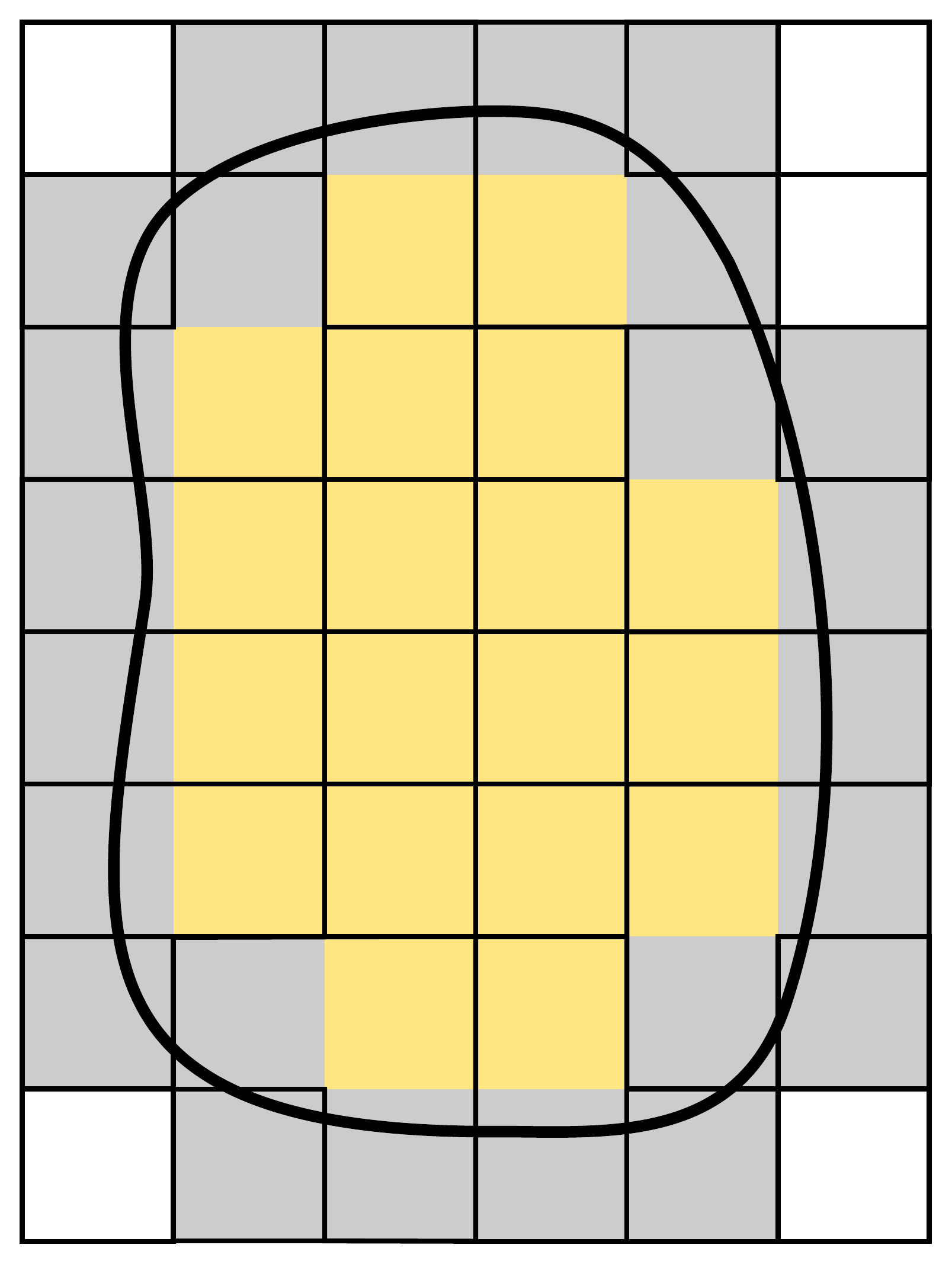}
    \caption{Step 2}
    \label{fig:aggr-steps-b}
  \end{subfigure}
  \begin{subfigure}{0.24\textwidth}
    \centering
    \includegraphics[width=0.9\textwidth]{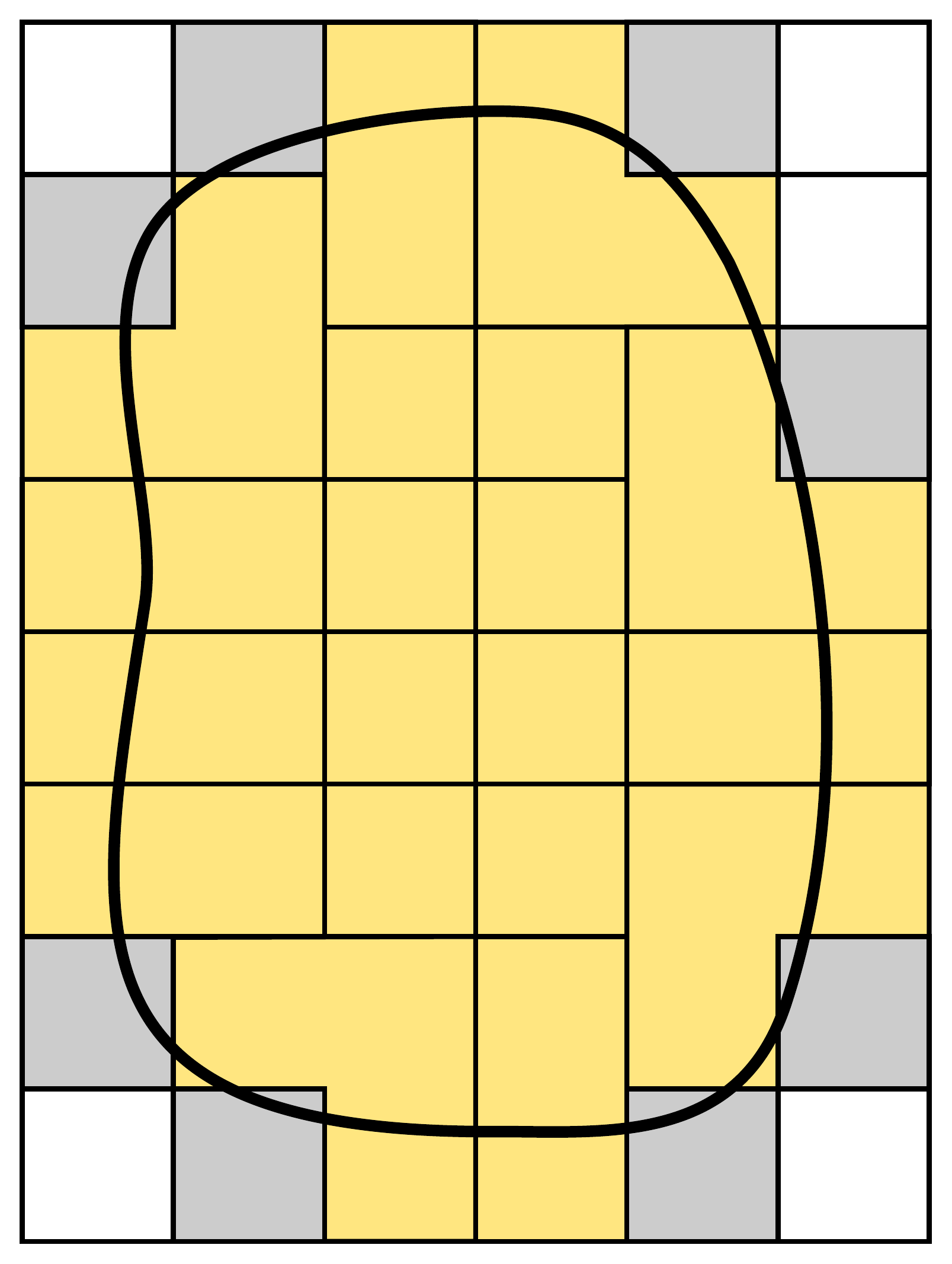}
    \caption{Step 3}
    \label{fig:aggr-steps-c}
  \end{subfigure}
  \begin{subfigure}{0.24\textwidth}
    \centering
    \includegraphics[width=0.9\textwidth]{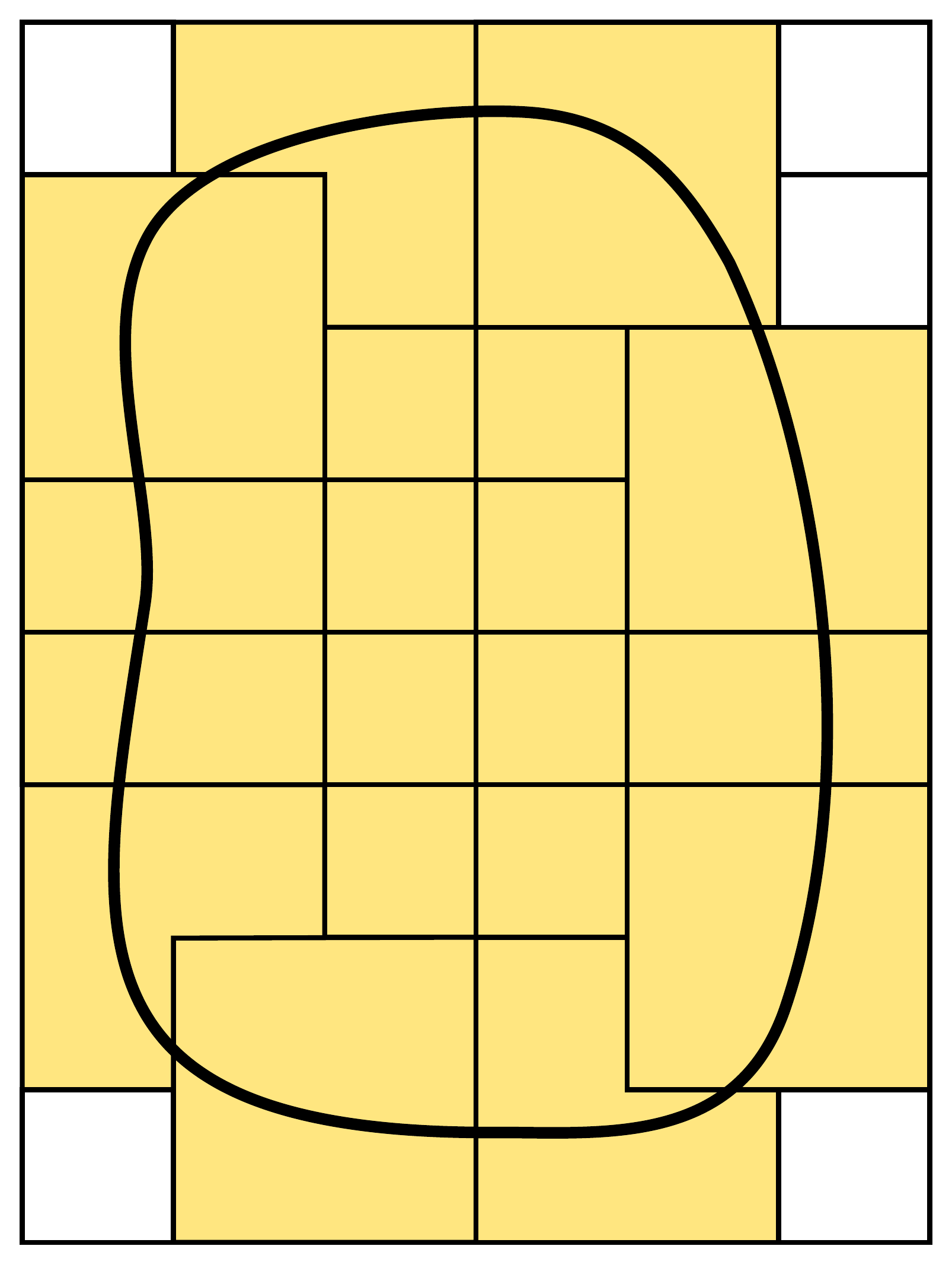}
    \caption{Step 4}
    \label{fig:aggr-steps-d}
  \end{subfigure}
  \caption{Illustration of the cell aggregation scheme defined in Algorithm \ref{algo:aggr_scheme}.}
  \label{fig:aggr-steps}
\end{figure}

Fig. \ref{fig:aggr-steps} shows an illustration of each step in Algorithm \ref{fig:aggr-steps}. The black thin lines represent the boundaries of the aggregates. Note that from step 1 to step 2, some of the lines between adjacent cells are removed, meaning that the two adjacent cells have been merged in the same aggregate. The aggregation schemes can be easily applied to arbitrary spatial dimensions. As an illustrative example,  Fig. \ref{fig:aggr-3d} shows some of the aggregates obtained for a complex 3D domain.

\begin{figure}[ht!]
 \centering
 \includegraphics[width=0.7\textwidth]{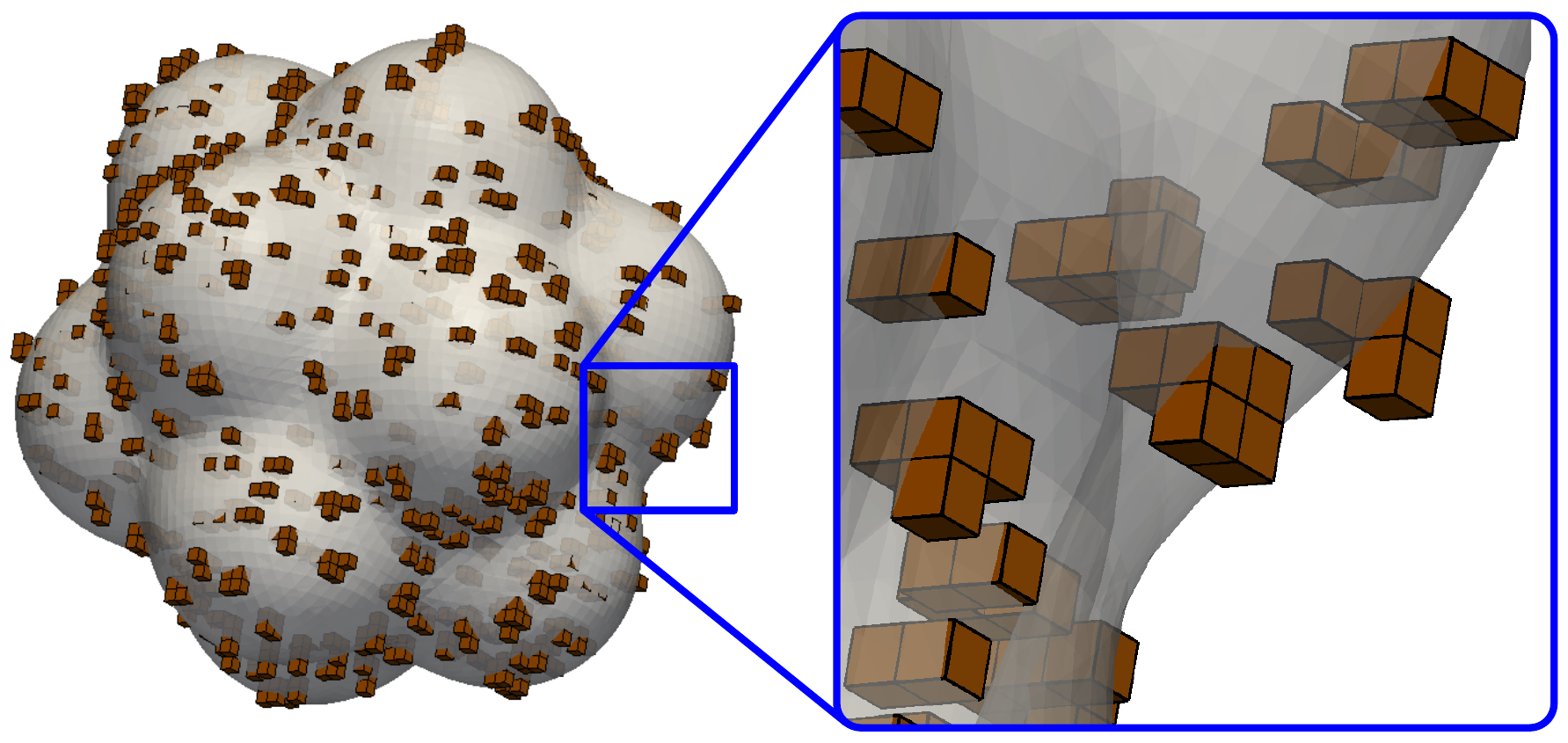}
 \caption{3D aggregates.} 
 \label{fig:aggr-3d}
\end{figure}

In the forthcoming sections, we need an upper bound of the size of the aggregates generated with Algorithm \ref{algo:aggr_scheme}. To this end, let us consider the next lemma.

\begin{lemma}\label{lemma:aggrsize}
  Assume that from any cut cell $\cell_0 \in \meshex$ there is a cell path $\{ \cell_0, \cell_1, \ldots, \cell_n \}$ that satisfies: 1) two consecutive cells share a facet $F$ such that $F \cap \Omega \neq \emptyset$; 2) $\cell_n$ is an interior cell; 3) $n \leq \gamma_{\rm max}$, where $\gamma_{\rm max}$ is a fixed integer. Then, the maximum aggregate size is at most $(2\gamma_{\rm max}+1)\h_{\rm max}$.
  \end{lemma}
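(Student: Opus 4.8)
The plan is to read Algorithm~\ref{algo:aggr_scheme} as a breadth-first flood fill on the cell-adjacency graph $G$ whose vertices are the active cells $\meshex$ and whose edges join two cells sharing a facet $F$ with $F\cap\Omega\neq\emptyset$, seeded from the interior cells $\meshin$. For a cut cell $\cell_0$, let $\delta(\cell_0)$ be its graph distance in $G$ to the nearest interior cell; the hypothesis is precisely $\delta(\cell_0)\le\gamma_{\rm max}$ for every cut cell. Throughout I read ``size'' as the Euclidean diameter and take $h_{\rm max}$ to be an upper bound for the diameter of a single cell.

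First I would establish the combinatorial core: every cell of an aggregate $\aggr$ is joined to its root cell $\cell$ by a chain of at most $\gamma_{\rm max}+1$ cells lying entirely inside $\aggr$, consecutive cells sharing a facet meeting $\Omega$. Steps 2--3 of the algorithm touch, in round $k$, exactly the cells at graph distance $k$ from $\meshin$, so the procedure terminates after at most $\max_{\cell_0}\delta(\cell_0)\le\gamma_{\rm max}$ rounds. By induction on the round index $k$, a cell aggregated in round $k$ attaches to one of its already-touched neighbours, which lies in the same aggregate and was touched in round at most $k-1$, and so inherits a within-aggregate chain to the common root of length at most $k$; the base case $k=0$ is the root itself. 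I expect the main obstacle to be arguing cleanly that the tie-breaking rule (``aggregate to the touched cell belonging to the aggregate containing the closest interior cell'') does not spoil this: whichever aggregate the cell joins, the neighbour it attaches to was touched in an earlier round and belongs to that same aggregate, so the inductive chain-length bound survives regardless of how ties are resolved.

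Next I would convert this combinatorial bound into a geometric one via subadditivity of the diameter over a chain of pairwise-intersecting sets: if $S_0,\dots,S_m$ are cells with $S_i\cap S_{i+1}\neq\emptyset$ (they share a facet), then choosing $q_i\in S_i\cap S_{i+1}$ and applying the triangle inequality gives $\mathrm{diam}(\bigcup_i S_i)\le\sum_{i=0}^m\mathrm{diam}(S_i)\le (m+1)\,h_{\rm max}$.

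Finally, to bound $\mathrm{diam}(\aggr)$ I would take arbitrary points $x,y\in\aggr$ lying in cells $\cell_x,\cell_y$, route a chain from $\cell_x$ to the root and then from the root to $\cell_y$. Each leg has at most $\gamma_{\rm max}+1$ cells by the combinatorial step, and the two legs share the root, so the concatenated chain has at most $2\gamma_{\rm max}+1$ distinct cells. Applying the diameter estimate to this chain yields $|x-y|\le(2\gamma_{\rm max}+1)\,h_{\rm max}$, and taking the supremum over $x,y\in\aggr$ gives the claim. The only genuinely delicate point is the bookkeeping of rounds versus graph distance in the induction; the remaining geometry is routine.
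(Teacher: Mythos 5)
Your proposal is correct and is essentially the paper's own argument made rigorous: the paper's two-sentence proof (the aggregate grows by at most one layer of cells per iteration, in all directions around the root, and the algorithm terminates after at most $\gamma_{\rm max}$ iterations) is exactly what you formalize via BFS rounds, within-aggregate chains to the root, and subadditivity of the diameter along a chain of $2\gamma_{\rm max}+1$ cells. No gap; you have simply supplied the bookkeeping the paper leaves implicit.
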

\begin{proof}
By construction, an aggregate can grow at most at a rate of one layer of elements per each   iteration.  Thus, after $n$ iterations the aggregate size will be at most $(2n+1)h_{\rm max}$  considering that the aggregate can potentially grow in all spatial directions. 
It is obvious to see that the aggregation scheme finishes at most after $\gamma_{\rm  max}$ iterations. Thus, the aggregate size will be less or equal than $(2\gamma_{\rm max}+1) \h_{\rm max}$.
\end{proof}

From Lemma \ref{lemma:aggrsize}, it follows that the aggregate size will be bounded if so is the value of $\gamma_{\rm max}$. In what follows, we assume that $\gamma_{\rm max}$ is fixed, e.g., eliminating any cut cell that would violate property 3) in Lemma \ref{lemma:aggrsize}. 
 One shall assume that each cut cell shares at least one corner with an interior cell (this is usually true if the grid is fine enough to capture the geometry). In this situation, we can easily see that $\gamma_{\rm max}=2$ for 2D and $\gamma_{\rm max}=3$ for 3D. Then, by Lemma \ref{lemma:aggrsize}, the aggregate size is at most $5h_{\rm max}$ in 2D and $7h_{\rm max}$ in 3D.  Even though it is not used in the proof of Lemma \ref{lemma:aggrsize}, the fact that we aggregate cut cells to the touched cells belonging to the aggregate containing the closest interior cell (see step 2 in Algorithm \ref{algo:aggr_scheme}) contributes to further reduce the aggregate size. Indeed, the actual size of the aggregates generated in the numerical examples (cf. Section \ref{sec:num_exp}) is much lower than the predicted by these theoretical bounds. In 2D, the aggregate size tends to $2h_{\rm max}$ as the mesh is refined, whereas it tends to $3h_{\rm max}$ in the 3D case. This shows that the aggregation scheme produces relative small aggregates in the numerical experiments. 

\section{Aggregated unfitted Lagrangian finite element spaces} \label{sec:agg_unf}


{\FV 

Our goal is to define a \ac{fe} space using the cell aggregates introduced above. To this end, we need to introduce some notation. In the case of n-simplex meshes, we define the local \ac{fe} space $V(\cell) \doteq \mathcal{P}_q(\cell)$, i.e., the space of polynomials of order less or equal to $q$ in the variables $x_1,\ldots,x_d$. For n-cube meshes, $V(\cell) \doteq \mathcal{Q}_q(\cell)$, i.e.,  the space of polynomials that are of degree less or equal to $k$ with respect to each variable $x_1, \ldots, x_d$.  In this work, we consider that the polynomial order $q$ is the same for all the cells in the mesh. We restrict ourselves to Lagrangian \ac{fe} methods. Thus, the basis for $V(\cell)$ is the Lagrangian basis (of order $q$) on $\cell$. 
We denote by $\nodes{\cell}$ the set of Lagrangian nodes of order $q$ of cell $\cell$. 
There is a one-to-one mapping between nodes $a \in \nodes{\cell}$  and shape functions $\shpf{a}(\x)$; 
it holds $\shpf{a}(\x^b) = \delta_{ab}$, where $\x^b$ are the space coordinates of node $b$.
%
}
%
%
%
We assume that there is a local-to-global \ac{dof} map such that the resulting global system is $\mathcal{C}^0$ continuous. This process can be elaborated for $hp$-adaptivity as well, but it is not the purpose of this work. 

{\FV

With this notation, we can introduce the \emph{active} \ac{fe} space associated with the active portion of the background mesh
$$
\fespex \doteq \{ v \in {\mathcal{C}^0}(\Domex) \, : \, v|_K \in
V(K), \, \hbox{for any} \, K \in \meshex \}.
$$
We could analogously define the \emph{interior} \ac{fe} space 
$$
\fespin \doteq \{ v \in {\mathcal{C}^0}(\Domin) \, : \, v|_K \in
V(K), \, \hbox{for any} \, K \in \meshin \}.
$$
The active \ac{fe} space $\fespex$ (see Fig. \ref{fig:def-spaces-c}) is the functional space typically used in unfitted \ac{fe} methods (see, e.g., \cite{badia_robust_2017,de_prenter_condition_2017,schillinger_finite_2014}).
 It is well known that $\fespex$ leads to arbitrary ill conditioned systems when integrating the \ac{fe} weak form on the physical domain $\Dom$ only (if no extra technique is used to remedy it).
It is obvious that the interior \ac{fe} space $\fespin$ (see Fig. \ref{fig:def-spaces-a}) is not affected by this problem, but it is not usable since it is not defined on the complete physical domain $\Dom$. Instead, we propose an {\FV alternative} space $\fespag$ that is defined on $\Domap$ but does not present the problems related to $\fespex$. We can define the set of nodes of $\fespin$ and $\fespex$ as $\nodesgin$ and $\nodesgex$, respectively (see Fig. \ref{fig:def-spaces}). We define the set of \emph{outer} nodes as $\nodesgou \doteq \nodesgex \setminus \nodesgin$ (marked with red crosses in Fig. \ref{fig:def-spaces-b}). The outer nodes are the ones that can lead to conditioning problems due to the small cut cell problem (see \Eq{finite-cell-estimate}). {\FV The space  $\fespag$} is defined taking as starting point $\fespex$, and adding judiciously defined constraints for the nodes in $\nodesgou$.

\begin{figure}[ht!]
  \centering
  \begin{subfigure}{0.24\textwidth}
    \centering
    \includegraphics[width=0.9\textwidth]{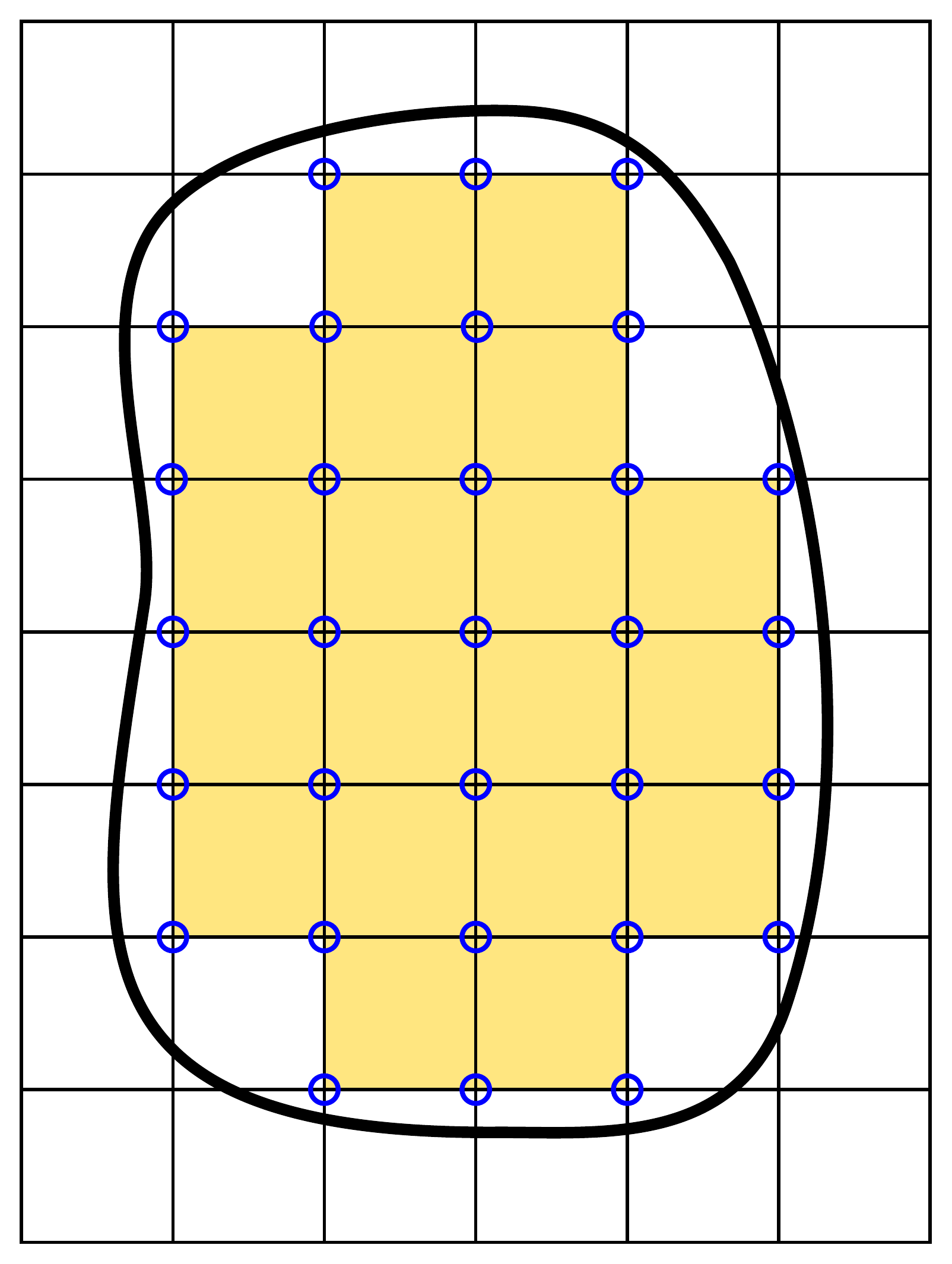}
    \caption{$\fespin$}
    \label{fig:def-spaces-a}
  \end{subfigure}
  \begin{subfigure}{0.24\textwidth}
    \centering
    \includegraphics[width=0.9\textwidth]{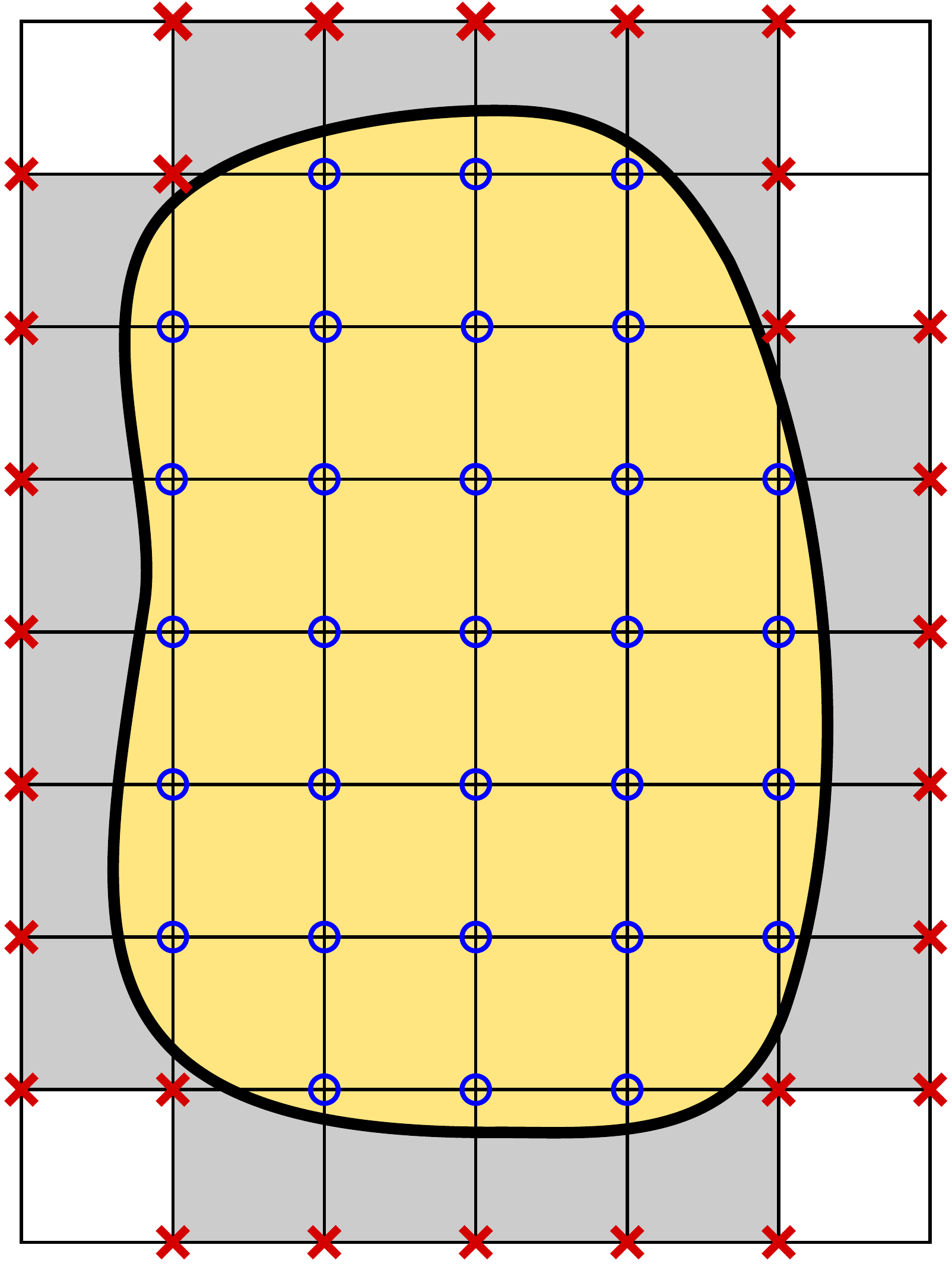}
    \caption{$\fespag$}
    \label{fig:def-spaces-b}
  \end{subfigure}
  \begin{subfigure}{0.24\textwidth}
    \centering
    \includegraphics[width=0.9\textwidth]{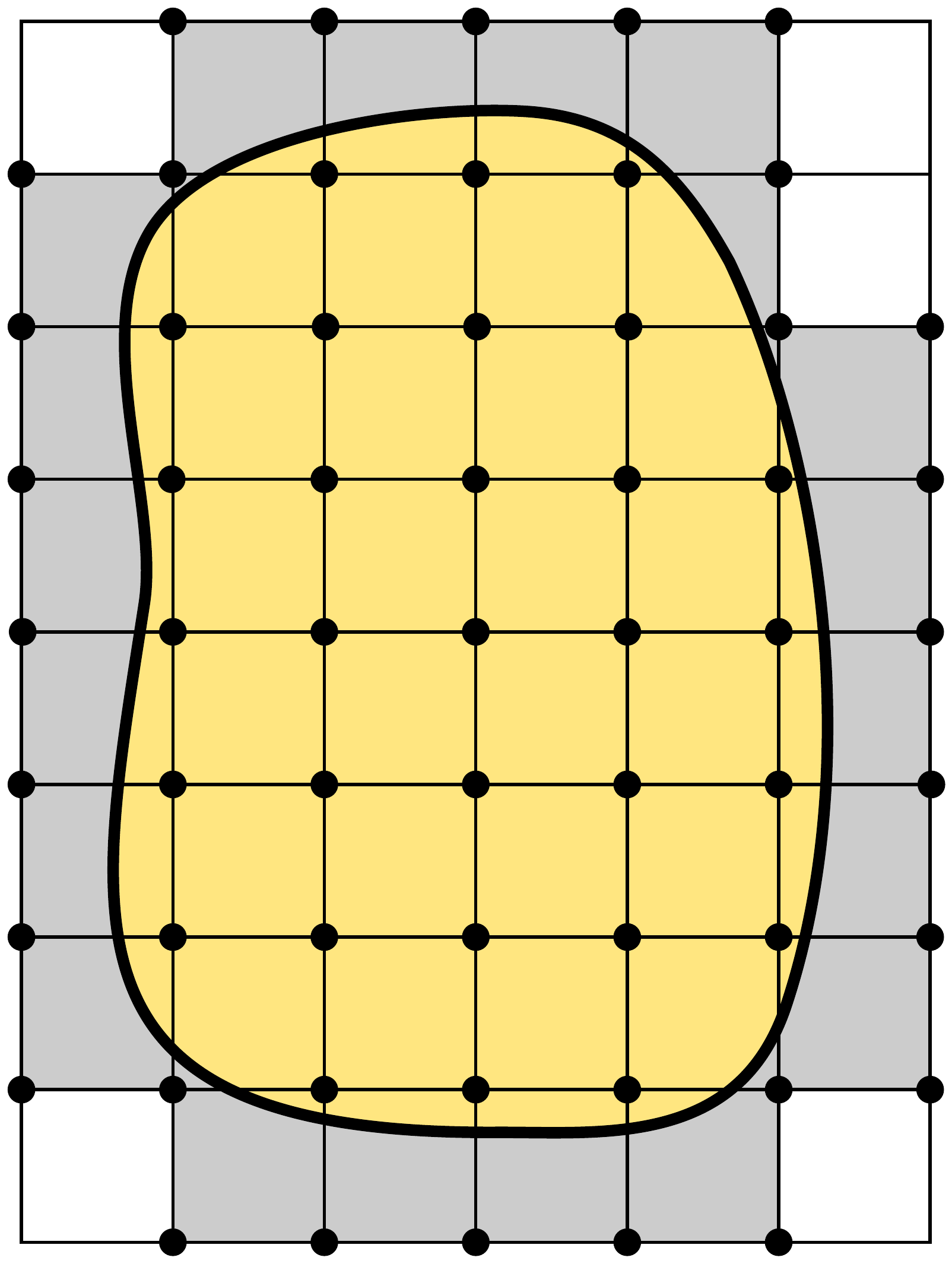}
    \caption{$\fespex$}
    \label{fig:def-spaces-c}
  \end{subfigure}
  \begin{subfigure}{0.2\textwidth}
    \begin{tabular}{l}
      {\color{blue} $\circ$} nodes in $\nodesgin$\quad\quad
      \\      
      {\color{black} $\bullet$} nodes in $\nodesgex$
      \\      
      {\color{red} $\times$} nodes in $\nodesgou$\quad\quad
    \end{tabular}
  \end{subfigure}
  \caption{Finite Element spaces.}
  \label{fig:def-spaces}
\end{figure}


{\FV In order to define $\fespag$ we observe that, in nodal Lagrangian \ac{fe} spaces, there is a one-to-one map between \acp{dof} and nodes (points) {\FV of the \ac{fe} mesh} (for vector spaces, the same is true for every component of the vector field). On the other hand, we can define the owner {\ac{vef}} of a node as the lowest-dimensional {\ac{vef}} that contains the node. Furthermore, we can construct a map that for every {\ac{vef}} $F$ such that $F \not\subset \Dom$,  gives a cell owner among all the cells that contain it. This map can be arbitrarily built. E.g., we can consider as cell owner the one in the smallest aggregate. As a result, we have a map between \acp{dof} and (active) cells. Every active cell belongs to an aggregate, which has its own root (interior) cell.  So, we  also have a map between \acp{dof} and interior cells. This map between $b\in \nodesgou$ and the corresponding interior cell is represented with $\cell(b)$ (see Fig. \ref{fig:def-outnodmap}).

\begin{figure}[ht!]
  \centering
  \begin{subfigure}{0.28\textwidth}
    \centering
    \includegraphics[width=0.9\textwidth]{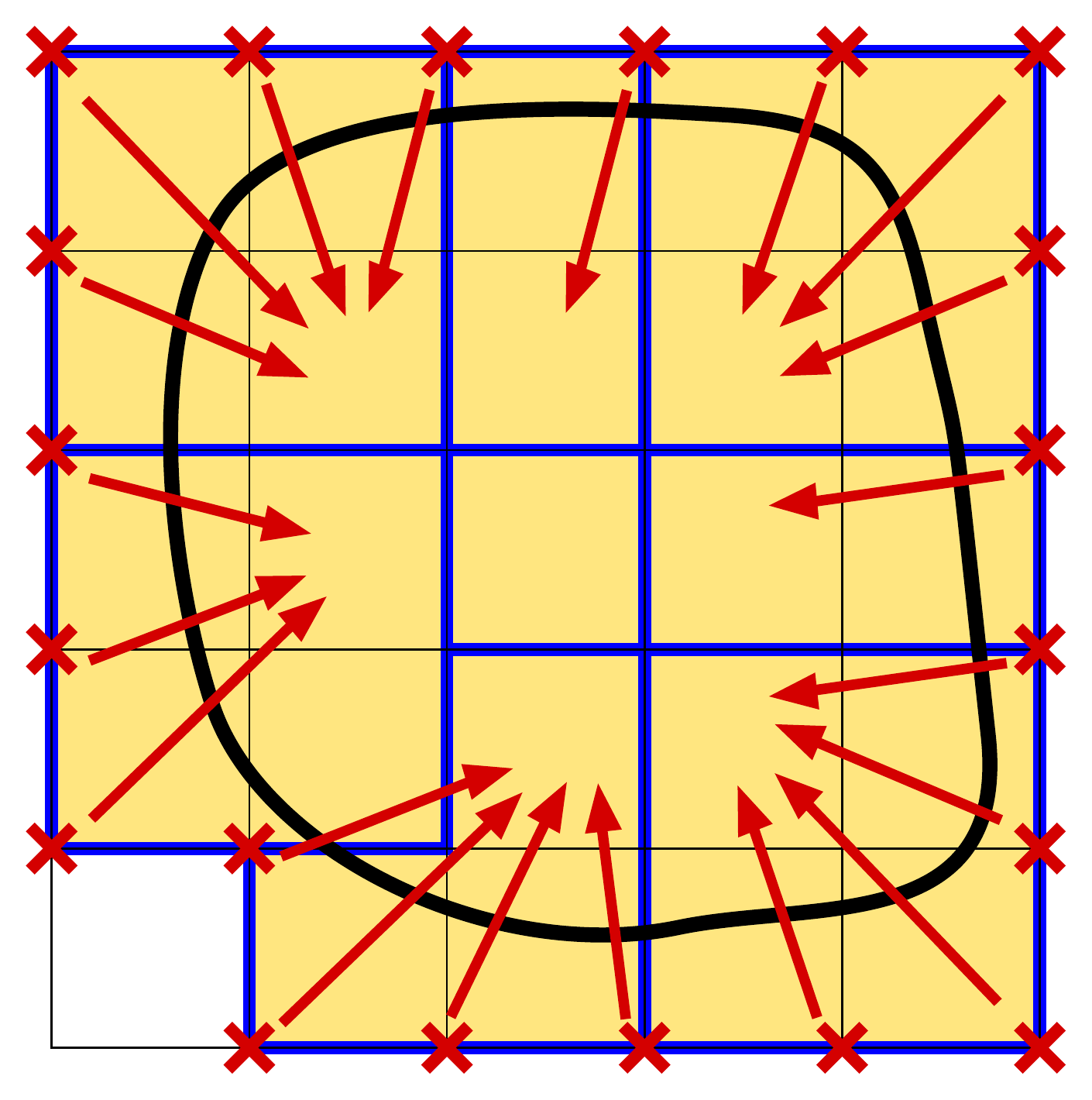}
  \end{subfigure}
  \begin{subfigure}{0.2\textwidth}
    \begin{tabular}{cl}
          \tikz{\draw[color=blue,fill=myellow, line width=2pt]  (0,0) rectangle (1.4em,1.4em);} & aggregate
          \\
      {\color{red} $\times$} & node in $\nodesgou$
      \\
      \includegraphics[width=2em]{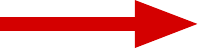} & node to cell map
    \end{tabular}
  \end{subfigure}
  \caption{Map from outer nodes to interior cells.}
  \label{fig:def-outnodmap}
\end{figure}

The space of global shape functions of $\fespin$ and $\fespex$ can be represented as $\{ \shpf{b} \, : \, b \in \nodesgin\}$ and $\{ \shpf{b} \, : \, b \in \nodesgex\}$, respectively. Functions in these \ac{fe} spaces are uniquely represented by their nodal values. We represent the nodal values of $\uh \in \fespin$ as $\uvin \in \mathbb{R}^{|\nodesgin|}$, whereas the nodal values of $\uh \in \fespex$ as $\uvex \in \mathbb{R}^{|\nodesgex|}$. Considering, without loss of generality, that the interior nodal values are labeled the same way for both \ac{fe} spaces, we have that $\uvex = [ \uvin, \uvou ]^T$, where $\uvou \in \mathbb{R}^{|\nodesgou|}$.

 \def\range{{\rm range}}

Now, we consider the following extension operator. Given $\uh \in \fespin$ and the corresponding nodal values $\uvin$, we compute the outer nodal values as follows:
\begin{equation}
\uvou_b = \sum_{a \in \nodes{\cell(b)}} \shpf{a}(\x_b) \uvin_a, \quad  \hbox{for} \ b \in \nodesgou.
\label{eq:def-constraints}
\end{equation}
{\FV That is, the value at an outer node $b\in\nodesgou$ is computed by extrapolating the nodal values of the  interior cell $K(b)$ associated with it.}
In compact form, we can write it as $\uvou = \cmat \uvin$, where $\cmat$ is the global matrix of constraints. We define the global extension matrix $\next: \mathbb{R}^{|\fespin|} \rightarrow \mathbb{R}^{|\fespex|}$ as $\next \uvin = [ \uvin, \cmat \uvin]^T$.  Let us also define the extension operator  $\mathcal{E}: \fespin \rightarrow \fespex$, such that, given $\uh \in \fespin$ represented by its nodal values $\uvin$, provides the \ac{fe} {\FV function} $\ext{\uh} \in \fespex$ with nodal values $\next{\uvin}$. We define the range of this operator as $\fespag \doteq \range(\ext{\fespin}) \subset \fespex$. This \ac{fe} space is called the \emph{aggregated} \ac{fe} space {\FV since the map $K(\cdot)$ between outer nodes and interior cells is defined using the aggregates in $\meshag$}. The motivation behind the construction of such space is to have a \ac{fe} space covering $\Domex$ (and thus $\Dom$) with optimal approximability properties and without the ill-conditioning problems of $\fespex$. 

As one can observe, the new space is defined only by interior nodal values, whereas the conflictive {\FV outer} nodes are \emph{eliminated} via the  constraints in \eqref{eq:def-constraints}. These constraints are cell-wise local. Thus, they can be readily applied at the assembly level in the cell loop, making its implementation very simple, even for non-adaptive codes that cannot deal with non-conforming meshes. 
%
We consider as basis for $\fespag$ the extension of the shape functions of $\fespin$, i.e., $\{ \ext{ \shpf{ a } }\}_{ a \in \nodesgin }$. The fact that it is a basis for $\fespag$ is straightforward, due to the fact that the extension operator is linear. The extension of a shape function is easily computed as follows:
$$
\ext{ \shpf{ a } } = \shpf{a} + \sum_{b \in \mathcal{C}(a)} \cmat_{ba} \shpf{b}, \quad \hbox{for } \, a \in \nodesgin,
$$
where $\mathcal{C}(a)$ represents the set of outer nodes in $\nodesgou$ that are constrained by $a$.
\begin{remark}
We note that one could consider an alternative aggregated space,
$$
\fespagnew = \{ v \in {\mathcal{C}^0}(\Omega) \, : \, v|_A \in
V(A), \, \hbox{for any} \, A \in \meshag \},
$$
where $V(A)$ denotes the space of $q$ order Lagrangian polynomials on n-simplices or n-cubes. It is obvious to check that in fact $\fespagnew \subset \fespag$, but it is not possible to implement the inter-element continuity for this space using standard \ac{fe} techniques. On the other hand, the \ac{fe} space $\fespag$  has the same size as the interior problem and the implementation in existing \ac{fe} codes requires minimal modifications. Furthermore, it is also easy to check that the two approaches coincide for \ac{dg} formulations, where all \acp{dof} belong to the cells. In fact, a \ac{dg} method with $\fespagnew$ has been proposed in \cite{kummer_extended_2013}.
\end{remark}







\section{Approximation of elliptic problems} \label{sec_app_ell}

For the sake of simplicity, we consider the Poisson equation with constant physical diffusion as a model problem, even though the proposed ideas apply to any elliptic problem with $H^1$-stability, e.g., the linear elasticity problem and heterogeneous problems. The Poisson equation with Dirichlet and Neumann boundary conditions reads as (after scaling with the diffusion term): find $u \in H^1(\Omega)$ such that 
\begin{equation}
-\Delta u = f  \quad \text{in } \ \Omega, \qquad 
u=g^\mathrm{D}  \quad\text{on } \ \Gamma_\mathrm{D}, \qquad
\grad u \cdot \normal = g^\mathrm{N}  \quad \text{on } \ \Gamma_\mathrm{N},
\label{eq:PoissonEq}
\end{equation}
where $(\Gamma_\mathrm{D},\Gamma_\mathrm{N})$ is a partition of the domain boundary (the Dirichlet and Neumann boundaries, respectively), $f\in H^{-1}(\Omega)$, $g^\mathrm{D}\in H^{1/2}(\Gamma_\mathrm{D})$, and $g^\mathrm{N}\in H^{-1/2}(\Gamma_\mathrm{N})$. 

{\FV
For the space discretization, we consider $H^1$-conforming \ac{fe} spaces
on the conforming mesh $\meshact$ that are not necessary aligned with the the physical boundary $\partial\Dom$. For simplicity, we assume that, for any cut cell $\cell \in \meshex$, either $\cell \cap \Gamma \subset \Gamma_{\rm D}$ or $\cell \cap \Gamma \subset \Gamma_{\rm N}$. We consider both the usual \ac{fe} space  $\fespex$ as well as the new aggregated space $\fespag$ in order to compare their properties. We will simply use $\fespst$ when it is not necessary to distinguish between   $\fespex$ and $\fespag$.
}

 For unfitted grids, it is not clear to
include Dirichlet conditions in the approximation space in a strong
manner. Thus, we consider Nitsche's method
\cite{becker_mesh_2002,nitsche_uber_1971} to impose Dirichlet boundary
conditions weakly on $\Gamma_D$. It provides a consistent numerical
scheme with optimal converge rates (also for high-order elements) that
is commonly used in the embedded boundary community
\cite{schillinger_finite_2014}. We define the \ac{fe}-wise operators:
\begin{align}
  & \A_\cell(u,v) \doteq  \int_{\cell \cap \Omega} \gradient u \cdot \gradient v \mathrm{\ d}V + \int_{\Gamma_\mathrm{D} \cap \cell}\left( \tau_\cell u v  - v \left(\normal\cdot \gradient u\right) -  u \left(\normal\cdot \gradient v\right) \right)  \mathrm{\ d}{S},   \\
& \ell_\cell(v) \doteq {\int_{\Gamma_\mathrm{D} \cap \cell}\left( \tau_\cell  vg^\mathrm{D}  -  \left(\normal\cdot \gradient v\right)g^\mathrm{D} \right)  \mathrm{\ d}S}, & 
\end{align}
    {defined for a generic cell $\cell\in\meshact$}. Vector
    $\normal$ denotes the outwards normal to $\partial\Omega$. The
    bilinear form $\A_\cell(\cdot,\cdot)$ includes the usual form
    resulting from the integration by parts of \eqref{eq:PoissonEq} and
    the additional term associated with the weak imposition of
    Dirichlet boundary conditions with Nitsche's method. The
    right-hand side operator $\ell_\cell(\cdot)$ includes additional terms
    related to Nitsche's method. 
    The coefficient $\tau_\cell>0$ is a mesh-dependent parameter that has
    to be large enough to ensure the coercivity of
    $\A_\cell(\cdot,\cdot)$.

The global \ac{fe} operator $\Ac: \Vc \rightarrow \Vc'$  and right-hand side term $\ell \in \Vc' $ are stated as the sum of the element contributions, i.e.,
\begin{align}\label{eq:bil-form}
\Ac(u,v) \doteq \sum_{\cell \in \meshact} \A_\cell(u,v),  \quad
\ell{ (v)} \doteq \sum_{\cell \in \meshact} \ell_\cell{ (v)},
\qquad  \hbox{for } u, v \in \Vc. 
\end{align}

{\FV
We will make abuse of notation, using
    the same symbol for a bilinear form, e.g., $\A
    : \Vc \rightarrow \Vc'$, and its corresponding linear operator, i.e., $\langle \A u , v \rangle \doteq \A( u
    , v)$. 
    Furthermore, we define $b: \Vc' \rightarrow \Vc$ as $b(v) \doteq f(v) + g^{\rm N}(v) + \ell(v)$, for $v \in \Vc$. With this, the global problem can be stated as: find $\uh \in \fespag$ such that
    \begin{equation} \label{eq:nummet} \Ac (\uh, \vh)  = b(\vh), \qquad \hbox{for any} \, \vh \in \fespag. \end{equation}
    By definition, this problem can analogously be stated as: find $\uh  \in \fespin$ such that $\Ac ( \ext{\uh}, \ext{\vh})  = b(\ext{\vh})$ for any $\vh \in \fespag$. 
    After the definition of the \ac{fe} basis (of shape functions) that spans $\fespag$, or alternatively the extension operator $\ext{\cdot}$,  the previous problem leads to a linear system to be solved.

A sufficient (even though not necessary) condition for $\Ac$ to be coercive is to enforce the element-wise constant coefficient $\tau_\cell$ to satisfy
\begin{equation}
\tau_\cell \geq C_\cell \doteq \sup_{v\in V(\cell)} \dfrac{ \mathcal{B}_\cell (v,v)  }{ \mathcal{D}_\cell(v,v) },
\label{eq:coer-cond}
\end{equation}
for all the mesh elements $\cell\in\meshex$ intersecting the boundary $\Gamma_\mathrm{D}$. In the previous formula,  $\mathcal{D}_\cell(\cdot,\cdot)$ and  $\mathcal{B}_\cell(\cdot,\cdot)$ are the forms defined as
\begin{equation}
\mathcal{D}_\cell(u,v)\doteq \int_{\cell \cap \Omega} \gradient u \cdot \gradient v \mathrm{\ d}V,\quad \text{ and }\quad
\mathcal{B}_\cell (u,v) \doteq  \int_{\Gamma_\mathrm{D} \cap \cell} \left(\normal\cdot \gradient u\right) \left(\normal\cdot \gradient v\right) \mathrm{\ d}S.
\end{equation}

Since $\Vc$ is finite dimensional, and $\mathcal{D}_\cell(\cdot,\cdot)$ and  $\mathcal{B}_\cell(\cdot,\cdot)$ are symmetric and bilinear forms, the value  $C_\cell$ (i.e., the minimum admissible coefficient $\tau_\cell$)
can be computed numerically as $C_\cell\doteq\tilde\lambda_\mathrm{max}$, being $\tilde\lambda_\mathrm{max}$  the largest eigenvalue of the generalized eigenvalue problem  (see \cite{de_prenter_condition_2017} for details): find $u_\cell\in \fespst|_\cell$ and $\tilde\lambda\in\mathbb{R}$ such that
\begin{equation}
\mathcal{B}_\cell(u_\cell,v_\cell) = \tilde\lambda \mathcal{D}_\cell(u_\cell,v_\cell) \text{ for all } v_\cell\in \fespst|_\cell.
\label{eq:local-eigenvals}
\end{equation}  

{\FV
For standard \acp{fe} for body-fitted meshes, it is enough to compute coefficient $\tau_\cell$ as $\tau_\cell=\beta/h$ to satisfy condition \eqref{eq:coer-cond}, where $\beta$ is a sufficiently large (mesh independent) positive constant (see, e.g., \cite{arnold_unified_2002}).  However, for standard unfitted \ac{fe} methods using the usual space $\fespex$ without any additional stabilization, coefficient $\tau_\cell$ cannot be computed \emph{a priori}; in fact, the minimum cell-wise value that assures coercivity is not bounded above. In this case, a value for $\tau_\cell$ ensuring coercivity has to be computed for each particular setup using 
the cell-wise eigenvalue problem~\eqref{eq:local-eigenvals}. The introduction of the new space $\fespag$ solves this problem and  $\tau_\cell$ is bounded again in terms of the element size as expected in the body-fitted case (see Section \ref{sec:coe_nit} for more details). In this case, we have taken $\tau_\cell=100/\hc$ in the numerical experiments below.

}  


The linear system matrix that arises from \Eq{nummet} can be defined as 
\begin{equation}
\stifmatrix_{ab} \doteq \Ac( \ext{\shpf{a}} , \ext{\shpf{b}} ), \qquad \hbox{for } \, a, \, b \in \nodesgin.
\label{eq:stifmatdef} 
\end{equation}
The mass matrix related to the aggregated \ac{fe} space $\fespag$ is analogously defined as 
\begin{equation}
\massmatrix_{ab} \doteq \int_{\Omega} \ext{\shpf{a}} \ext{\shpf{b}}, \qquad \hbox{for } \, a, \, b \in \nodesgin.
\label{eq:massmatdef} 
\end{equation}
It is well known that the usual \ac{fe} space $\fespex$ is associated with conditioning problems due to cut cells. The condition number of the discrete system without the aggregation, i.e., considering $\fespex$ instead of $\fespag$ in \Eq{nummet}, scales as  
\begin{equation}
\condnum{\stifmatrix} \sim  \min_{\cell \in \meshex} \eta_\cell^{-(2q+1-2/d)},
\label{eq:finite-cell-estimate}
\end{equation}
where $\condnum{\stifmatrix}$ is the 2-norm condition number of $\stifmatrix$ (see \cite{de_prenter_condition_2017}  for details). Thus, arbitrarily high condition numbers are expected in practice since the position of the interface  cannot be controlled and the value  $\eta_\cell$ can be arbitrarily close to zero. This problem is solved if the new aggregated space $\fespag$ is used instead of $\fespex$ (cf. Corollary~\ref{cr:cn}). 
}

\section{Numerical analysis} \label{sec:sta_ana}

In this section, we analyze the well-posedness of the agregated unfitted \ac{fe} method \Eq{nummet}, the condition number of the arising linear system, and \emph{a priori} error estimates. As commented above, we assume that the background mesh is \emph{quasi-uniform}. Therefore, the number of neighboring cells of a given cell is bounded above by a constant $n_{\rm cell}$ independently of $\h$. In a mesh refinement analysis, we also assume that the coarser mesh level-set function already represents the domain boundary.

In the following analysis, \emph{all constants being used are independent of $\h$ and the location of the cuts in cells}, i.e., $\eta_\cell$. They may also depend on the threshold $\eta_0$ in the aggregation algorithm if considered; we have considered $\eta_0 = 1$ for simplicity. The constants can depend on the shape/size of $\Omega$ and $\Gamma_{\rm D}$, the order of the \ac{fe} space, and the maximum aggregation distance $\gamma_{\rm max}$, which are assumed to be fixed in this work. In turn, due to Lemma \ref{lemma:aggrsize}, the maximum size of an aggregate is bounded by a constant times $\h$. As a result, the following results are robust with respect to the so-called \emph{small cut cell problem}. When we have that $A \leq c B$ for a positive constant $c$, we may use the notation $A \lesssim B$; analogously for $\gtrsim$.

For the analysis below, we need to introduce some extra notation. Given a function $\uh \in \fespin$ (or $\fespex$), the nodal vector $\uv$ will be used without any superscript, as soon as it is clear from the context. For a given cell $\cell$, the cell-wise coordinate vector is represented with $\ucv$. On the other hand, given a \ac{fe} function $\uh\in\fespin$, for every interior cell $\cell \in \meshin$,
let us define define the cell-wise extension operator $\next_\aggr{\uv} = [\ucv, \cmatcref \ucv]$, 
where  $  \cmatcref $ is the cell-wise constraint matrix, whose entries can be computed in the reference space (see \Eq{def-constraints}), such that $\cmat \uv \cdot \cmat \uv = \sum_{\cell \in \meshin} \cmatcref \ucv \cdot \cmatcref \ucv $.  We denote with $\| \cdot \|_2$ the Euclidean norm of a vector and the induced matrix norm. Standard notation is used to define Sobolev spaces (see, e.g., \cite{agmon1965lectures}). Given a Sobolev space $X$, its corresponding norm is represented with $\| \cdot \|_X$.


\subsection{Stability of the coordinate vector extension matrix}\label{sta_ext}

We start the analysis of the scheme by proving bounds for the norm of the global and cell-wise coordinate vector extension matrix.  
Therefore, their norms can be bounded independently of the cut location and the size of the aggregate.
\begin{lemma}\label{lm:extmatstab}
The cell-wise and global coordinate vector extension matrices hold the following bounds:
$$ 1 \leq  \| \next_\aggr \|_2^2 \leq 1 + \| \cmatcref \|_2^2, \quad \hbox{for every } \, \aggr \in \meshag, \quad \hbox{and} \quad 
1 \leq \| \next \|_2^2 \leq 1 + \| \cmat \|_2^2 \leq \cev,
$$
for a positive constant $\cev$. 
\end{lemma}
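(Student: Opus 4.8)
The plan is to split the statement into two parts: the algebraic identities that follow purely from the block structure of the extension matrices, and the uniform estimate $1+\|\cmat\|_2^2\le\cev$, which is the only place where the geometry of the aggregates enters. The first part is essentially free, so I would dispose of it immediately. By construction $\next\uv=[\uv,\cmat\uv]^T$, and the two blocks occupy disjoint coordinates of $\mathbb{R}^{|\nodesgex|}$, so for every nodal vector $\uv$ one has $\|\next\uv\|_2^2=\|\uv\|_2^2+\|\cmat\uv\|_2^2$. Dividing by $\|\uv\|_2^2$ and taking the supremum gives $\|\next\|_2^2=1+\|\cmat\|_2^2$, which yields at once the lower bound $\|\next\|_2^2\ge 1$ (by discarding the second block) and the upper bound $\|\next\|_2^2\le 1+\|\cmat\|_2^2$ (in fact an equality). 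The cell-wise chain is identical, with $\next_\aggr\ucv=[\ucv,\cmatcref\ucv]$ in place of $\next\uv$, giving $\|\next_\aggr\|_2^2=1+\|\cmatcref\|_2^2\ge 1$ for every $\aggr\in\meshag$.

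Next I would reduce the remaining inequality $1+\|\cmat\|_2^2\le\cev$ to a single uniform estimate on the cell-wise blocks. Here I use the additive decomposition recorded before the lemma, namely $\|\cmat\uv\|_2^2=\sum_{\cell\in\meshin}\|\cmatcref\ucv\|_2^2$, which holds precisely because each outer node $b\in\nodesgou$ is constrained by one and only one interior (root) cell $\cell(b)$. Bounding each summand by $\|\cmatcref\ucv\|_2^2\le\|\cmatcref\|_2^2\,\|\ucv\|_2^2\le(\max_{\cell}\|\cmatcref\|_2^2)\,\|\ucv\|_2^2$, and then invoking the bounded-overlap property of a quasi-uniform mesh, $\sum_{\cell\in\meshin}\|\ucv\|_2^2\le n_{\rm cell}\,\|\uv\|_2^2$ (each interior node belongs to at most $n_{\rm cell}$ cells), I obtain $\|\cmat\|_2^2\le n_{\rm cell}\max_{\cell}\|\cmatcref\|_2^2$. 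Thus everything comes down to bounding $\|\cmatcref\|_2$ by a constant that is independent of $\h$ and of the cut locations $\eta_\cell$.

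This last point is the main obstacle and the only genuinely geometric step. The entries of $\cmatcref$ are the reference shape-function values $\shpf{a}(\hat\x_b)$, where $\hat\x_b$ is the image in the reference cell of an outer node $b\in\nodesgou$ constrained by the root cell $\cell$. I would argue that all relevant $\hat\x_b$ lie in a fixed bounded box of reference space: by Lemma~\ref{lemma:aggrsize} the aggregate containing $b$ has diameter at most $(2\gamma_{\rm max}+1)\h_{\rm max}$, and quasi-uniformity converts this metric bound into a bound on the reference coordinates of $\hat\x_b$ depending only on $\gamma_{\rm max}$ and the shape-regularity constant (it also bounds the number of outer nodes a single cell may constrain). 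Since the $\{\shpf{a}\}$ form a fixed finite family of degree-$q$ polynomials, they are bounded on any compact set, so evaluating them at finitely many points in a fixed box produces entries bounded by a constant $C$ depending only on $q$, $d$ and $\gamma_{\rm max}$. Hence $\max_\cell\|\cmatcref\|_2\le C$, uniformly in $\h$ and $\eta_\cell$, and the claim follows with $\cev\doteq 1+n_{\rm cell}C^2$. I expect the care to lie entirely in making the reference-box argument rigorous; the remaining manipulations are routine.
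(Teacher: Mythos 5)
Your proof is correct and follows essentially the same route as the paper's: the same block identity $\| \next{\uv} \|_2^2 = \| \uv \|_2^2 + \| \cmat \uv \|_2^2$ (and its cell-wise analogue), the same aggregate-wise decomposition of $\| \cmat \uv \|_2^2$ with the $n_{\rm cell}$ bounded-overlap bound, and the same reduction to a uniform bound on $\| \cmatcref \|_2$. The only difference is in how that last bound is justified: the paper notes that the entries of $\cmatcref$ are independent of the cut and of $\h$ and can be precomputed over the finitely many aggregate configurations permitted by $\gamma_{\rm max}$, whereas you bound the reference shape-function values on a fixed compact box via Lemma~\ref{lemma:aggrsize} and quasi-uniformity --- a slightly more explicit (and more general, e.g.\ for unstructured background meshes) justification of the same fact.
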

\begin{proof}
  Using the definition of the extension operator in Section \ref{sec:agg_unf}, we have that $\| {\next{\uv}} \|_2^2 =  \| \uv \|_2^2 + \| \cmat \uv \|_2^2$. 
We proceed analogously for the cell-wise result, to get $\| {\next_\aggr{\uv}} \|_2^2 =  \| \uv \|_2^2 + \| \cmatcref \uv \|_2^2$. It proves the first result. On the other hand, we have,
\begin{align}
  \| \cmat \uv \|_2^2  & =  \sum_{\aggr \in \meshag} \| \cmatcref \ucv \|_2^2 \leq  \sum_{\aggr \in \meshag} \| \cmatcref \|_2^2 \| \ucv \|_2^2 \leq  n_{\rm cell}  \sup_{\aggr \in \meshag} \| \cmatcref \|_2^2 \| \uv \|_2^2,
\end{align}
where we have used the fact that the constraint matrix is aggregate-wise and that the maximum number of cell neighbors of a vertex/edge/face is bounded above by a constant $n_{\rm cell}$. The value $\sup_{\aggr \in \meshag} \| \cmatcref \|_2^2$ (or an upper bound) can explicitly be computed prior to the numerical integration and its entries are independent of the aggregate cut and the geometrical mapping, i.e., $\h$. In fact, given a polynomial order and $\gamma_{\rm max}$, one can precompute the maximum value of $\|\cmatcref \|_2^2$ among all possible aggregate configurations and explicitly obtain an upper bound $\cev$ of the global extension matrix norm. It proves the lemma.
\end{proof}

\subsection{Mass matrix condition number} \label{mas_mat}
In order to provide a bound for the condition number of the mass matrix, we rely on the maximum and minimum eigenvalues of the local mass matrix in the reference cell $\cellref$:
\begin{align}\label{eq:eigref_mass}
  \eigmin \| \ucv \|_2^2 \leq \| \uh \|_{L^2(\cellref)}^2 \leq \eigmax \| \ucv \|_2^2, \qquad \hbox{for } \, \uh \in V(\cell).
\end{align}
The values of $\eigmin$ and $\eigmax$ only depend on the order of the \ac{fe} space and can be computed for different orders on n-cubes or n-simplices (see \cite{elman_finite_2005}). Using typical scaling arguments, one has the following bound for the local mass matrix of the physical cell:
\begin{align}\label{eq:eig_mass}
\eigmin \hc^d \| \ucv \|_2^2 \leq  \|  \uh \|^2_{L^2(\cell)} \leq \eigmax \hc^d \| \ucv \|_2^2.
\end{align}
In the next lemma, we prove the equivalence between the $L^2(\Domex)$ norm and the interior \ac{dof} Euclidean norm, for functions in $\fespag$.
\begin{lemma}\label{lm:l2nodal}
The following bounds hold:
\begin{align}\label{eq:l2stab1}
h^d  \| \uv \|_2^2 \lesssim  \| \ext{\uh} \|^2_{L^2(\Domex)} \lesssim h^d  \| \uv \|_2^2, \quad \hbox{for any } \, \uh \in \fespin.
\end{align}
\end{lemma}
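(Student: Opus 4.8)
The plan is to establish the norm equivalence \eqref{eq:l2stab1} by relating the $L^2(\Domex)$ norm of the extended function $\ext{\uh}$ to the Euclidean norm of the interior nodal vector $\uv$, using the cell-wise mass matrix eigenvalue bounds \eqref{eq:eig_mass} together with the stability of the extension matrix from Lemma \ref{lm:extmatstab}. The key observation is that $\ext{\uh} \in \fespex$ lives on the active mesh $\meshact$, so its $L^2$ norm splits as a sum over active cells, $\| \ext{\uh} \|_{L^2(\Domex)}^2 = \sum_{\cell \in \meshact} \| \ext{\uh} \|_{L^2(\cell)}^2$, and on each cell I can apply \eqref{eq:eig_mass} to bound $\| \ext{\uh} \|_{L^2(\cell)}^2$ above and below by $\hc^d$ times the squared Euclidean norm of the \emph{active} cell-wise coordinate vector (the full nodal vector of $\ext{\uh}$ restricted to $\cell$, including outer nodes). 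Since the mesh is quasi-uniform, $\hc^d \sim h^d$ uniformly, so the factors $\hc^d$ can all be replaced by $h^d$ up to constants absorbed into $\lesssim$.

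First I would sum the lower and upper bounds of \eqref{eq:eig_mass} over all active cells. Summing the cell-wise squared Euclidean norms of the coordinate vectors of $\ext{\uh}$ does not directly give $\| \uvex \|_2^2$, because each node is counted once per containing cell; however, by quasi-uniformity the number of cells sharing any node is bounded by $n_{\rm cell}$, so $\sum_{\cell \in \meshact} \| (\uvex)_\cell \|_2^2 \sim \| \uvex \|_2^2$ up to this constant. This reduces the problem to comparing $\| \uvex \|_2^2$ with $\| \uvin \|_2^2 = \| \uv \|_2^2$. Here I invoke Lemma \ref{lm:extmatstab}: since $\uvex = \next \uv$, we have $\| \uvex \|_2^2 = \| \next \uv \|_2^2$, and the two-sided bound $1 \leq \| \next \|_2^2 \leq \cev$ gives $\| \uv \|_2^2 \leq \| \uvex \|_2^2 \leq \cev \| \uv \|_2^2$. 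Chaining these estimates yields both directions of \eqref{eq:l2stab1} with constants depending only on $\eigmin, \eigmax, n_{\rm cell}, \cev$, and the quasi-uniformity constant — all independent of $h$ and of $\eta_\cell$.

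The main subtlety I anticipate is the \emph{lower} bound. The upper bound $\| \ext{\uh} \|_{L^2(\Domex)}^2 \lesssim h^d \| \uv \|_2^2$ follows transparently from the upper eigenvalue estimate and $\| \next \|_2^2 \leq \cev$. For the lower bound, I need $h^d \| \uv \|_2^2 \lesssim \| \ext{\uh} \|_{L^2(\Domex)}^2$, and the cleanest route is to avoid the full active vector altogether: I restrict the sum over active cells to the \emph{interior} cells $\meshin$, where $\ext{\uh}$ coincides with $\uh \in \fespin$ and the cell-wise coordinate vectors are exactly the interior ones. Applying the lower bound in \eqref{eq:eig_mass} only over $\meshin$ and summing gives $\sum_{\cell \in \meshin} \eigmin \hc^d \| \ucv \|_2^2 \leq \| \ext{\uh} \|_{L^2(\Domin)}^2 \leq \| \ext{\uh} \|_{L^2(\Domex)}^2$, and since every interior node belongs to at least one interior cell, the left side is $\gtrsim h^d \| \uvin \|_2^2 = h^d \| \uv \|_2^2$. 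This sidesteps any need to control the outer-node contributions from below, which could in principle vanish on arbitrarily small cut volumes — exactly the degeneracy the aggregation is designed to circumvent. I would present the lower bound via this interior-cell restriction and the upper bound via the active-cell sum plus Lemma \ref{lm:extmatstab}.
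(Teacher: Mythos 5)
Your proof is correct and follows essentially the same route as the paper's: the lower bound is obtained exactly as in the paper by restricting to the interior cells (where $\ext{\uh}=\uh$) and applying the eigenvalue bound \eqref{eq:eig_mass} there, and the upper bound by summing \eqref{eq:eig_mass} over all active cells, controlling node multiplicity by $n_{\rm cell}$, and invoking $\|\next\|_2^2 \leq \cev$ from Lemma \ref{lm:extmatstab}. Your explicit remark that the outer-node contributions cannot be bounded from below (hence the interior-cell restriction) is precisely the point implicit in the paper's argument.
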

\begin{proof}   
  By definition, every function in $\fespag$ can be expressed as $\ext{\uh}$ for some $\uh \in \fespin$. Using \Eq{eig_mass}, the fact that $\Domin \subset \Dom$, and the quasi-uniformity of the background mesh, we obtain the lower bound in \Eq{l2stab1} as follows:
$$
\|  \ext{\uh} \|_{L^2(\Domex)}^2 \geq \|  \uh \|^2_{L^2(\Domin)} = \sum_{K \in \meshin} \|  \uh \|^2_{L^2(\cell)} \geq \sum_{K \in \meshin} \hc^d \eigmin \| \ucv \|_2^2 \gtrsim \h^d \|  \uv \|_2^2.
$$
On the other hand, using $\Dom \subset \Domex$, Lemma \ref{lm:extmatstab}, \Eq{eig_mass}, and the fact that the number of surrounding cells of a node is bounded above by a positive constant, we get:
\begin{align}
\|  \ext{\uh} \|_{L^2(\Domex)}^2 = \sum_{K \in \meshex}  \|  \ext{\uh} \|^2_{L^2(\cell)}   \leq \sum_{K \in \meshex}  \h^d_\cell \eigmax \| \ucv \|_2^2  \lesssim  \h^d  \| \next{\uv} \|_2^2   \lesssim  \h^d  \| \uv  \|_2^2.
\end{align}
It proves the lemma.
\end{proof}

The upper and lower bounds in \Eq{l2stab1} lead to the continuity of the extension operator and a bound for the condition number of the mass matrix of the aggregated \ac{fe} space.

\begin{corollary}[Continuity of the  extension operator]\label{lm:stab_ext_op}
  The extension operator satisfies the following bound:
  \begin{align}
    \| \ext{\uh} \|^2_{L^2(\Domex)} \lesssim  \| \uh \|^2_{L^2(\Domin)}, \qquad \hbox{for any } \, \uh \in \fespin.
    \end{align}
  \end{corollary}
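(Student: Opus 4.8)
The plan is to sandwich both sides of the claimed inequality between the interior nodal Euclidean norm $\h^d\|\uv\|_2^2$, using the norm equivalence already established in Lemma~\ref{lm:l2nodal}. The upper bound there gives directly $\|\ext{\uh}\|^2_{L^2(\Domex)} \lesssim \h^d\|\uv\|_2^2$, so the only thing left to supply is a lower bound controlling $\h^d\|\uv\|_2^2$ from above by $\|\uh\|^2_{L^2(\Domin)}$.

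For that lower bound I would restrict attention to the interior cells. Since $\uh\in\fespin$ is parametrized exactly by the interior nodal vector $\uv$, summing the local mass-matrix estimate \Eq{eig_mass} over $\cell\in\meshin$ yields $\|\uh\|^2_{L^2(\Domin)} = \sum_{\cell\in\meshin}\|\uh\|^2_{L^2(\cell)} \geq \eigmin \sum_{\cell\in\meshin}\hc^d\|\ucv\|_2^2$. Invoking quasi-uniformity ($\hc\simeq\h$) together with the fact that the interior cell-wise coordinate vectors $\ucv$ assemble to $\uv$, the right-hand side is $\gtrsim \h^d\|\uv\|_2^2$, and hence $\h^d\|\uv\|_2^2 \lesssim \|\uh\|^2_{L^2(\Domin)}$. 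Chaining this with the upper bound from Lemma~\ref{lm:l2nodal} gives $\|\ext{\uh}\|^2_{L^2(\Domex)}\lesssim \h^d\|\uv\|_2^2 \lesssim \|\uh\|^2_{L^2(\Domin)}$, which is the assertion.

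I do not expect a genuine obstacle here: this corollary is essentially the combination of the upper bound of Lemma~\ref{lm:l2nodal} with the intermediate lower estimate that is already produced inside the proof of that lemma (indeed, the chain $\|\ext{\uh}\|^2_{L^2(\Domex)} \geq \|\uh\|^2_{L^2(\Domin)} \gtrsim \h^d\|\uv\|_2^2$ appears there verbatim). The only points requiring care are to recognize that the lower estimate uses \emph{only} interior cells, so that it genuinely involves $\Domin$ rather than $\Domex$, and to note that the robustness of the result with respect to the cut location and the aggregate size is inherited from Lemma~\ref{lm:extmatstab}, which underlies the upper bound in Lemma~\ref{lm:l2nodal}.
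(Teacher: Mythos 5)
Your proof is correct and is essentially the paper's own argument: the paper derives this corollary directly from the two bounds of Lemma~\ref{lm:l2nodal}, whose proof already contains verbatim the chain $\| \uh \|^2_{L^2(\Domin)} \geq \sum_{K \in \meshin} \hc^d \eigmin \| \ucv \|_2^2 \gtrsim \h^d \| \uv \|_2^2$ that you reconstruct, followed by the same chaining with the upper bound. Your closing remarks on the role of interior cells and of Lemma~\ref{lm:extmatstab} are also consistent with the paper's reasoning.
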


  \begin{corollary}[Mass matrix condition number]
   The mass matrix $\massmatrix$ in \Eq{massmatdef}, related to the aggregated \ac{fe} space $\fespag$, is bounded  by $\condnum{\massmatrix} \leq C$, for a positive constant $C$.
\end{corollary}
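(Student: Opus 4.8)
The plan is to identify $\condnum{\massmatrix}$ with the ratio of the extreme eigenvalues of $\massmatrix$ and to control those eigenvalues through the norm equivalence already established in Lemma \ref{lm:l2nodal}. Since $\massmatrix$ is a Gram matrix it is symmetric and positive semidefinite, so its $2$-norm condition number is $\condnum{\massmatrix} = \lambda_{\max}(\massmatrix)/\lambda_{\min}(\massmatrix)$; it therefore suffices to bound the supremum and infimum of the Rayleigh quotient $\uv^T \massmatrix \uv / \| \uv \|_2^2$ over $\uv \neq 0$.

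First I would unfold the quadratic form. By linearity of the extension operator, any $\uv \in \mathbb{R}^{|\nodesgin|}$ is the interior nodal vector of some $\uh \in \fespin$, and $\ext{\uh} = \sum_{a \in \nodesgin} \uv_a \ext{\shpf{a}}$. Using the definition \Eq{massmatdef}, this gives
$$
\uv^T \massmatrix \uv = \int_{\Omega} \Big( \sum_{a \in \nodesgin} \uv_a \ext{\shpf{a}} \Big)^2 = \| \ext{\uh} \|_{L^2(\Omega)}^2,
$$
so the Rayleigh quotient equals $\| \ext{\uh} \|_{L^2(\Omega)}^2 / \| \uv \|_2^2$.

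The key step is then to establish the two-sided bound $h^d \| \uv \|_2^2 \lesssim \| \ext{\uh} \|_{L^2(\Omega)}^2 \lesssim h^d \| \uv \|_2^2$, that is, Lemma \ref{lm:l2nodal} with $\Omega$ in place of $\Domex$. This requires only a minor adaptation of the proof of that lemma, exploiting the nesting $\Domin \subset \Omega \subset \Domex$: the upper bound is inherited immediately since $\| \ext{\uh} \|_{L^2(\Omega)}^2 \leq \| \ext{\uh} \|_{L^2(\Domex)}^2$, whereas the lower bound in the proof of Lemma \ref{lm:l2nodal} already only integrates over $\Domin$, so that $\| \ext{\uh} \|_{L^2(\Omega)}^2 \geq \| \uh \|_{L^2(\Domin)}^2 \gtrsim h^d \| \uv \|_2^2$ holds verbatim. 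In particular this lower bound shows $\massmatrix$ is positive definite.

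Combining these, both $\lambda_{\min}(\massmatrix)$ and $\lambda_{\max}(\massmatrix)$ scale as $h^d$, with hidden constants independent of $h$ and of the cut location $\eta_\cell$ by the standing assumptions, so the common factor $h^d$ cancels in the ratio and $\condnum{\massmatrix} \leq C$ follows. I expect no genuine obstacle: the only point requiring care is checking that the norm equivalence survives restriction from $\Domex$ to $\Omega$, which is guaranteed by the inclusions $\Domin \subset \Omega \subset \Domex$; the robustness with respect to the small cut cell problem is already built into Lemma \ref{lm:l2nodal} through the stability of the extension matrix provided by Lemma \ref{lm:extmatstab}.
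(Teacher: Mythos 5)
Your proposal is correct and follows essentially the same route the paper intends: the corollary is stated as an immediate consequence of the norm equivalence in Lemma \ref{lm:l2nodal}, via the Rayleigh quotient of the Gram matrix, exactly as you spell out. Your observation that the equivalence must be transferred from $L^2(\Domex)$ to $L^2(\Omega)$ using $\Domin \subset \Omega \subset \Domex$ is a detail the paper leaves implicit, and you resolve it correctly.
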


\subsection{Inverse inequality} 
In order to prove the condition number bound for the system matrix arising from \Eq{nummet}, we need to prove first an extended inverse inequality. We rely on the fact that an inverse inequality holds for the \ac{fe} space $\fespex$, i.e., 
\begin{equation}\label{eq:inv_act}
\|  \grad \uh \|_{L^2(\Domex)} \lesssim h^{-1} \|  \uh \|_{L^2(\Domex)}, \qquad \hbox{for any } \, \uh \in \fespex. 
\end{equation}
This standard result for conforming meshes can be found, e.g., in \cite[p. 111]{brenner_mathematical_2010}.
\begin{lemma}[Inverse inequality]
\label{lm:inv_ineq}
  The following inverse inequality holds:
$$
\|  \grad \ext{\uh} \|_{L^2(\Domex)} \lesssim h^{-1}  \|  \uh \|_{L^2(\Domin)}, \qquad \hbox{for any } \uh \in \fespin.
$$
\end{lemma}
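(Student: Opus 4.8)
The plan is to exploit the fact that, by construction, the extended function $\ext{\uh}$ is itself a member of the active space $\fespex$: indeed $\fespag$ is the range of the extension operator $\ext{\cdot}$ and $\fespag \subset \fespex$. Consequently the \emph{standard} inverse inequality \eqref{eq:inv_act}, which is valid for every function in $\fespex$ on conforming quasi-uniform meshes, applies verbatim to $\ext{\uh}$. This reduces the claim to controlling the $L^2(\Domex)$ norm of the extension, which is precisely the continuity estimate already established in Corollary \ref{lm:stab_ext_op}.

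Concretely, I would proceed in two steps. First, apply \eqref{eq:inv_act} to $\ext{\uh} \in \fespex$ to obtain $\| \grad \ext{\uh} \|_{L^2(\Domex)} \lesssim h^{-1} \| \ext{\uh} \|_{L^2(\Domex)}$. Second, invoke the continuity of the extension operator, Corollary \ref{lm:stab_ext_op}, which after taking square roots reads $\| \ext{\uh} \|_{L^2(\Domex)} \lesssim \| \uh \|_{L^2(\Domin)}$. Chaining the two bounds gives $\| \grad \ext{\uh} \|_{L^2(\Domex)} \lesssim h^{-1} \| \ext{\uh} \|_{L^2(\Domex)} \lesssim h^{-1} \| \uh \|_{L^2(\Domin)}$, which is the desired inequality.

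The argument contains no genuine obstacle — it is a clean composition of two previously proved results — but the point worth emphasising is \emph{where} the standard inverse inequality is posed. It holds over the full active domain $\Domex$, i.e., over a union of \emph{entire} background cells of a quasi-uniform conforming mesh, so the constant it produces is completely insensitive to the cut locations $\eta_\cell$ and to the small cut cell problem. Had one instead attempted to run an inverse inequality directly on the cut physical domain $\Dom$, the constant would degenerate as the cut size shrinks. All the cut-robustness of the final estimate is therefore inherited from Corollary \ref{lm:stab_ext_op}, which itself rests on the cut-independent bound $\| \cmat \|_2^2 \leq \cev$ of Lemma \ref{lm:extmatstab}.
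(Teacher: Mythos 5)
Your proposal is correct and follows essentially the same route as the paper's proof: apply the standard inverse inequality \eqref{eq:inv_act} to $\ext{\uh}\in\fespex$ on the active domain, then chain it with the continuity of the extension operator from Corollary \ref{lm:stab_ext_op}. Your closing remark about why the constant remains insensitive to the cut locations is a sound (and correct) elaboration of what the paper leaves implicit.
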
 
\begin{proof}
Using the fact that $\Domin \subseteq \Dom \subseteq \Domex$,  $\ext{\uh} \in \fespex$, the standard inverse inequality \Eq{inv_act}, and the stability of the extension operator in  Lemma \ref{lm:stab_ext_op}, we get:  
\begin{align}
  \|\grad \ext{\uh}\|_{L^2(\Domex)} \lesssim  h^{-1} \| \ext{\uh} \|_{L^2(\Domex)} \lesssim   h^{-1} \| \uh \|_{L^2(\Domin)}.
\end{align}
It proves the lemma.
\end{proof}

\subsection{Coercivity and Nitsche's coefficient}\label{sec:coe_nit}

\def\xiv{{\underline{\boldsymbol{\xi}}}}
\def\dmat{{\mathbf{D}}}
\def\xih{{{\boldsymbol{\xi}_h}}}
\def\cellcut{\cell_{\rm cut}}

In this section, we consider a trace inequality that is needed to prove the coercivity of the bilinear form in \Eq{bil-form}. Given a  cell $\cell \in \meshact$, let us consider the set of \emph{constraining} interior cells $\cell_1, \ldots, \cell_{m_\cell}$,  $m_\cell \geq 1$, i.e., the interior cells that constraint at least one \ac{dof} of the cut cell. Let us also define $\cellcut \doteq \cell \cap \Dom$ and $\Omega_{\cell} \doteq \cellcut \cup  \bigcup_{i=1}^{m_\cell} K_i \subset \Dom$.
\begin{lemma}
For any $\uh \in \fespag$ and $\cell \in \meshact$, the following bound holds
$$
\| \n\cdot \grad \uh \|_{L^2({\Gamma_{\rm D} \cap \cell})} \leq \cbou h_{\cell}^{-\frac{1}{2}} \| \grad \uh \|_{L^2(\Dom_{\cell})},
$$
for a positive constant $\cbou$.
\end{lemma}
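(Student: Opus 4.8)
The plan is to obtain the bound in three moves: first reduce the boundary term to a reference-cell trace inequality for the polynomial $\uh|_\cell$, then trade the full-cell norm for the $\Dom_\cell$-norm using the aggregated structure, and finally secure uniformity of the constant with respect to the cut location. To begin, I would discard the normal, using $\|\n\cdot\grad\uh\|_{L^2(\Gamma_{\rm D}\cap\cell)}\le\|\grad\uh\|_{L^2(\Gamma_{\rm D}\cap\cell)}$. Since $\uh\in\fespag\subset\fespex$, the restriction $\uh|_\cell$ lies in $V(\cell)$, so $\grad\uh|_\cell$ is a polynomial vector field of fixed degree. Mapping $\cell$ to $\cellref$ and using that $\Gamma_{\rm D}\cap\cell$ is a surface contained in $\cell$ of measure $\lesssim h_\cell^{d-1}$, a finite-dimensional norm equivalence on the polynomial space bounds the surface norm by the $L^\infty(\cellref)$ norm of the gradient, and scaling back gives $\|\grad\uh\|_{L^2(\Gamma_{\rm D}\cap\cell)}\lesssim h_\cell^{-1/2}\|\grad\uh\|_{L^2(\cell)}$. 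The constant here sees only the reference cell and the order $q$, not the position of the cut; this reduces the lemma to the scale-invariant estimate $\|\grad\uh\|_{L^2(\cell)}\lesssim\|\grad\uh\|_{L^2(\Dom_\cell)}$, in which the $h^d$ factors cancel.

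For this reduced estimate I would invoke the definition of $\fespag$. By the constraints in \eqref{eq:def-constraints}, the nodal values of $\uh|_\cell$ are fixed linear combinations of the interior degrees of freedom carried by the constraining cells $K_1,\dots,K_{m_\cell}$ (and of the interior nodes of $\cell$ itself), so $\uh|_\cell$ is a fixed linear image of the polynomials $\uh|_{K_i}$. By Lemma~\ref{lemma:aggrsize} the patch $\cell\cup\bigcup_i K_i$ has diameter $\lesssim\gamma_{\rm max}\,\h$, so I can scale it to a reference configuration of unit size, on which this linear map is one of only finitely many possibilities (determined by $q$, $d$ and $\gamma_{\rm max}$). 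On the resulting finite-dimensional space of extended functions, $\|\grad\cdot\|_{L^2(\cell)}$ and $\|\grad\cdot\|_{L^2(\Dom_\cell)}$ are both seminorms, and if the second dominates the first, then equivalence of norms on a finite-dimensional space closes the argument.

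The main obstacle is precisely this domination, together with its uniformity as the cut degenerates. The seminorm $\|\grad\cdot\|_{L^2(\Dom_\cell)}$ controls $\|\grad\cdot\|_{L^2(\cell)}$ as soon as its kernel is trivial modulo constants, which hinges on two facts: $\cellcut=\cell\cap\Dom$ is an \emph{open} subset of $\cell$, so a gradient that vanishes there forces $\uh|_\cell$ constant; and $\Dom_\cell$ is connected, so that the constant values of $\uh$ on the separate cells $K_i$ must coincide. When $\cellcut$ has reference measure bounded below this is robust, and a compactness argument over the finitely-shaped configurations yields a single constant. The delicate regime is a degenerating cut, where the open-set mechanism weakens; there I would argue that such a cell is anchored to a single aggregate, so that $m_\cell=1$ and $\uh|_\cell$ coincides with the interior polynomial $\uh|_{K_1}$, whose trace on $\Gamma_{\rm D}\cap\cell\subset\cell$ is dominated directly by $\|\grad\uh\|_{L^2(K_1)}$ over the adjacent full cell. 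Combining the two regimes and absorbing all reference-configuration constants then produces the claimed $\cbou$, independent of $\h$ and of $\eta_\cell$.
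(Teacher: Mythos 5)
Your opening reduction is fine and is in fact the paper's own first step: the paper also bounds $\| \n\cdot\grad\uh \|_{L^2(\Gamma_{\rm D}\cap\cell)}^2$ by $|\Gamma_{\rm D}\cap\cell|\,\|\grad\uh\|^2_{L^\infty(\cell)}$ and then works with the full cell. The genuine gap is in how you prove the reduced estimate $\|\grad\uh\|_{L^2(\cell)}\lesssim\|\grad\uh\|_{L^2(\Dom_\cell)}$. Your argument extracts control over $\cell$ only from the open set $\cell\cap\Dom$ (a polynomial with vanishing gradient there is constant) and then appeals to ``compactness over the finitely-shaped configurations.'' But the configurations are not finitely many: the cut shape varies in a continuum, and the norm-equivalence constant produced by this kernel argument depends on that shape. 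It blows up exactly as $|\cell\cap\Dom|\to 0$, because in the limit the seminorm $\|\grad\cdot\|_{L^2(\Dom_\cell)}$ loses its kernel property on $V(\cell)$: vanishing of the gradient on an evanescent set forces nothing. So compactness cannot deliver a uniform $\cbou$, and the degenerate-cut regime is precisely the regime this lemma (and the whole paper) exists to handle.

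Your patch for that regime is false: a degenerating cut does not imply $m_\cell=1$. Concretely, let $\Omega$ locally be the complement of the quadrant $\{x\ge\epsilon h\}\cap\{y\ge\epsilon h\}$ with the corner inside $\cell=[0,h]^2$; then $|\cell\cap\Dom|\approx 2\epsilon h^2$ is arbitrarily small, yet the left and bottom neighbors of $\cell$ are full interior cells, so the nodes of $\cell$ on those faces are interior \acp{dof} owned by two \emph{different} interior cells, and the remaining outer node is extrapolated from the root cell: $m_\cell\ge 2$ for every $\epsilon$, and $\uh|_\cell$ is an interpolant mixing two interior polynomials plus extrapolated values, not the extension of a single one. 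The paper closes the argument by a mechanism that never places $\cell\cap\Dom$ on the lower-bound side: since $\uh|_\cell$ depends linearly, with cut- and $h$-independent coefficients, on the constraining interior polynomials, and since continuity-compatible tuples of constants on the $K_i$ are mapped to constants on $\cell$, the nodal vector of $\grad\uh|_\cell$ (a polynomial of degree $q-1$) is a bounded linear image $\mathbf{D}_\cell$ of the nodal vectors of $\grad\uh|_{K_i}$ on the \emph{full} interior cells; on full cells, nodal norms are equivalent to scaled $L^2$ norms, giving
\begin{equation}
\|\grad\uh\|^2_{L^\infty(\cell)} \lesssim |\cell|^{-1}\sum_{i=1}^{m_\cell}\|\grad\uh\|^2_{L^2(K_i)} ,
\end{equation}
with constants independent of the cut. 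Nothing degenerates, and no case distinction on the cut size is needed. To repair your proof, replace the kernel/compactness step by this factoring argument (whose one nontrivial ingredient is exactly the statement that constant data on the constraining cells produces a constant on $\cell$).
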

\begin{proof}
For interior cells, the left-hand side is zero and the bound trivially holds. Let us consider a cut cell $\cell$. Let us also consider a \ac{fe} function $\uh \in \fespag$ and its gradient $\xih \doteq \grad \uh$. Assuming that all the cells have the same order, we have that $\xih$ belongs to the discontinuous Lagrangian \ac{fe} space of order $q-1$, and we represent the corresponding coordinate vector with $\xiv_\cell$.

First, we use the equivalence of norms in finite dimension and a scaling argument to get:
$$
\| \xih \|_{L^2(\Gamma_{\rm D} \cap \cell)}^2 \lesssim |\Gamma_{\rm D} \cap \cell| \| \xih \|^2_{L^\infty(\cell)}, 
$$
where the constant can only depend on the \ac{fe} space order. Analogously, we have $\| \xih \|_{L^{\infty}(\cellref)}^2 \lesssim \| \xiv_\cell\|_2^2$. Following the same ideas as above, $\xiv$ can be expressed as an extension of the corresponding nodal values of the $q-1$ order \ac{fe} spaces on top of the interior cells $\cell_i$, represented with $\xiv_{\cell_i}$; we represent this extension with the matrix $\dmat_\cell$, i.e., $\xiv_\cell = \dmat_{\cell} [ \xiv_1, \ldots, \xiv_m]^T$. Using an analogous reasoning as above for matrix $\cmat$, the norm of this matrix cannot depend on the cut or $h$. Thus, we have that $\| \xiv_\cell\|_2^2 \lesssim  \sum_{i=1}^{m_\cell} \| \xiv_{\cell_i} \|_2^2$. On the other hand, using again the equivalence of norms in finite dimension, we get $\| \xiv_{\cell_i} \|_2  \lesssim  \| \xih \|_{L^2(\cellref_i)}$. As a result, using typical scaling arguments, and using the fact that $|\cell| \lesssim |\cell_i| \lesssim | \cell|$ for constants independent of mesh size or order, we get:
$$\| \xih \|_{L^{\infty}(\cell)}^2 \lesssim | \cell |^{-1}  \sum_{i=1}^{m_\cell}\| \xih \|^2_{L^2(\cell_i)}.$$

Combining these results, we get:
\begin{align}
  \|\n\cdot \xih \|_{L^2(\Gamma \cap \aggr)}^2 & \leq |\Gamma \cap \cell| |\cell | \| \xih \|^2_{L^\infty(\cell)}  \lesssim   \hc^{-1}  \sum_{i=1}^{m_\cell}\| \xih \|^2_{L^2(\cell_i)},
  \end{align}
  where we have used the fact that $|\Gamma \cap \cell| |\cell |^{-1} \lesssim \hc^{-1}$ holds for a quasi-uniform mesh.
  It proves the lemma. 
\end{proof}

\subsection{Well-posedness of the unfitted \ac{fe} problem}
In this section, we prove coercivity and continuity of the bilinear form \Eq{bil-form}. First, we prove coercivity with respect to the following mesh dependent norm in $\fespag$:
$$
\opnormh{\uh}^2 \doteq \| \grad \uh \|_{L^2(\Dom)}^2 + \sum_{\cell \in \meshact} \beta_\cell \hc^{-1} \| \uh \|^2_{L^2(\Gamma_\mathrm{D} \cap \cell)}, \qquad \hbox{for } \, \uh \in \fespag,
$$ 
which is next proved to bound the $L^2(\Dom)$ norm.
\begin{theorem}\label{th:wp}
  The aggregated unfitted \ac{fe} problem in \Eq{nummet} satisfies the following bounds:
  \begin{itemize}
  \item[i)] Coercivity: 
\begin{equation}\label{eq:th-bounds}
  \Ac(\uh,\uh) \gtrsim  \opnormh{\uh}^2, \quad \hbox{for any } \uh \in \fespag,
\end{equation}
  \item [ii)] Continuity:
\begin{equation}\label{eq:th-bounds-2}
\Ac(\uh,\vh) \lesssim  \opnormh{\uh} \opnormh{\vh}, \quad \hbox{for } \uh, \, \vh \in \fespag,
\end{equation}
    \end{itemize} 
if $\beta_K > C$, for some positive constant $C$. In this case, there exists one and only one solution of \Eq{nummet}.
\end{theorem}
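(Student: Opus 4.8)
The plan is to obtain both bounds by the classical Nitsche argument, using the cut-insensitive trace inequality of the preceding lemma as the only nonstandard ingredient, and then to conclude unique solvability from Lax--Milgram on the finite-dimensional space $\fespag$. Throughout I identify the Nitsche coefficient with $\tau_\cell = \beta_\cell \hc^{-1}$, so that the penalty contribution of $\Ac$ matches the boundary term of $\opnormh{\cdot}$.

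For the coercivity bound i), I would first expand, using the symmetry of $\A_\cell$,
\begin{equation*}
\Ac(\uh,\uh) = \| \grad \uh \|_{L^2(\Dom)}^2 + \sum_{\cell \in \meshact} \int_{\Gamma_\mathrm{D} \cap \cell} \left( \tau_\cell \uh^2 - 2 \uh\, \n \cdot \grad \uh \right) \mathrm{\ d}S,
\end{equation*}
so that the only indefinite contribution is the cross term. I would control it cell by cell with Young's inequality split at the scale $\hc$, namely $2\uh\,\n\cdot\grad\uh \le \epsilon\,\hc^{-1}\uh^2 + \epsilon^{-1}\hc\,|\n\cdot\grad\uh|^2$ for a parameter $\epsilon>0$, and then invoke the trace inequality of the previous lemma to replace $\hc \| \n\cdot\grad\uh \|_{L^2(\Gamma_\mathrm{D}\cap\cell)}^2$ by $\cbou^2 \| \grad\uh \|_{L^2(\Omega_\cell)}^2$. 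Summing over $\cell \in \meshact$ then leaves
\begin{equation*}
\Ac(\uh,\uh) \ge \| \grad\uh \|_{L^2(\Dom)}^2 - \epsilon^{-1}\cbou^2 \sum_{\cell \in \meshact} \| \grad\uh \|_{L^2(\Omega_\cell)}^2 + \sum_{\cell \in \meshact} (\beta_\cell - \epsilon)\,\hc^{-1} \| \uh \|_{L^2(\Gamma_\mathrm{D}\cap\cell)}^2 .
\end{equation*}
The crux of the argument, and the step I expect to be the main obstacle, is controlling $\sum_\cell \| \grad\uh \|_{L^2(\Omega_\cell)}^2$ by $\| \grad\uh \|_{L^2(\Dom)}^2$. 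Each region $\Omega_\cell$ is the union of $\cell \cap \Dom$ with the constraining interior cells of $\cell$, so the sets overlap; the point is that each interior cell constrains only a uniformly bounded number of cut cells, because the aggregates have size at most a fixed multiple of $\h$ (Lemma \ref{lemma:aggrsize}) and the mesh is quasi-uniform with at most $n_{\rm cell}$ neighbours per cell. Hence there is a finite-overlap constant $C_\star = C_\star(n_{\rm cell},\gamma_{\rm max})$ with $\sum_\cell \| \grad\uh \|_{L^2(\Omega_\cell)}^2 \le C_\star \| \grad\uh \|_{L^2(\Dom)}^2$, independent of the cut. Choosing $\epsilon$ just above $\cbou^2 C_\star$ retains a positive fraction of $\| \grad\uh\|_{L^2(\Dom)}^2$, and then requiring $\beta_\cell > C$ for $C$ slightly larger than $\epsilon$ retains a positive fraction of every boundary term, giving $\Ac(\uh,\uh) \gtrsim \opnormh{\uh}^2$.

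For the continuity bound ii), I would treat the three contributions of $\Ac(\uh,\vh)$ separately: Cauchy--Schwarz bounds the bulk term by $\| \grad\uh\|_{L^2(\Dom)} \| \grad\vh\|_{L^2(\Dom)}$ and, after factoring $\tau_\cell = \beta_\cell\hc^{-1}$, the penalty term by the product of the boundary seminorms of $\opnormh{\cdot}$; for each cross term I apply Cauchy--Schwarz on $\Gamma_\mathrm{D}\cap\cell$, the trace inequality, and the same finite-overlap estimate. Since $\beta_\cell > C > 0$ is bounded below, $\sum_\cell \hc^{-1}\| \vh \|_{L^2(\Gamma_\mathrm{D}\cap\cell)}^2 \lesssim \opnormh{\vh}^2$, so each cross term is bounded by $\opnormh{\uh}\opnormh{\vh}$ up to the constant $\cbou\sqrt{C_\star}$, and summing the three pieces yields \eqref{eq:th-bounds-2}. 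Finally, $\opnormh{\cdot}$ is a genuine norm on $\fespag$: indeed $\opnormh{\uh}=0$ forces $\grad\uh=0$ on $\Domin\subset\Dom$ and $\uh=0$ on $\Gamma_\mathrm{D}$, so $\uh$ is constant with vanishing trace and every interior nodal value vanishes, whence $\uh=\ext{0}=0$. Together with the coercivity and continuity just established and the boundedness of the right-hand side $b$, this makes \eqref{eq:nummet} a coercive problem on the finite-dimensional space $\fespag$, and Lax--Milgram delivers existence and uniqueness.
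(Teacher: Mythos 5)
Your proof is correct and follows essentially the same route as the paper's: the same expansion of $\Ac(\uh,\uh)$, Young's inequality at scale $\hc$ combined with the cut-insensitive trace inequality to absorb the Nitsche cross term, a finite-overlap bound on $\sum_{\cell} \| \grad \uh \|_{L^2(\Omega_\cell)}^2$ (which the paper invokes implicitly via quasi-uniformity and bounded $\gamma_{\rm max}$, and you rightly flag as the crux), Cauchy--Schwarz for continuity, and finite dimensionality for unique solvability. Your treatment is in fact slightly more careful than the paper's on two points: the explicit finite-overlap constant $C_\star$, and the verification that $\opnormh{\cdot}$ is a genuine norm on $\fespag$ (which tacitly uses connectedness of $\Dom$ and $|\Gamma_{\rm D}|>0$).
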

\begin{proof}
  For cut cells, we use
  \begin{align}
    2 \int_{\Gamma_{\rm D} \cap \cell} \uh \left(\normal\cdot \gradient \uh\right) & \leq  \alpha_\cell {\cbou} h_\cell^{-1}  \| \uh \|^2_{L^2({\Gamma_{\rm D} \cap \cell})} +  \alpha_\cell^{-1} \cbou^{-1} h_\cell  \|  \normal\cdot \gradient \uh \|^2_{L^2({\Gamma_{\rm D} \cap \cell})} \\ & \leq \alpha_\cell {\cbou} h_\cell^{-1} \| \uh \|^2_{L^2({\Gamma_{\rm D} \cap \cell})}   +\alpha_\cell^{-1}   \| \grad \uh \|^2_{L^2(\Dom_\cell)}.
\label{eq:stabA1-proof}
 \end{align}
  Using the fact that the mesh is quasi-uniform and that the number of neighboring cells and $\gamma_{\rm max}$  is bounded, one can take a value for $\alpha_\cell$ large enough (but uniform with respect to $\h$ and the cut location) such that:
  $$
  2 \int_{\Gamma_{\rm D}} \uh \left(\normal\cdot \gradient \uh\right) \leq   \sum_{\cell \in \meshact }\alpha_\cell\cbou \h_\cell^{-1}\| \uh \|_{L^2(\Gamma_{\rm D}\cap \cell)}^2 +  \frac{1}{2}\| \grad \uh \|_{L^2(\Dom)}^2.
$$
As a result, we get:
$$
\Ac(\uh,\uh) \geq  \frac{1}{2} \| \grad \uh \|_{L^2(\Dom)}^2 + \sum_{\cell \in \meshact } \left( \beta_\cell - \alpha_\cell  \cbou \right) \h_\cell^{-1} \| \uh \|_{L^2(\Gamma_{\rm D}\cap \cell)}^2.
$$
For, e.g., $\beta_\cell > 2 \alpha_\cell \cbou$, $\Ac(\uh,\uh)$ is a norm. By construction, this lower bound for $\beta_\cell$ is independent of the mesh size $\h$ and the intersection of $\Gamma$ and $\meshact$. It proves the coercivity property in \Eq{th-bounds}. Thus, the bilinear form is non-singular.  The continuity in \Eq{th-bounds-2} can readily be proved by repeated use of the Cauchy-Schwarz inequality and inequality \Eq{stabA1-proof}. Since the problem is finite-dimensional and the corresponding linear system matrix is non-singular, there exists one and only one solution of this problem. 
\end{proof}

\def\Cstab{C_{\rm s}}

\begin{lemma}\label{lm:l2meshnorm}
If $\Dom$ has smoothing properties, the following bound holds:
$$
\| \uh \|_{L^2(\Dom)} \lesssim 
\opnormh{\uh}, \quad \hbox{for any } \, \uh \in \fespag.
$$
\end{lemma}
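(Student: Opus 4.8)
The plan is to use a duality (Aubin--Nitsche type) argument, exploiting the smoothing hypothesis to gain elliptic regularity for an auxiliary problem and then trading the missing Dirichlet boundary term against the boundary contribution already present in $\opnormh{\cdot}$. The key point is that $\opnormh{\uh}$ does \emph{not} directly control $\| \uh \|_{L^2(\Dom)}$ because $\uh$ need not vanish on $\Gamma_{\rm D}$; the weighted boundary term is what compensates for this.

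First I would introduce the adjoint problem: find $w \in H^1(\Dom)$ solving $-\Delta w = \uh$ in $\Dom$ with homogeneous data $w = 0$ on $\Gamma_{\rm D}$ and $\grad w \cdot \normal = 0$ on $\Gamma_{\rm N}$. The smoothing property of $\Dom$ is precisely the statement that this mixed problem enjoys full elliptic regularity, so that $w \in H^2(\Dom)$ with $\| w \|_{H^2(\Dom)} \lesssim \| \uh \|_{L^2(\Dom)}$. Since $\uh|_\Dom \in H^1(\Dom)$ and $w \in H^2(\Dom)$, integration by parts is legitimate and yields
\[
\| \uh \|_{L^2(\Dom)}^2 = \int_\Dom \uh (-\Delta w) = \int_\Dom \grad \uh \cdot \grad w - \int_{\Gamma_{\rm D}} \uh \, (\grad w \cdot \normal),
\]
where the Neumann boundary term vanishes because $\grad w \cdot \normal = 0$ on $\Gamma_{\rm N}$.

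Next I would bound the two terms separately. The volume term is immediate by Cauchy--Schwarz and the regularity estimate: $\int_\Dom \grad \uh \cdot \grad w \le \| \grad \uh \|_{L^2(\Dom)} \| \grad w \|_{L^2(\Dom)} \lesssim \opnormh{\uh}\, \| \uh \|_{L^2(\Dom)}$, using $\| \grad \uh \|_{L^2(\Dom)} \le \opnormh{\uh}$ and $\| w \|_{H^2(\Dom)} \lesssim \| \uh \|_{L^2(\Dom)}$. For the boundary term, I would observe that $\opnormh{\cdot}$ controls a scaled $L^2(\Gamma_{\rm D})$ norm of $\uh$: since $\beta_\cell > C$ and the mesh is quasi-uniform,
\[
\| \uh \|_{L^2(\Gamma_{\rm D})}^2 = \sum_{\cell \in \meshact} \| \uh \|_{L^2(\Gamma_{\rm D}\cap\cell)}^2 \lesssim h \sum_{\cell \in \meshact} \beta_\cell \hc^{-1} \| \uh \|_{L^2(\Gamma_{\rm D}\cap\cell)}^2 \le h\, \opnormh{\uh}^2.
\]
On the other hand, the trace theorem on the fixed boundary piece $\Gamma_{\rm D}$ together with the regularity estimate gives $\| \grad w \cdot \normal \|_{L^2(\Gamma_{\rm D})} \lesssim \| w \|_{H^2(\Dom)} \lesssim \| \uh \|_{L^2(\Dom)}$. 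Combining by Cauchy--Schwarz and using $h \lesssim 1$,
\[
\Bigl| \int_{\Gamma_{\rm D}} \uh \,(\grad w \cdot \normal) \Bigr| \le \| \uh \|_{L^2(\Gamma_{\rm D})} \, \| \grad w \cdot \normal \|_{L^2(\Gamma_{\rm D})} \lesssim h^{1/2}\, \opnormh{\uh}\, \| \uh \|_{L^2(\Dom)} \lesssim \opnormh{\uh}\, \| \uh \|_{L^2(\Dom)}.
\]
Substituting both bounds and dividing by $\| \uh \|_{L^2(\Dom)}$ (the estimate is trivial when it vanishes) yields the claim.

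The delicate points are: justifying the elliptic regularity encoded in the smoothing hypothesis (this is exactly why the hypothesis is imposed, and for mixed boundary data one must also assume that $\Gamma_{\rm D}$ and $\Gamma_{\rm N}$ meet compatibly); and the scaling bookkeeping in the boundary term, where the factor $\hc^{-1}$ built into $\opnormh{\cdot}$ is what lets the trace of $\grad w$ on $\Gamma_{\rm D}$ be absorbed. I would deliberately route the boundary estimate through the \emph{global} trace theorem on $\Gamma_{\rm D}$ rather than through per-cell trace inequalities on the cut pieces $\Gamma_{\rm D}\cap\cell$; since $w$ is a genuine $H^2$ function and not a discrete one, this avoids any dependence on the cut location and keeps all constants robust with respect to the small cut cell problem, consistent with the standing conventions of this section.
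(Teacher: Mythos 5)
Your proof is correct and takes essentially the same route as the paper: the identical duality argument with the auxiliary problem $-\Delta w = \uh$, $w=0$ on $\Gamma_{\rm D}$, $\normal\cdot\grad w = 0$ on $\Gamma_{\rm N}$, elliptic regularity from the smoothing hypothesis, integration by parts, Cauchy--Schwarz on the volume term, and the global trace inequality $\|\normal\cdot\grad w\|_{L^2(\Gamma_{\rm D})} \lesssim \|w\|_{H^2(\Dom)}$ for the boundary term. The only difference is bookkeeping in the boundary estimate (you apply global Cauchy--Schwarz and then convert $\|\uh\|_{L^2(\Gamma_{\rm D})}$ into the weighted mesh norm using $\hc \lesssim h \lesssim 1$ and $\beta_\cell > C$, whereas the paper performs a cell-weighted Cauchy--Schwarz with factors $\hc^{-1/2}$ and $\hc^{1/2}$ directly), which is the same estimate in a different order.
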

\begin{proof}
Let us consider $\uh \in \fespag$ and let $\psi \in H^1_0(\Dom)$ solve the problem $-\Delta \psi = \uh$ with the boundary conditions $\psi = 0$ on $\Gamma_{\rm D}$ and $\normal \cdot \grad \psi = 0$ on $\Gamma_{\rm N}$. Using the fact that the domain $\Omega$ has smoothing properties, it holds $\| \psi \|_{H^2(\Dom)} \lesssim  \| \uh \|_{L^2(\Dom)}$. We have, after integration by parts:
\begin{align}\label{eq:stab-eq1}
\| \uh \|_{L^2(\Dom)} = - \int_\Dom \uh \Delta \psi = \int_\Dom \grad \uh \cdot \grad \psi - \int_{\Gamma_{\rm D}} \uh \normal \cdot \grad \psi.
\end{align}
The first term in the right-hand side of \Eq{stab-eq1} is easily bounded using the Cauchy-Schwarz inequality:
$$
\int_\Dom \grad \uh \cdot \grad \psi \leq  \| \grad \uh \|_{L^2(\Dom)} \|  \grad \psi \|_{L^2(\Dom)} \lesssim \| \grad \uh \|_{L^2(\Dom)} \|  \uh \|_{L^2(\Dom)}. 
$$
On the other hand, the following trace inequality  holds $\| \normal \cdot \grad \psi \|_{L^2(\Gamma_{\rm D})}^2 \lesssim | \psi |_{H^2(\Omega)}^2$ for a constant that depends on the size of $\Gamma_{\rm D}$ (see \cite{agmon1965lectures}). Using the Cauchy-Schwarz inequality and the previous trace inequality, we readily get:
\begin{align}
  - \int_{\Gamma_{\rm D}} \uh \normal \cdot \grad \psi &\leq \left( \sum_{\cell \in \meshact} \hc^{-1} \| \uh \|_{L^2(\Gamma_{\rm D} \cap \cell)}^2 \right)^{\frac{1}{2}}
  \left( \sum_{\cell \in \meshact} \hc \| \normal \cdot \grad \psi \|_{L^2(\Gamma_{\rm D} \cap \cell)}^2 \right)^{\frac{1}{2}} \\
  & \leq \left( \sum_{\cell \in \meshact} \hc^{-1} \| \uh \|_{L^2(\Gamma_{\rm D} \cap \cell)}^2 \right)^{\frac{1}{2}} \| \normal \cdot \grad \psi \|_{L^2(\Gamma_{\rm D})} \\ & \lesssim  \left( \sum_{\cell \in \meshact} \hc^{-1} \| \uh \|_{L^2(\Gamma_{\rm D} \cap \cell)}^2 \right)^{\frac{1}{2}} \| \uh \|_{L^2(\Dom)}.
\end{align}
Combining these bounds, we prove the lemma.
\end{proof}

\def\cstif{C_{\rm A}}
\begin{corollary}[Stiffness matrix condition number]
  \label{cr:cn}
The condition number of the linear system matrix $\stifmatrix$ in \Eq{stifmatdef} 
is bounded  by $\condnum{\stifmatrix} \lesssim  \h^{-2}$. 
\end{corollary}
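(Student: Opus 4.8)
The plan is to bound $\condnum{\stifmatrix} = \lambda_{\max}(\stifmatrix)/\lambda_{\min}(\stifmatrix)$ by controlling the two extreme eigenvalues separately, using the Rayleigh-quotient characterization $\lambda_{\max} = \sup_{\uv} \stifmatrix\uv\cdot\uv/\|\uv\|_2^2$ and the analogous infimum for $\lambda_{\min}$. The key observation is that, for $\uh=\ext{u_h}\in\fespag$ with interior nodal vector $\uv$, we have $\stifmatrix\uv\cdot\uv = \Ac(\ext{u_h},\ext{u_h})$, so the whole problem reduces to sandwiching the bilinear form between $\|\uv\|_2^2$ on both sides, after first converting between the algebraic norm $\|\uv\|_2^2$ and a functional $L^2$ norm via the mass-matrix equivalence already established in Lemma~\ref{lm:l2nodal}, namely $h^d\|\uv\|_2^2 \simeq \|\ext{u_h}\|_{L^2(\Domex)}^2$.

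For the upper bound on $\lambda_{\max}$, I would start from continuity of the bilinear form (Theorem~\ref{th:wp}ii), $\Ac(\ext{u_h},\ext{u_h}) \lesssim \opnormh{\ext{u_h}}^2$, and then control the mesh-dependent norm by $h^{-1}$ times the $L^2$ norm. The gradient term in $\opnormh{\cdot}$ is handled by the extended inverse inequality of Lemma~\ref{lm:inv_ineq}, which gives $\|\grad\ext{u_h}\|_{L^2(\Domex)} \lesssim h^{-1}\|u_h\|_{L^2(\Domin)}$; the boundary term $\sum_\cell \beta_\cell h_\cell^{-1}\|\ext{u_h}\|_{L^2(\Gamma_{\rm D}\cap\cell)}^2$ is controlled by a trace/inverse estimate together with the same inverse inequality, yielding an overall $h^{-2}\|u_h\|_{L^2(\Domin)}^2$. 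Combining with the mass-matrix equivalence converts this into $h^{-2}\cdot h^d\|\uv\|_2^2$, so $\lambda_{\max} \lesssim h^{d-2}$.

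For the lower bound on $\lambda_{\min}$, I would use coercivity (Theorem~\ref{th:wp}i) to get $\Ac(\ext{u_h},\ext{u_h}) \gtrsim \opnormh{\ext{u_h}}^2$, and then invoke Lemma~\ref{lm:l2meshnorm}, $\|\ext{u_h}\|_{L^2(\Dom)} \lesssim \opnormh{\ext{u_h}}$, which is precisely the ingredient that upgrades the mesh-dependent norm to an honest $L^2$ bound and thereby encodes a Poincar\'e-type / stability estimate independent of the cut location. This gives $\Ac(\ext{u_h},\ext{u_h}) \gtrsim \|\ext{u_h}\|_{L^2(\Dom)}^2 \gtrsim \|\ext{u_h}\|_{L^2(\Domin)}^2 \gtrsim h^d\|\uv\|_2^2$, using $\Domin\subset\Dom$ and again Lemma~\ref{lm:l2nodal}, so $\lambda_{\min} \gtrsim h^d$. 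Dividing the two bounds, the factor $h^d$ cancels and one obtains $\condnum{\stifmatrix} \lesssim h^{-2}$, exactly as in the body-fitted case.

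The main obstacle I anticipate is making sure every constant genuinely stays independent of the cut location $\eta_\cell$ and of the aggregate size; this robustness is not automatic and is exactly what the aggregation construction buys us. Concretely, the delicate points are the boundary-term estimate in the $\lambda_{\max}$ bound — where one must avoid any hidden dependence on $|\cell\cap\Dom|$ by routing the trace estimate through the constraining interior cells (as in the trace inequality preceding Theorem~\ref{th:wp}) rather than the cut cell itself — and the appeal to Lemma~\ref{lm:l2meshnorm}, whose constant is cut-independent only because the extension operator is stable (Lemma~\ref{lm:extmatstab}, Corollary~\ref{lm:stab_ext_op}). Once these cut-independence facts are in hand, the remaining steps are bookkeeping with the quasi-uniformity assumption ($h_\cell \simeq h$) and the finite-overlap bound $n_{\rm cell}$.
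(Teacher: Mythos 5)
Your proposal is correct and follows essentially the same route as the paper's proof: both bound the Rayleigh quotient $\uv\cdot\stifmatrix\uv=\Ac(\ext{\uh},\ext{\uh})$ below by $h^d\|\uv\|_2^2$ (coercivity from Theorem~\ref{th:wp}, Lemma~\ref{lm:l2meshnorm}, and the nodal--$L^2$ equivalence of Lemma~\ref{lm:l2nodal}) and above by $h^{d-2}\|\uv\|_2^2$ (continuity, the inverse inequality of Lemma~\ref{lm:inv_ineq} for the gradient term, and a cut-independent scaling/trace estimate on the full cell plus extension stability for the Nitsche boundary term). The only cosmetic difference is that the paper phrases the boundary-term estimate directly in terms of the cell nodal vector, $\|\ext{\uh}\|^2_{L^2(\Gamma_{\rm D}\cap\cell)}\lesssim h_\cell^{d-1}\|\next\ucv\|_2^2$, whereas you pass through the cell $L^2$ norm; the two are equivalent by finite-dimensional norm equivalence.
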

\begin{proof}
To prove the corollary, we have to bound $\uv \cdot \stifmatrix \uv = \Ac(\ext{\uh},\ext{\uh})$ above and below by $ \| \uv \|^2_2$ times some constant. The lower bound follows from the coercivity property in Th. \ref{th:wp}, Lemma \ref{lm:l2meshnorm}, the lower bounds in Lemmas \ref{lm:l2nodal} and \ref{lm:extmatstab}, which lead to $\Ac(\ext{\uh},\ext{\uh}) \gtrsim \|  \ext{\uh} \|^2_{L^2(\Dom)} \gtrsim  h^d  \| \uv \|^2_2.$ The upper bound is readily obtained from the continuity property in Lemma  \ref{th:wp} and the upper bound in Lemma \ref{lm:l2meshnorm}, i.e., $
\uv \cdot \stifmatrix \uv = \Ac(\ext{\uh},\ext{\uh}) \lesssim \opnormh{\ext{\uh}}^2$. Using scaling arguments and the equivalence of norms for finite-dimensional spaces, we get $\| \ext{\uh} \|^2_{L^2(\Gamma_\mathrm{D} \cap \cell)} \lesssim \hc^{d-1} \| \next \ucv \|_2^2$. Adding up for all cells, invoking the fact that the number of neighbour cells is bounded, and using the upper bound of the coordinate vector extension operator in \ref{lm:l2meshnorm}, we obtain:
\begin{align}\label{eq:aux1-sma}
\sum_{\cell \in \meshact} \beta_\cell \hc^{-1} \| \ext{\uh} \|^2_{L^2(\Gamma_\mathrm{D} \cap \cell )} \lesssim \h^{d-2} \| \uv \|_2^2.
\end{align}
Using the inverse inequality in Lemma \ref{lm:inv_ineq} and the upper bound in Lemma \ref{lm:l2nodal}, we obtain:
\begin{align}\label{eq:aux2-sma}
\| \grad \ext{\uh} \|_{L^2(\Dom)}^2 \lesssim  h^{-2} \| \ext{\uh} \|_{L^2(\Dom)}^2 \lesssim h^{d-2} \| \uv \|_2^2.
\end{align}
Combining \Eq{aux1-sma}-\Eq{aux2-sma}, we get $\uv \cdot \stifmatrix \uv \leq c h^{d-2} \| \uv \|_2^2$. It proves the corollary.
\end{proof}
%


\subsection{Error estimates} \label{sec:err_est}

\def\int#1{\mathcal{I}_h(#1)}
\def\nodalval#1{{\underline{\boldsymbol{\sigma}}(#1)}}

In this section, we get \emph{a priori} error estimates for the aggregated \ac{fe} scheme \Eq{nummet}. In order to do that, we prove first approximability properties of the corresponding spaces.

\begin{lemma}\label{lm:approx}
  Let us consider an aggregated \ac{fe} space of order $q$, $m \leq q$, $1 \leq s \leq m \leq q+1$, $1 \leq p \leq \infty$, and  $m > \frac{d}{p}$. Given a function $u \in H^m(\Omega)$, it holds:
  $$
\inf_{\uh \in \fespag} \| u - \uh \|_{W_p^s(\Dom)} \lesssim \h^{m-s} | u |_{W_p^m(\Dom)}.
  $$
\end{lemma}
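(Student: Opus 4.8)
The plan is to construct an explicit quasi-interpolant in $\fespag$ and estimate its error cell by cell. Since $m>\frac{d}{p}$, the Sobolev embedding $W_p^m(\Dom)\hookrightarrow \mathcal{C}^0(\Dom)$ makes the nodal values of $u$ well defined, so I would start from the nodal Lagrange interpolant $I_h^{\rm in} u \in \fespin$ onto the interior space (all of whose nodes lie in $\overline{\Domin}\subset\overline{\Dom}$) and set $\uh \doteq \ext{I_h^{\rm in} u} \in \fespag$. I would then split the physical domain as $\Dom = \Domin \cup (\Domcut\cap\Dom)$ and bound $\| u-\uh\|_{W_p^s}$ on each part separately, finally summing the local contributions.

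On the interior cells the extension operator acts trivially, $\ext{I_h^{\rm in}u}|_\cell = I_h^{\rm in}u|_\cell$ for $\cell\in\meshin$, because all Lagrangian nodes of an interior cell belong to $\nodesgin$. Hence on $\Domin$ the error is exactly the nodal interpolation error, and the standard local estimate $\| u - I_h^{\rm in}u\|_{W_p^s(\cell)} \lesssim \hc^{m-s}|u|_{W_p^m(\cell)}$ (valid for $m>\frac{d}{p}$, see \cite{brenner_mathematical_2010}) summed over $\cell\in\meshin$ gives $\| u-\uh\|_{W_p^s(\Domin)} \lesssim \h^{m-s}|u|_{W_p^m(\Dom)}$.

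The cut region is the crux. For a cut cell $\cell$ I would consider its aggregate patch $\omega_\cell$ (the union of cells in the aggregate, of diameter $\lesssim \gamma_{\rm max}\h$ and containing a full cell, hence a ball of radius $\gtrsim\h$) and introduce the averaged Taylor polynomial $P=P_\cell$ of $u$ of degree $m-1\le q$ over $\omega_\cell$, writing $u-\uh = (u-P) + (P - \ext{I_h^{\rm in}u})$ on $\cell\cap\Dom$. The first term is handled by the Bramble--Hilbert/Dupont--Scott estimate $\| u-P\|_{W_p^s(\omega_\cell)}\lesssim \h^{m-s}|u|_{W_p^m(\omega_\cell)}$. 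For the second term I would exploit the key \emph{polynomial reproduction} of the extension: since $\{\shpf{a}\}$ is the Lagrange basis of degree $q\ge\deg P$, the formula \Eq{def-constraints} yields $\ext{I_h^{\rm in}P}=P$ on the whole aggregate, so on $\cell$ we have $P - \ext{I_h^{\rm in}u} = \ext{I_h^{\rm in}(P-u)}$. Then a scaling chain collapses this term: an inverse estimate $\|\cdot\|_{W_p^s(\cell)}\lesssim\hc^{-s}\|\cdot\|_{L^p(\cell)}$ applied to the polynomial $\ext{I_h^{\rm in}(P-u)}$, the equivalence of the $L^p(\cell)$ norm with $\hc^{d/p}$ times the nodal-vector $\ell^p$ norm, the uniform bound $\|\cmatcref\|_2\lesssim 1$ from Lemma \ref{lm:extmatstab}, and the pointwise bound $|(P-u)(\x_a)|\le\| P-u\|_{L^\infty(\omega_\cell)}\lesssim \h^{m-d/p}|u|_{W_p^m(\omega_\cell)}$ (again Bramble--Hilbert with Sobolev embedding) together give $\|\ext{I_h^{\rm in}(P-u)}\|_{W_p^s(\cell\cap\Dom)}\lesssim \h^{m-s}|u|_{W_p^m(\omega_\cell)}$.

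Finally I would add up (or take the supremum for $p=\infty$) over all active cells: quasi-uniformity and the bounded aggregate size from Lemma \ref{lemma:aggrsize} ensure the patches $\omega_\cell$ overlap only a uniformly bounded number of times, so $\sum_\cell|u|_{W_p^m(\omega_\cell)}^p \lesssim |u|_{W_p^m(\Dom)}^p$, yielding the claimed bound for $\uh$ and hence for the infimum. The main obstacle I anticipate is the polynomial-approximation step over the aggregate patches: the aggregates need not be convex (they may be staircase- or L-shaped), so the averaged-Taylor/Bramble--Hilbert estimates must be justified on domains that are only finite unions of shape-regular cells star-shaped with respect to a common ball of radius $\gtrsim\h$, and the bookkeeping of the $\h$-powers through the extension/inverse-estimate chain must be done carefully so that all hidden constants remain independent of the cut position $\eta_\cell$.
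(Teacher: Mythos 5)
Your proposal is correct and follows essentially the same route as the paper's proof: both take as candidate the extension of the interior nodal interpolant ($\uh=\ext{\mathcal{I}^{\rm in}_h u}$, the paper's $\mathcal{I}_h(u)$), exploit that this operator reproduces polynomials of degree at most $q$ on the patch formed by a cut cell and its constraining interior cells, control the extended nodal values through the uniform constraint-matrix bound of Lemma \ref{lm:extmatstab}, and finish with a polynomial-approximation lemma on that patch plus scaling. Your averaged-Taylor/Bramble--Hilbert argument with the explicit inverse-estimate and nodal-norm chain is just a more hands-on rendering of the paper's combination of $\mathcal{C}^0$-stability of the interpolant with the Deny--Lions lemma on the Lipschitz patch $\Omega_\cell$ (note only that the relevant patch is the cut cell together with \emph{all} of its constraining interior cells, which need not lie in a single aggregate).
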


\begin{proof}

Under the assumptions of the lemma, we have that the following embedding $W_p^m(\Dom) \subset \mathcal{C}^0(\bar\Dom)$ is continuous (see, e.g.,  \cite[p. 486]{ern_theory_2004}).  Thus, given a function $u \in W_p^m(\Dom)$, let represent with $\nodalval{u}$ the vector of nodal values in $\Domin$, i.e., $\nodalval{u}_a = \uh(\x^a)$ for $a \in \nodes{\Domin}$. We define the interpolation operator $\int{u} \doteq \sum_{a \in \nodes{\Dom}} \shpf{a} [\next{\nodalval{u}}]_a $.
  
Given a cut cell $K \in \meshact$, the fact that its \acp{dof} values only depend on interior \acp{dof} in $\bar\Omega_\cell$, and since each shape function $\shpf{a}$ belongs to $W_\infty^m(K) \subseteq W_p^m(K)$, it follows from the upper bound of the norm of the nodal extension operator in Lemma \ref{lm:extmatstab} that  $\| \int{u} \|_{W_p^m(\cell)} \leq C \| u \|_{\mathcal{C}^0(\bar\Omega_\cell)}$ (see also \cite[Lemma 4.4.1]{brenner_mathematical_2010}). On the other hand, we consider an arbitrary function $\pi(u) \in W_p^m(\Dom)$ such that $\pi(u)|_\cell \in \mathcal{P}_q(\Dom_{\cell})$. We note that, by construction, $\pi(u)|_\cell = \int{\pi(u)}|_\cell$. Thus, we have:
\begin{align}
 \| u - \int{u} \|_{W_p^m(\cell)}  & \leq 
  \| u - \pi(u)  \|_{W_p^m(\cell)}  +  \| \int{\pi(u) - u} \|_{W_p^m(\cell)} \\
& \lesssim
  \| u - \pi(u)  \|_{W_p^m(\cell)}  +   \| \pi(u) - {u} \|_{\mathcal{C}^0(\bar\Dom_\cell)} \lesssim \| u - \pi(u)  \|_{W_p^m(\Dom_\cell)},
\end{align}
where we have used in the last inequality the previous continuous embedding. Since $\Dom_\cell$ is an open bounded domain with Lipschitz boundary by definition, one can use the Deny-Lions lemma (see, e.g., \cite{ern_theory_2004}). As a result, the $\pi(u)$ that minimizes the right-hand side holds:
\begin{align}
\| u - \int{u} \|_{W_p^m(\Dom_\cell)} \lesssim | u |_{W_p^m(\Dom_\cell)}.
\end{align}
Using standard scaling arguments, we prove the lemma.
\end{proof}

\begin{theorem}  
If $\Dom$ has smoothing properties and the solution $u$ of the continuous problem \Eq{PoissonEq} belongs to  $W_p^m(\Dom)$ for $\frac{d}{p} < m \leq q$, the solution $\uh \in \fespag$ of \Eq{nummet}  satisfies the following \emph{a priori} error estimate:
$$
\| u - \uh \|_{H^1(\Dom)} \leq \h^{m-1} | u |_{H^m(\Dom)}.
$$
\end{theorem}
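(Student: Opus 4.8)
The plan is to prove this final \emph{a priori} error estimate via the standard Strang-type argument for consistent Galerkin methods: reduce the error in the problem norm to a best-approximation error, then invoke the approximability result of Lemma \ref{lm:approx}. Since Nitsche's method is consistent (the continuous solution $u$ satisfies the discrete variational equation when tested against $\vh \in \fespag$, because the Nitsche terms vanish on $\Gamma_{\rm D}$ where $u = g^{\rm D}$), we obtain Galerkin orthogonality: $\Ac(u - \uh, \vh) = 0$ for all $\vh \in \fespag$. This is the crucial structural ingredient that lets us avoid any consistency error, in contrast to the ghost penalty / CutFEM approach mentioned in the introduction.

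The key steps, in order, are as follows. First I would establish coercivity and continuity of $\Ac(\cdot,\cdot)$ with respect to the mesh-dependent norm $\opnormh{\cdot}$, which are exactly the two bounds proved in Theorem \ref{th:wp}. Second, for an arbitrary $\vh \in \fespag$, I would write $\uh - \vh = (\uh - u) + (u - \vh)$ and use coercivity together with Galerkin orthogonality to bound
\begin{align*}
\opnormh{\uh - \vh}^2 \lesssim \Ac(\uh - \vh, \uh - \vh) = \Ac(u - \vh, \uh - \vh) \lesssim \opnormh{u - \vh} \, \opnormh{\uh - \vh},
\end{align*}
so that $\opnormh{\uh - \vh} \lesssim \opnormh{u - \vh}$. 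Third, by the triangle inequality this yields the quasi-optimality estimate $\opnormh{u - \uh} \lesssim \inf_{\vh \in \fespag} \opnormh{u - \vh}$. Finally, I would bound the best-approximation error in the $\opnormh{\cdot}$ norm by the $H^m$-seminorm: the $\| \grad (\cdot) \|_{L^2(\Dom)}$ part is controlled directly by Lemma \ref{lm:approx} with $s = 1$, giving $\h^{m-1} |u|_{H^m(\Dom)}$, while the boundary terms $\sum_\cell \beta_\cell \hc^{-1} \| u - \vh \|^2_{L^2(\Gamma_{\rm D} \cap \cell)}$ require a trace inequality (of the type used in Lemma \ref{lm:l2meshnorm}) to convert the surface $L^2$ norm into a volumetric bound, after which Lemma \ref{lm:approx} again applies with matching orders. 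Since $\opnormh{\uh}$ dominates $\| \grad \uh \|_{L^2(\Dom)}$ and, via Lemma \ref{lm:l2meshnorm}, also the full $H^1(\Dom)$ norm, the claimed $H^1$ estimate follows.

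The main obstacle I anticipate is controlling the boundary contribution to the best-approximation error in $\opnormh{\cdot}$ with the correct power of $\h$. The subtlety is that the interpolant $\int{u}$ from Lemma \ref{lm:approx} is defined through the extension operator on the aggregate, so the boundary error $\| u - \int{u} \|_{L^2(\Gamma_{\rm D} \cap \cell)}$ must be estimated on the \emph{cut} portion $\Gamma_{\rm D} \cap \cell$, not merely on a full cell. I would handle this with a scaled trace inequality that passes from the surface piece of the cut cell back to the enlarged patch $\Dom_\cell$ appearing in Lemma \ref{lm:approx}; the robustness of this step against small cut cells hinges precisely on the aggregation construction and the $\h$-independent bounds on $\|\cmatcref\|_2$ from Lemma \ref{lm:extmatstab}, so that no negative power of $\eta_\cell$ leaks into the constant. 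The factor $\hc^{-1}$ in the boundary term combines with the $O(\h)$ gain from the trace inequality to leave the same $\h^{m-1}$ rate, matching the gradient term and yielding the optimal order.
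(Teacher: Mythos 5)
Your proposal is correct and follows essentially the same route as the paper's proof: consistency (Galerkin orthogonality) together with the coercivity and continuity of Theorem \ref{th:wp} gives quasi-optimality in the $\opnormh{\cdot}$ norm, and the boundary part of the best-approximation error is converted into volumetric terms on the patch $\Dom_\cell$ by a scaled trace inequality before invoking Lemma \ref{lm:approx}, exactly as the paper does with the Arnold trace inequality. Your closing step of recovering the full $H^1(\Dom)$ norm via Lemma \ref{lm:l2meshnorm} is left implicit in the paper, but the core argument is identical.
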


\def\wh{w_h}
\def\vh{v_h}

\begin{proof}
  Combining the consistency of the numerical method, i.e., $\Ac(u,\vh) = \ell(\vh)$, and the continuity  and coercivity of the bilinear form in Th. \ref{th:wp}, we readily get, using standard \ac{fe} analysis arguments: 
  $$
  \opnormh{\wh-\uh}^2 \lesssim \Ac(\wh-\uh,\wh-\uh) = \Ac(\wh - u, \wh - \uh) \lesssim \opnormh{\wh-u}
\opnormh{\wh-\uh}, 
$$
for any $\wh \in \fespag$. On the other hand, we use the trace inequality (see \cite{arnold_interior_1982})
$$
\| \psi \|^2_{L^2(\partial T)} \lesssim  
  |\partial T|^{-1} \| \psi \|^2_{L^2(T)} + |\partial T| \| \psi \|^2_{H^1(T)}, \qquad \hbox{for any } \, \psi \in H^1(T).
$$
Using this trace inequality, we get:
$$ \hc^{-1} \| u- \wh \|^2_{L^2(\Gamma_\mathrm{D} \cap \cell)} \leq
\hc^{-1} \| u- \wh \|^2_{L^2(\partial \Dom_\cell)} \lesssim
\hc^{-2} \| u- \wh \|^2_{L^2( \Dom_\cell)}  + \| u- \wh \|^2_{H^1( \Dom_\cell)}. 
$$
Combining the previous bound with the approximability property in Lemma \ref{lm:approx}, we readily get
$$
\opnormh{\wh-u} \lesssim \h^{m-1} | u - \wh |_{H^m(\Dom)}.
$$
It proves the theorem.
\end{proof}

\section{Numerical experiments} \label{sec:num_exp}

\subsection{Setup} The numerical examples below consider as a model problem the Poisson equation with non-homogeneous Dirichlet boundary 
conditions. The value of the  source term and the Dirichlet function are defined such that the PDE has the following manufactured exact solution:
\begin{align}
& u(x,y,z) = \sin\left(4\pi\left(\left(x-2.3\right)^2 + y^2 + z^2\right)^{1/2}\right),\\ &  (x,y)\in\Omega\subset\mathbb{R}^2, z=0 \text{ in 2D}, \quad (x,y,z)\in\Omega\subset\mathbb{R}^3 \text{ in 3D}.
\label{eq:manuf-exact-sol}
\end{align}
We consider two different geometries, a 2D circle and a 3D complex domain with the shape of a popcorn flake (see Fig. \ref{fig:experimental-setup}). These geometries are often used in the literature to study the performance of unfitted \ac{fe} methods 
(see, e.g., \cite{burman_cutfem:_2015}, where the definition of the popcorn flake is found). In all cases, we use the cuboid $[0,1]^d$, $d=2,3$ as the bounding box on top of which the 
background Cartesian grid is created. For the sake of illustration, Fig. \ref{fig:experimental-setup} displays both the considered geometries, numerical solution and bounding box. 
\begin{figure}[ht!]
 \centering
  \begin{subfigure}{0.4\textwidth}
    \includegraphics[width=0.95\textwidth]{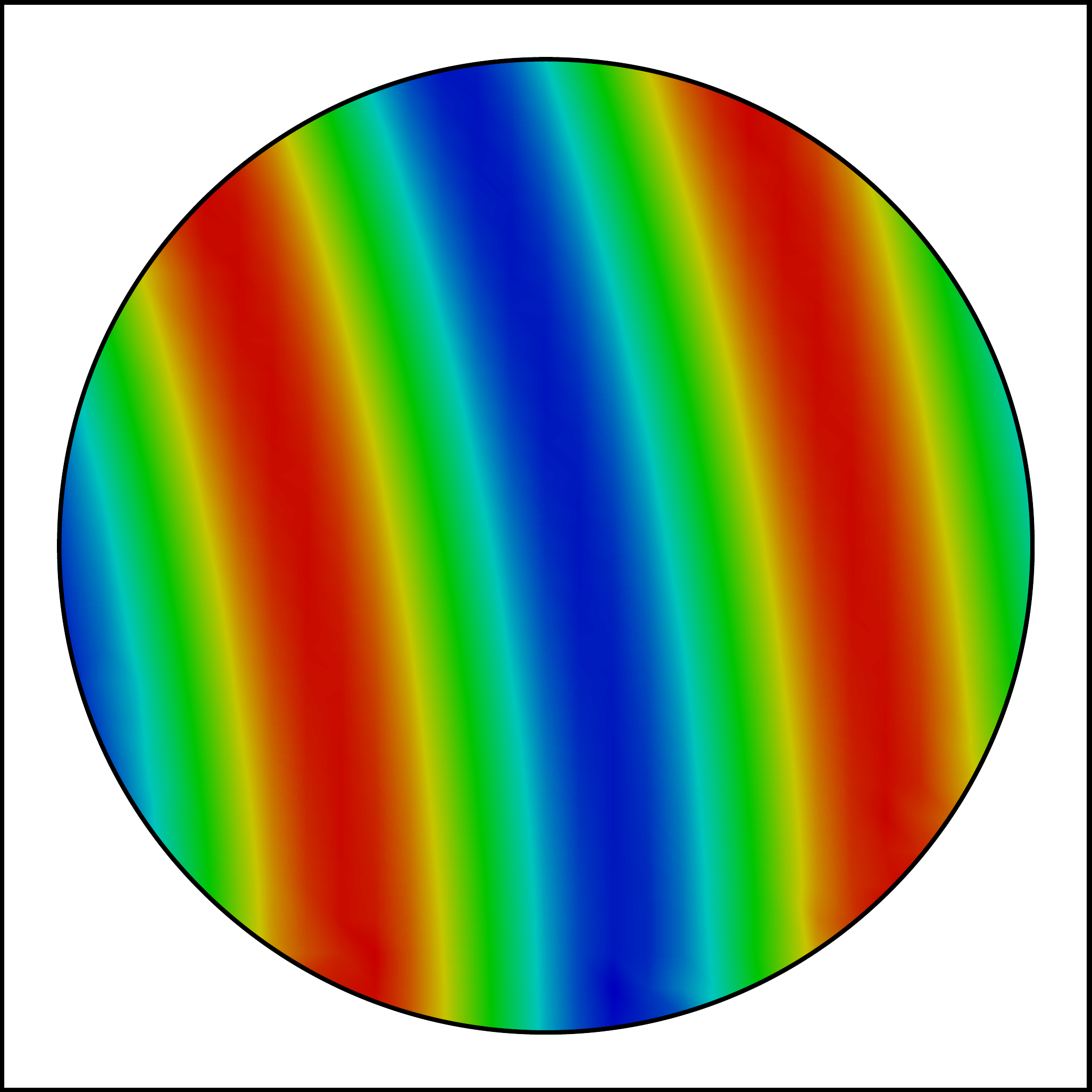}
    \caption{2D case}
  \end{subfigure}
  \begin{subfigure}{0.4\textwidth}
    \includegraphics[width=0.99\textwidth]{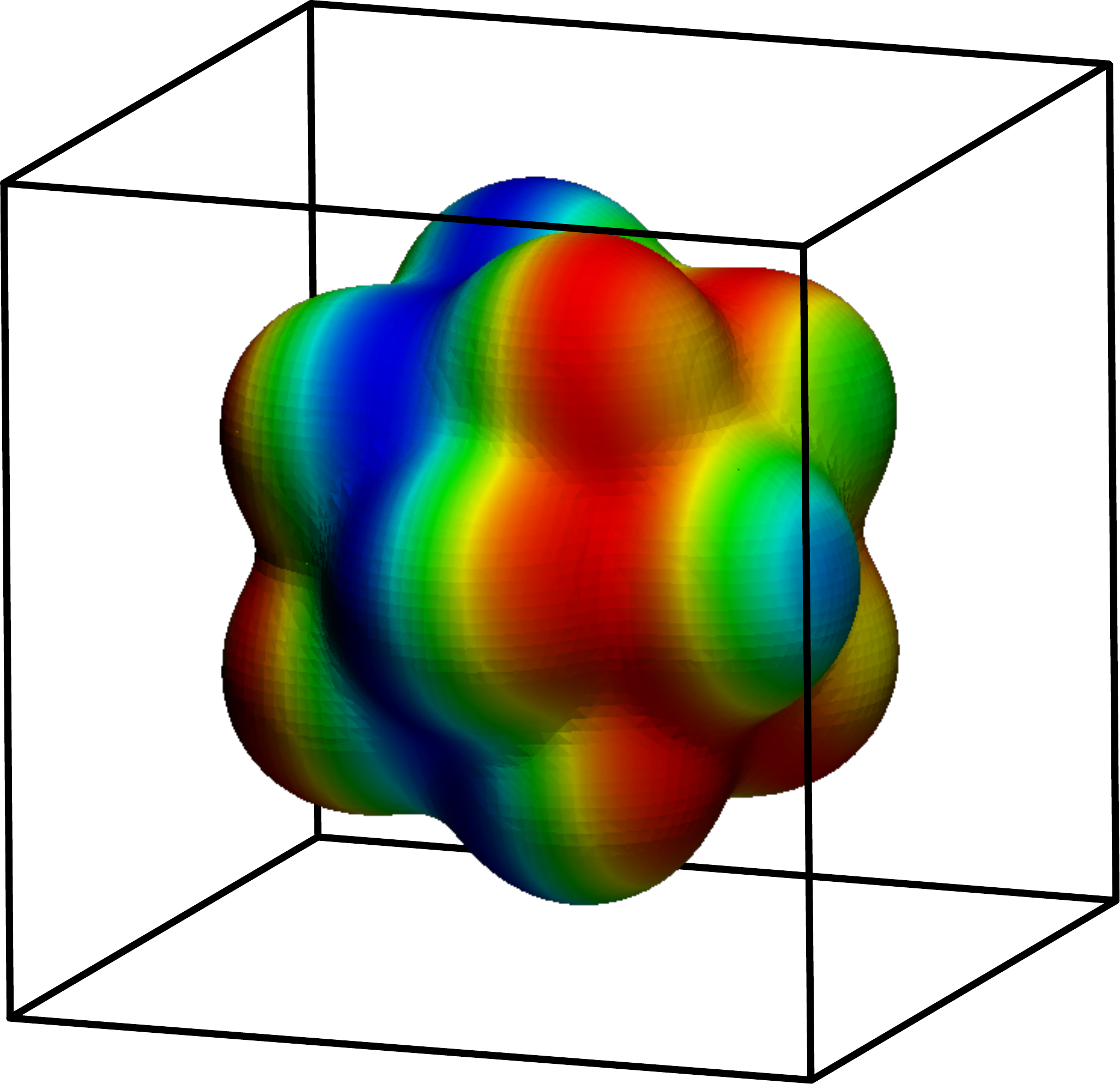}
    \caption{3D case}
  \end{subfigure}
  \begin{subfigure}{0.05\textwidth}
    \includegraphics[width=0.99\textwidth]{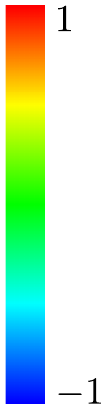}
  \end{subfigure}
  \caption{ View of the problem geometries, numerical solution and bounding box. }
  \label{fig:experimental-setup}
\end{figure}

The main goal of the following tests is to evaluate the (positive) effect of using the aggregation-based \ac{fe} space $\fespag$ instead of the usual one $\fespex$.  In the next plots, the results for the usual (un-aggregated) \ac{fe} space are labeled as \emph{standard}, whereas the cases with the aggregation are labeled as \emph{aggregated} (or \emph{aggr.} in its short form).
In all the examples, we use Lagrangian reference \acp{fe} with  bi-linear and bi-quadratic shape functions in 2D, and tri-linear and tri-quadratic ones in 3D.

Both the \emph{standard} and the \emph{aggregated} formulations have been implemented in the object-oriented HPC code FEMPAR \cite{badia_fempar:_2017}. The system of 
linear equations resulting from the problem discretization are solved within FEMPAR with a sparse direct solver from the MKL PARDISO 
package \cite{_intel_????}. Condition number estimates are computed outside FEMPAR using the MATLAB function 
{\tt condest}.\footnote{MATLAB is a trademark of THE MATHWORKS INC.} { For the standard unfitted \ac{fe} space $\fespex$, we expect very high condition numbers that can hinder the solution of the discrete system using standard double precision arithmetic. To address this effect and avoid the breakdown of sparse direct solvers, we bound from below the minimum distance between the mesh nodes and the intersection of edges with the boundary $\Gamma$ to a small numerical threshold $D_\mathrm{min}$ proportional to the cell size, namely $D_\mathrm{min}=\varepsilon h$, where $\varepsilon$ is a (mesh independent) user defined tolerance. If the edge cut-node distance is below this threshold , the edge cut is collapsed with the node, perturbing the geometry. In the numerical experiments, we take $\varepsilon=10^{-6}$ and $\varepsilon=10^{-3}$ in 2D and 3D respectively. Using the fact that $\eta_\cell \sim \varepsilon^d$, we can rewrite the condition number estimate \eqref{eq:finite-cell-estimate} in terms of the user-defined tolerance $\varepsilon$ as  $\condnum{\stifmatrix} \sim \varepsilon^{-d(2q +1 -2/d)}$. For instance, we have $\condnum{\stifmatrix} \sim \varepsilon^{-7}$ and $\condnum{\stifmatrix} \sim \varepsilon^{-13}$ for first and second order interpolations, respectively, in 3D. This illustrates that the condition numbers expected for second order interpolation are extremely high as it is confirmed below unless very large values of $\varepsilon$ are considered. However, the value of $\varepsilon$ cannot be increased without affecting the numerical error, since it perturbs the geometry, and destroys at some point the order of convergence of the numerical method. Similar perturbation-based techniques with analogous problems have been used in the frame of the finite cell method in \cite{parvizian_finite_2007}. We note that the tolerance $\varepsilon$ is not needed at all when using the aggregated \ac{fe} space.}

\subsection{Moving domain experiment}
\label{sec:moving-dom}

In the first numerical experiment, we study the robustness of the unfitted \ac{fe} formulations with respect to the relative position between the unfitted boundary 
and the background mesh. To this end, we consider two moving domains that can travel along one of the diagonals of the bounding box
 (see Fig. \ref{fig:mov-test-setup}). The considered geometries are obtained by scaling down the circle and the popcorn flake depicted 
 in Fig. \ref{fig:experimental-setup} by a factor of $0.25$. In both cases, 
the position of the bodies is controlled by the value of the parameter $\ell$ (i.e., the distance between the center of the body and a selected vertex of the box). As the value of $\ell$ varies, the objects move and their relative position with respect to the background 
mesh changes. In this process, arbitrary small cut cells can show up, leading to potential conditioning problems. In this experiment, we consider a background mesh with element size $h=2^{-5}$.
\begin{figure}[ht!]
 \centering
  \begin{subfigure}{0.4\textwidth}
    \includegraphics[width=0.95\textwidth]{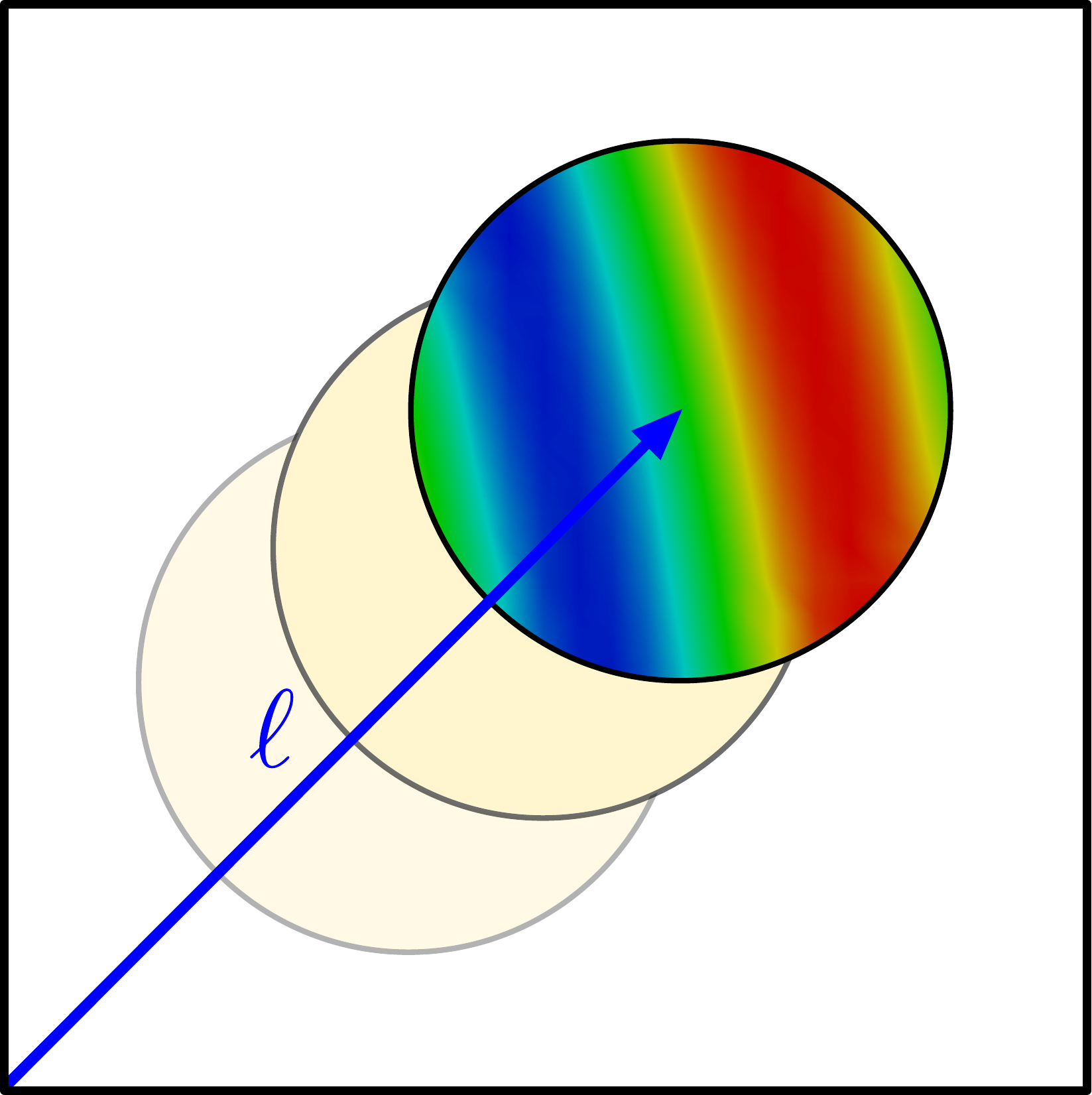}
    \caption{2D case}
    \label{fig:mov-test-setup-a}
  \end{subfigure}
  \begin{subfigure}{0.4\textwidth}
    \includegraphics[width=0.99\textwidth]{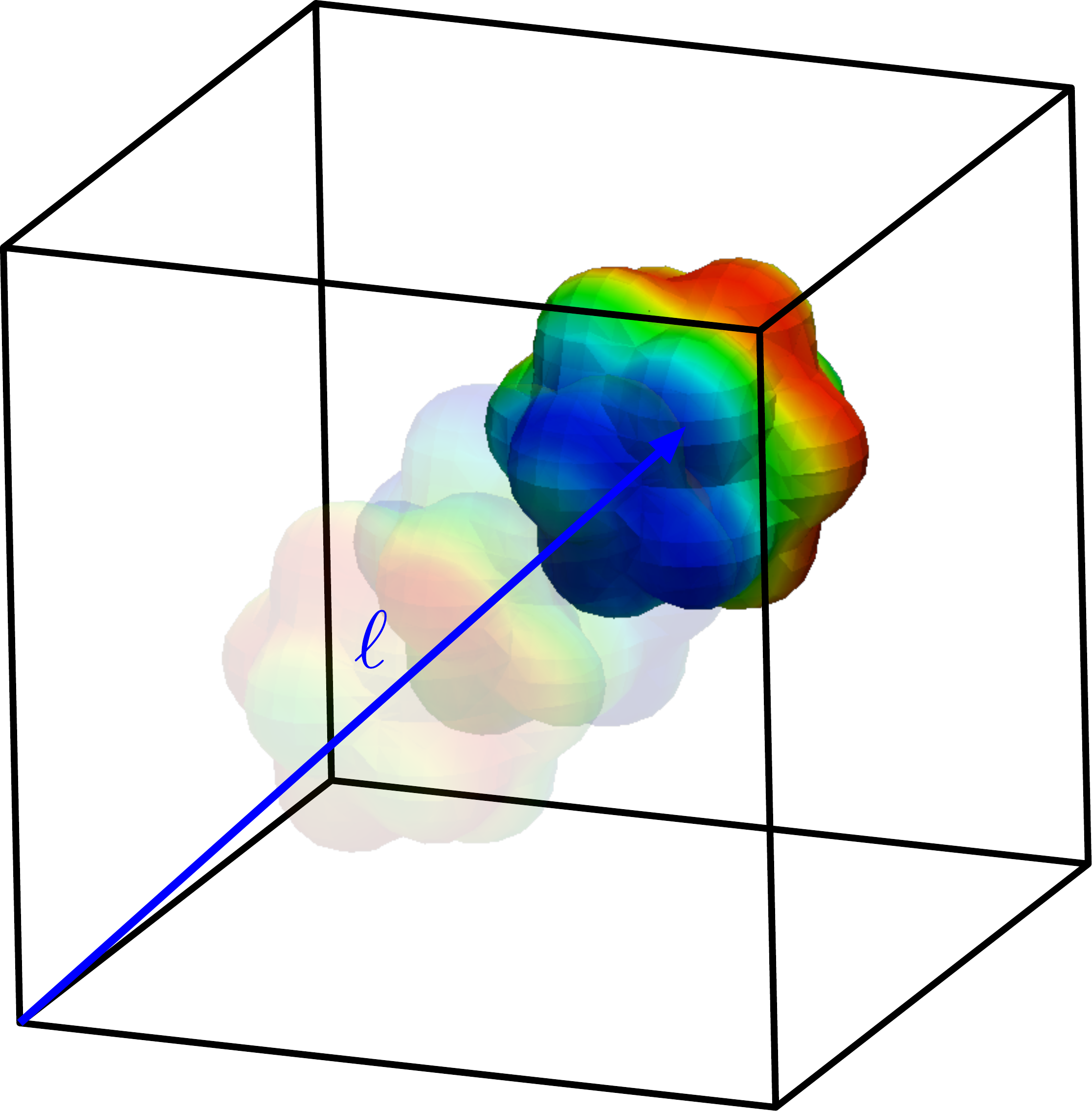}
    \caption{3D case}
    \label{fig:mov-test-setup-b}
  \end{subfigure}
  \begin{subfigure}{0.05\textwidth}
    \includegraphics[width=0.99\textwidth]{fig_colorbar.pdf}
  \end{subfigure}
  \caption{Setup of the moving domain experiment.}
  \label{fig:mov-test-setup}
\end{figure}

Fig. \ref{fig:mov-test-condest} shows the condition number estimate of the underlying linear systems varying the position of the physical domain $\Dom$. The plot is generated using a sample of 200 different values of $\ell$. It is observed  that the condition numbers are very sensitive to the position 
of the domain for the standard unfitted \ac{fe} formulation, whereas the condition numbers are nearly independent of the position when 
using the aggregation-based \ac{fe} spaces. Note that the standard formulation leads to very high condition numbers specially for second 
order interpolations and the 3D case. Moving from 1st order to 2nd order leads to a rise in the condition number between 10 and 
15 orders of magnitude. The same disastrous effect is observed when moving from 2D to 3D.  In contrast, the condition number is nearly insensitive to the number of space dimensions, and mildly 
depends on the interpolation order (as for body-fitted methods) when using aggregation-based \ac{fe} spaces. 
\begin{figure}[ht!]
 \centering
  \begin{subfigure}{0.47\textwidth}
    \includegraphics[width=0.99\textwidth]{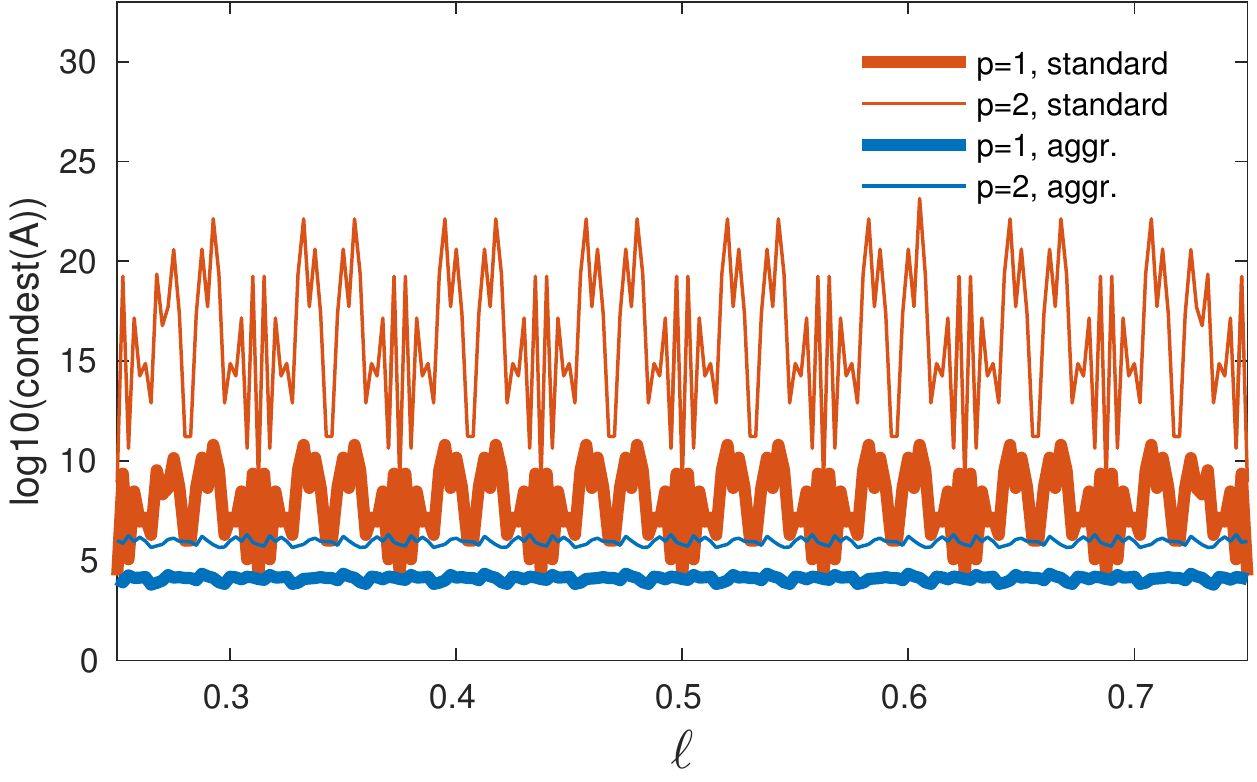}
    \caption{2D case}
    \label{fig:mov-test-condest-a}
  \end{subfigure}
  \begin{subfigure}{0.47\textwidth}
    \includegraphics[width=0.99\textwidth]{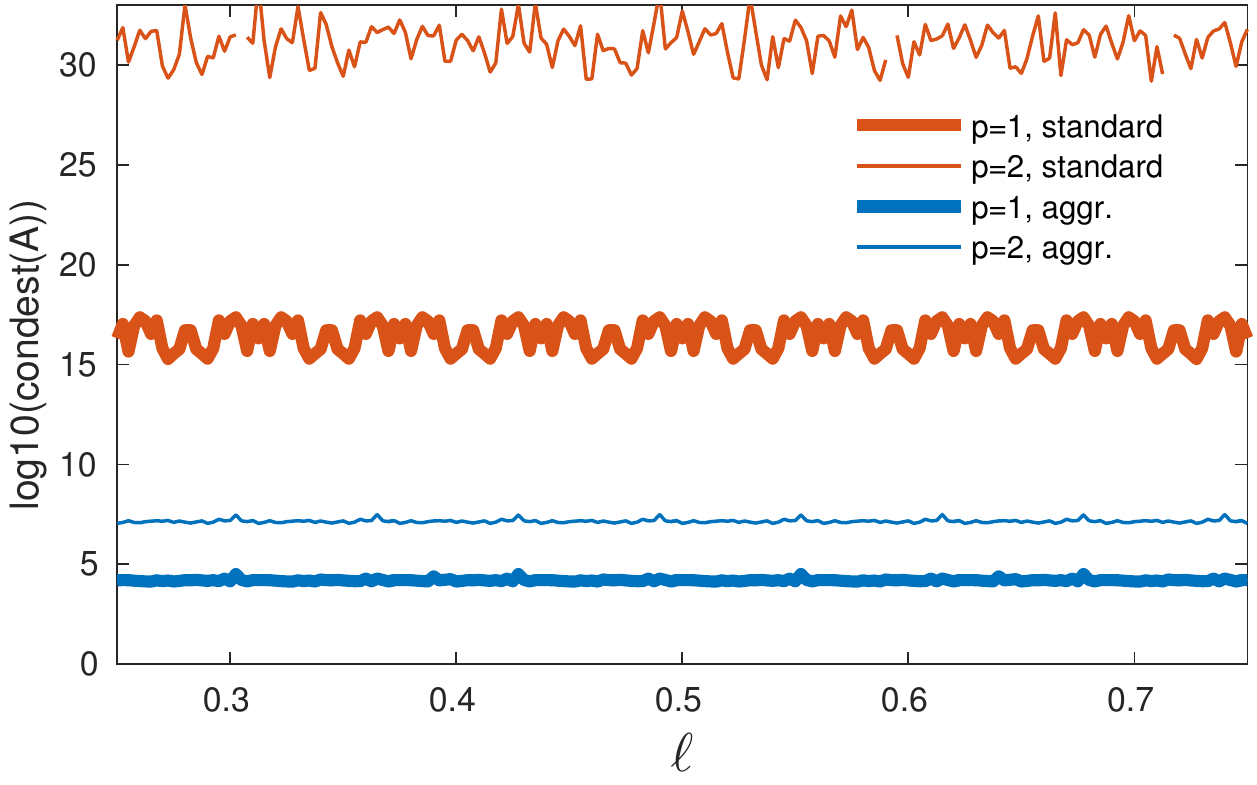}
    \caption{3D case}
    \label{fig:mov-test-condest-b}
  \end{subfigure}
  \caption{Condition number vs. domain position.}
  \label{fig:mov-test-condest}
\end{figure}

From the results shown in Fig. \ref{fig:mov-test-condest}, it is clear that the aggregation-based \ac{fe} spaces are able to dramatically
improve the condition numbers associated with the standard unfitted \ac{fe} formulation. The next question is how cell aggregation impacts on the accuracy of the numerical solution. In order to quantify this effect, Fig. \ref{fig:mov-test-h1err} shows the computed energy norm of the  
discretization error. It is observed that the error is slightly increased when using the aggregation-based \ac{fe} 
spaces. This is because the considered meshes in this moving domain experiment are rather coarse. The error increments become negligible for finer meshes (see Section \ref{sec:conv-test} below).   In this example, we cannot compute a solution for all the values of $\ell$ for  3D  and 
2nd order interpolation without using cell aggregation (see the discontinuous fine red curve in Fig. \ref{fig:mov-test-h1err-b}). The condition numbers are so high (order $10^{30}$) that 
the system is intractable,  even with a sparse direct solver, using standard double precision floating point arithmetic.
\begin{figure}[ht!]
 \centering
  \begin{subfigure}{0.47\textwidth}
    \includegraphics[width=0.99\textwidth]{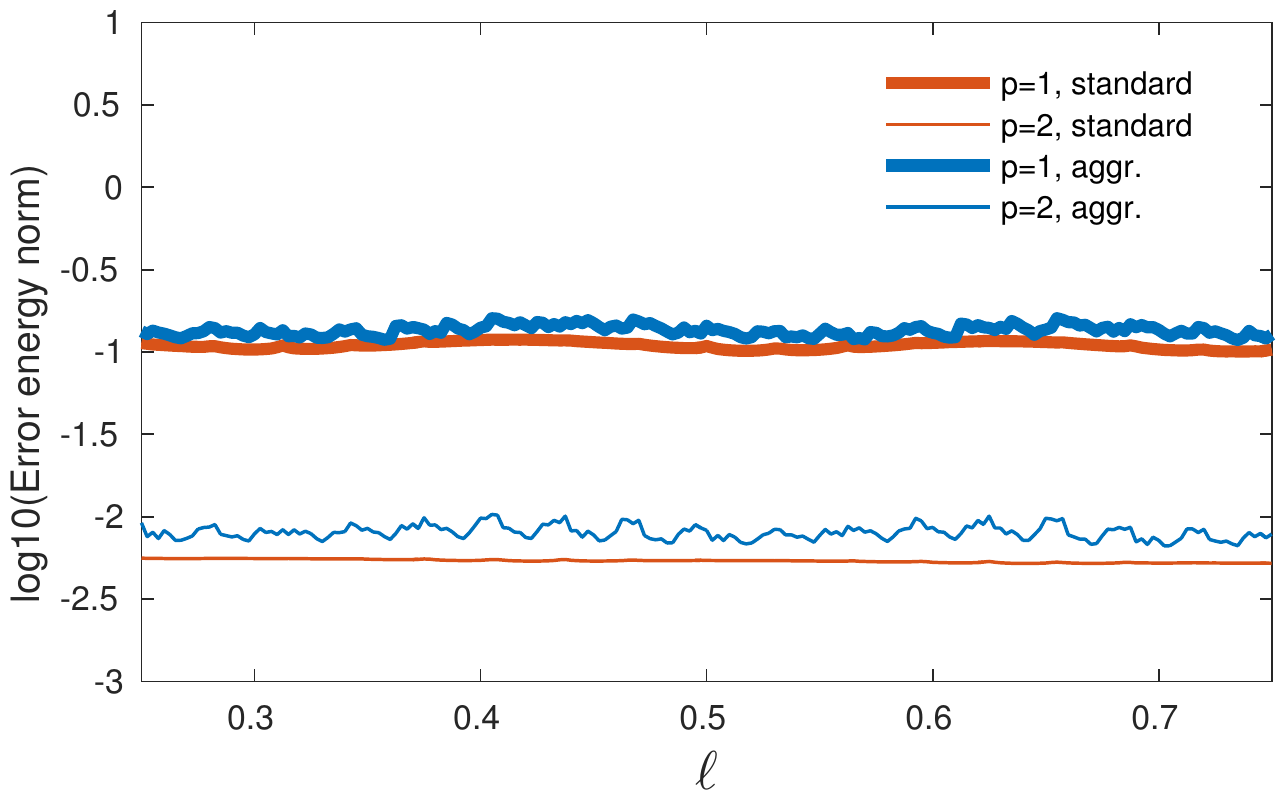}
    \caption{2D case}
    \label{fig:mov-test-h1err-a}
  \end{subfigure}
  \begin{subfigure}{0.47\textwidth}
    \includegraphics[width=0.99\textwidth]{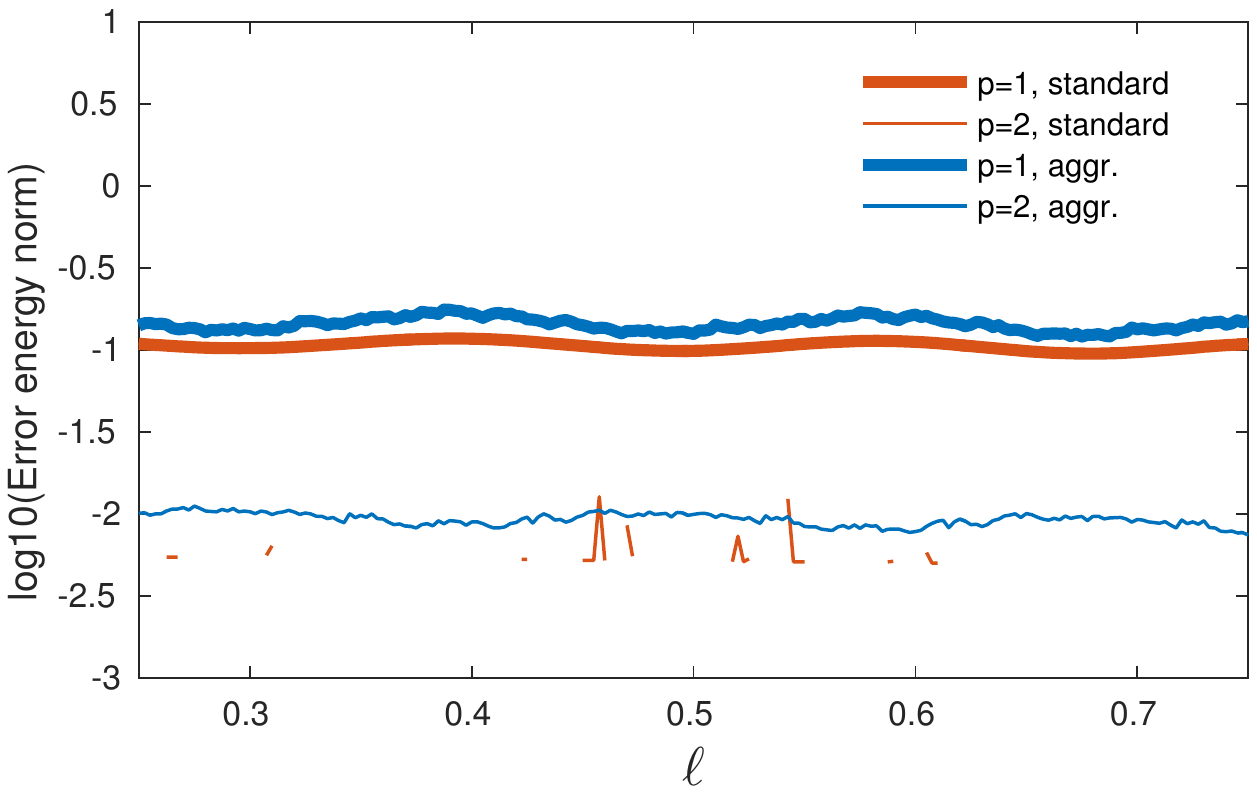}
    \caption{3D case}
    \label{fig:mov-test-h1err-b}
  \end{subfigure}
  \caption{Error energy norm vs. domain position.}
  \label{fig:mov-test-h1err}
\end{figure}

\subsection{Convergence test}
\label{sec:conv-test}
The second experiment is devoted to study the asymptotic behavior of the methods as the mesh is refined. To this end, we consider the geometries and bounding boxes displayed in Fig. \ref{fig:experimental-setup}, which are discretized with uniform Cartesian meshes with element sizes $h=2^{-m}$,  $m=3,4,\ldots,9$ in 2D, and $m=3,4,5,6$ in 3D. 

First, we study how  the size of the aggregates scales when the mesh is refined. Fig. \ref{fig:aggr-conv} shows that the aggregate size is $2h$ in 2D, whereas it tends to $3h$ in the 3D case. These results agrees with the theoretical bounds for the aggregate size discussed in Section~\ref{sec:bac_mesh}.
\begin{figure}[ht!]
  \centering
  \includegraphics[width=0.58\textwidth]{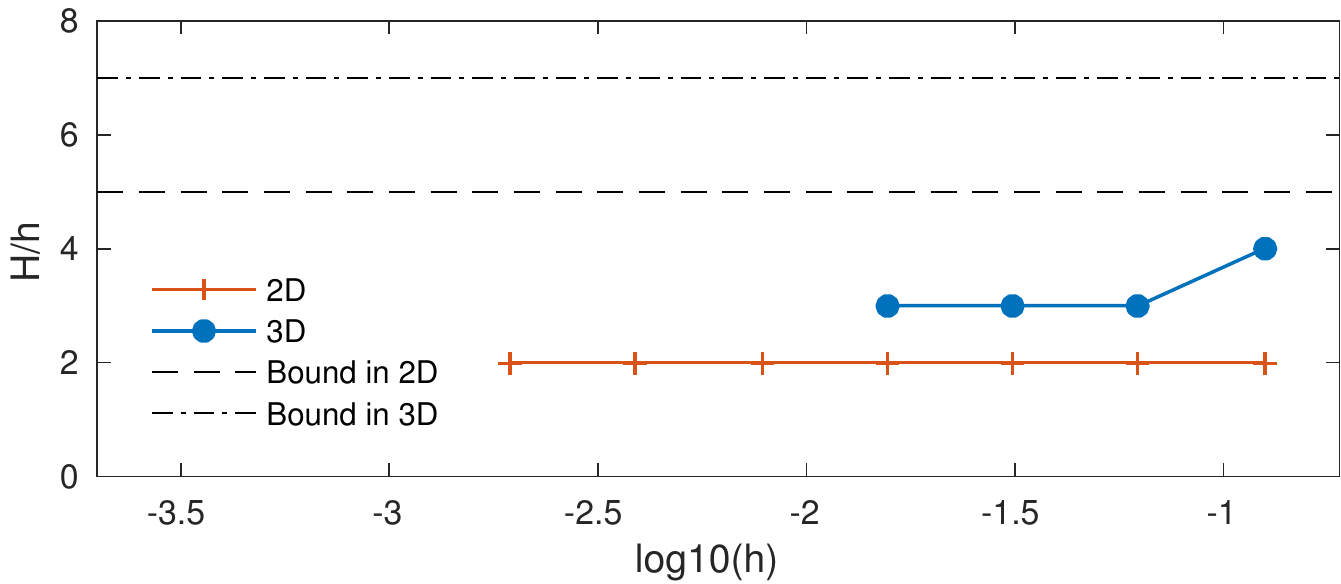} 
  \caption{Scaling of the ratio maximum aggregate size ($H$) / element size ($h$).}
  \label{fig:aggr-conv}
\end{figure}

Then, we study the scaling of the condition numbers with respect to the mesh size (see Fig.~\ref{fig:conv-test-condest}).  For the aggregation-based \ac{fe} spaces, the condition numbers of the stiffness matrix scales as $h^{-2}$, like in standard \ac{fe} methods for body fitted meshes. This confirms the theoretical result of Corollary \ref{cr:cn}. Conversely, the condition number has an erratic behavior if cell aggregation is not considered. The reason is that, as shown in the previous experiment (cf. Section \ref{sec:moving-dom}), the standard unfitted \ac{fe} formulation leads to condition numbers very sensitive to the position of the unfitted boundary. Several configurations of cut cells can show up when the mesh is refined, leading to very different condition numbers. As in the previous experiment, the condition number is very sensitive to the interpolation order and number of space dimensions for the standard unfitted \ac{fe} formulations. This effect is reverted when using cell aggregates.
\begin{figure}[ht!]
 \centering
  \begin{subfigure}{0.4\textwidth}
    \includegraphics[width=0.9\textwidth]{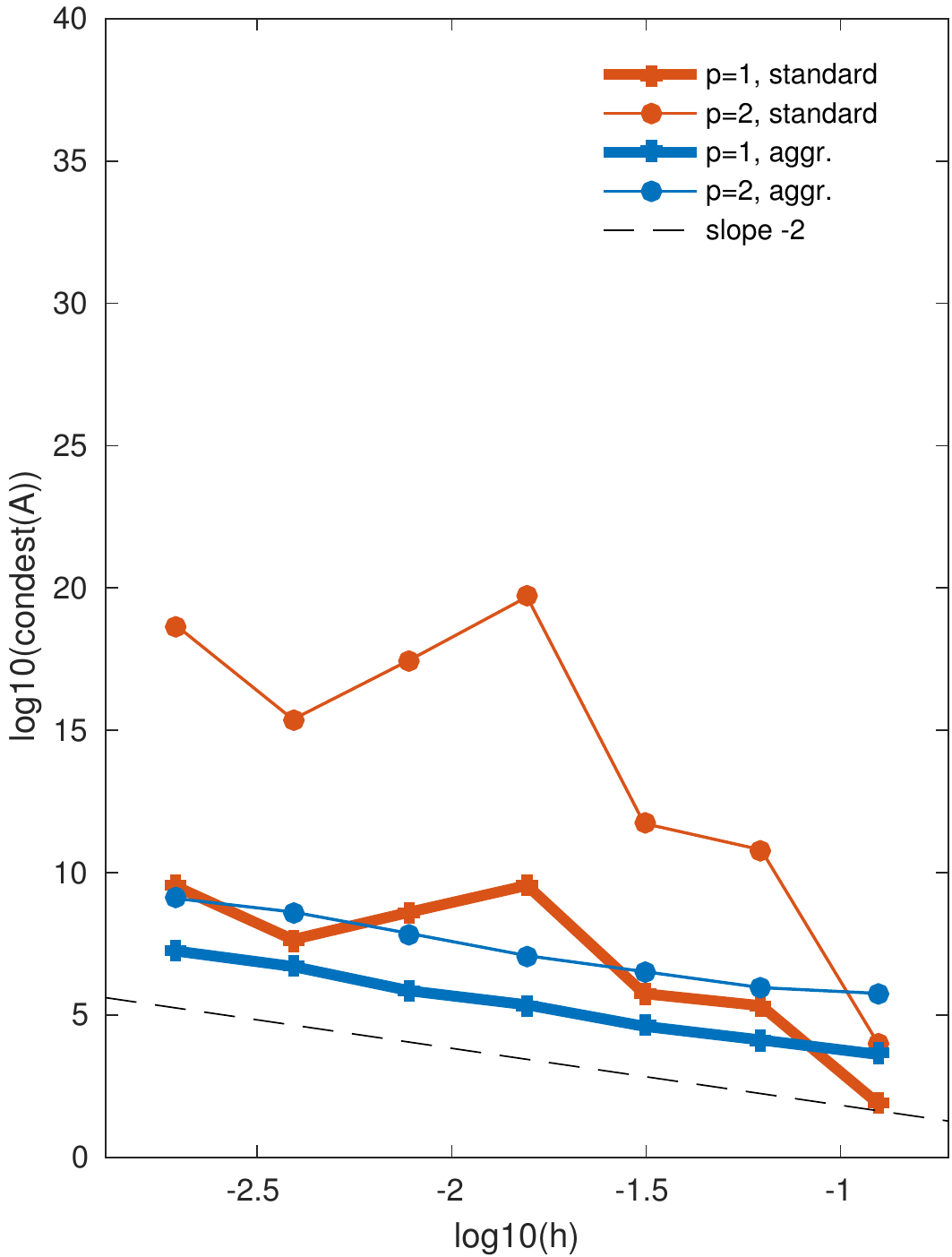}
    \caption{2D case}
    \label{fig:conv-test-condest-a}
  \end{subfigure}
  \begin{subfigure}{0.4\textwidth}
    \includegraphics[width=0.9\textwidth]{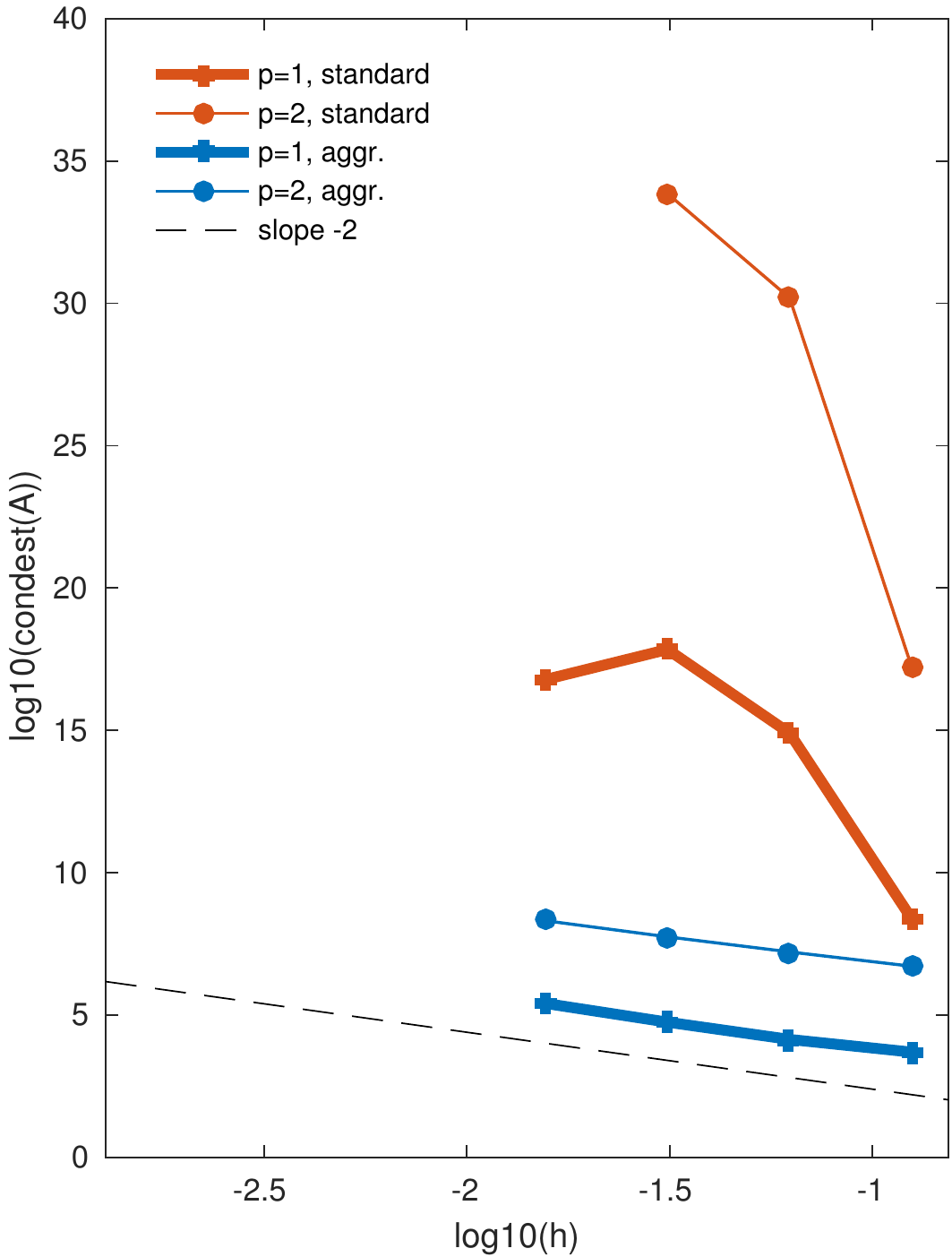}
    \caption{3D case}
    \label{fig:conv-test-condest-b}
  \end{subfigure}
  \caption{Scaling of the condition number upon mesh refinement.}
  \label{fig:conv-test-condest}
\end{figure}

Finally, we study the convergence of the discretization error. To this end, Figs. \ref{fig:conv-test-h1err} and \ref{fig:conv-test-l2err} report the discretization errors measured both in the energy norm and in the $L^2$ norm. Like in the previous experiment (cf. Section~\ref{sec:moving-dom}), the discrete system could not be solved when using the finest meshes in 3D for 2nd order interpolation without using cell-aggregation due to extremely large condition numbers (see the incomplete curve in Fig. \ref{fig:conv-test-h1err-b}).  The results show that the error increment associated with the aggregation-based \ac{fe} space becomes negligible when the mesh is refined. Moreover, the theoretical results of Section \ref{sec:err_est} are confirmed: optimal order of convergence is  (asymptotically) achieved in all cases when using aggregation-based \ac{fe} spaces both for the energy and the $L^2$ norm, for 1st and 2nd order interpolations, and for 2 and 3 spatial dimensions. 
\begin{figure}[ht!]
 \centering
  \begin{subfigure}{0.4\textwidth}
    \includegraphics[width=0.9\textwidth]{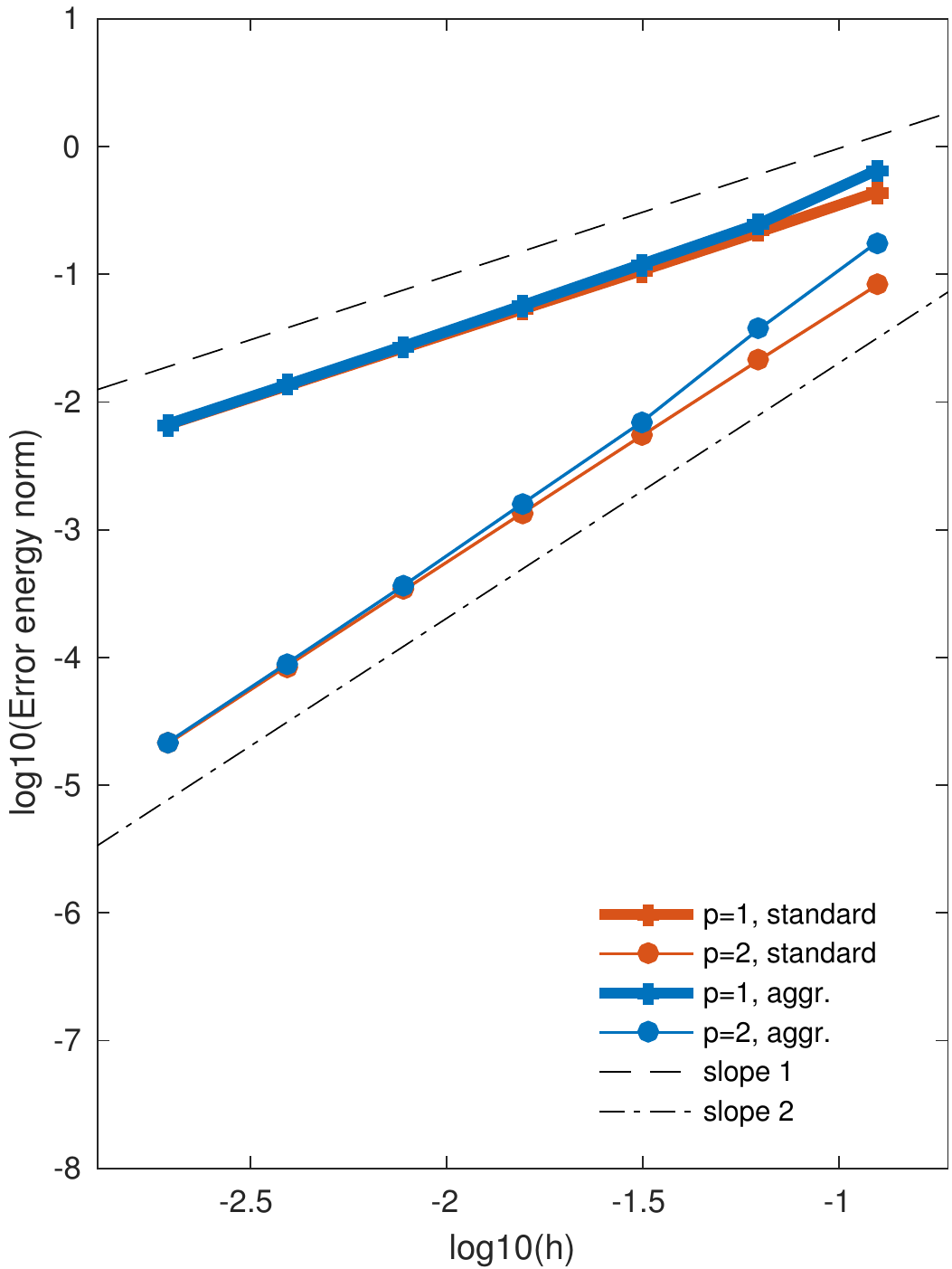}
    \caption{2D case}
    \label{fig:conv-test-h1err-a}
  \end{subfigure}
  \begin{subfigure}{0.4\textwidth}
    \includegraphics[width=0.9\textwidth]{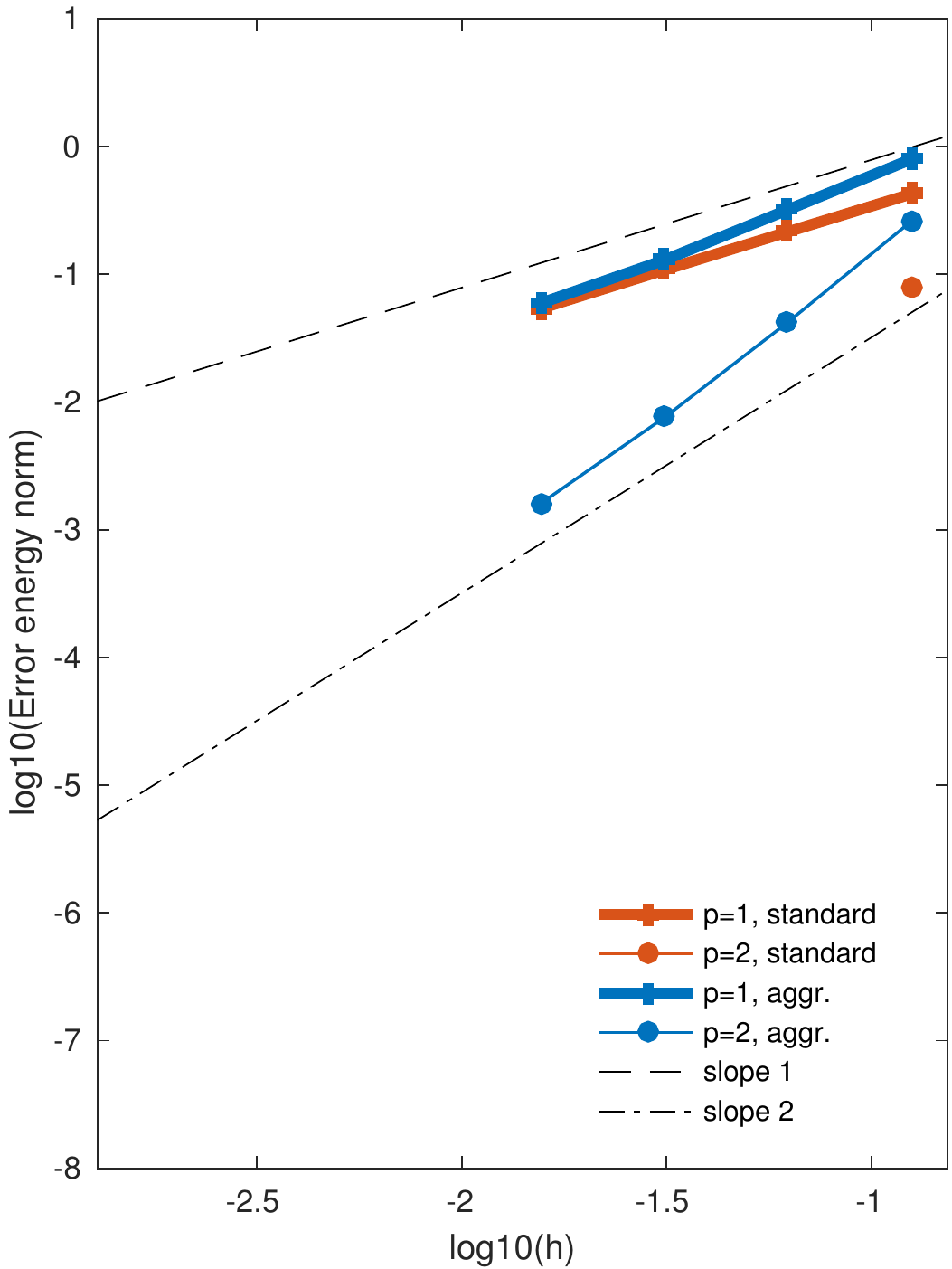}
    \caption{3D case}
    \label{fig:conv-test-h1err-b}
  \end{subfigure}
  \caption{Convergence of the discretization error in energy norm.}
  \label{fig:conv-test-h1err}
\end{figure}

\begin{figure}[ht!]
 \centering
  \begin{subfigure}{0.4\textwidth}
    \includegraphics[width=0.9\textwidth]{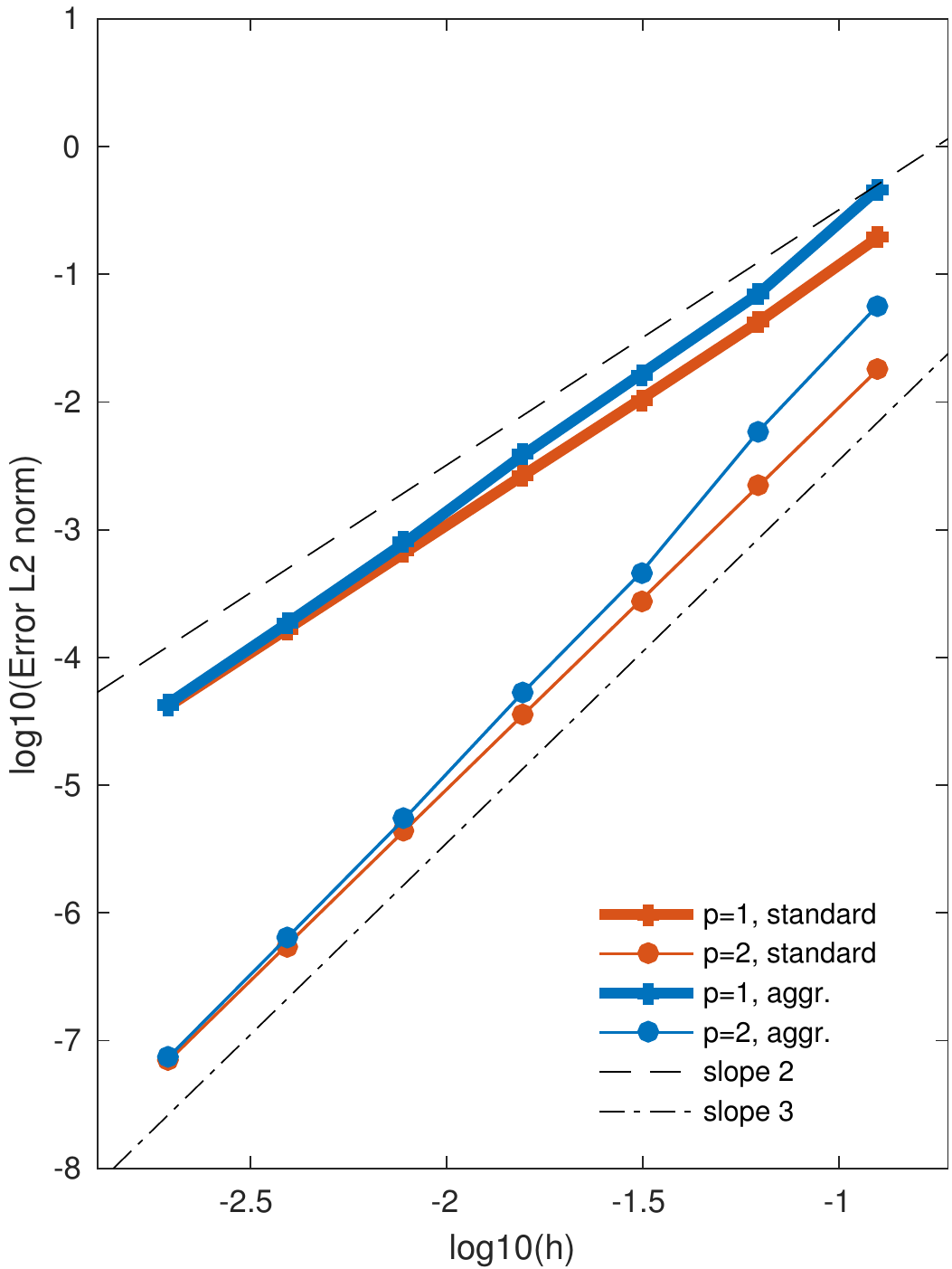}
    \caption{2D case}
    \label{fig:conv-test-l2err-a}
  \end{subfigure}
  \begin{subfigure}{0.4\textwidth}
    \includegraphics[width=0.9\textwidth]{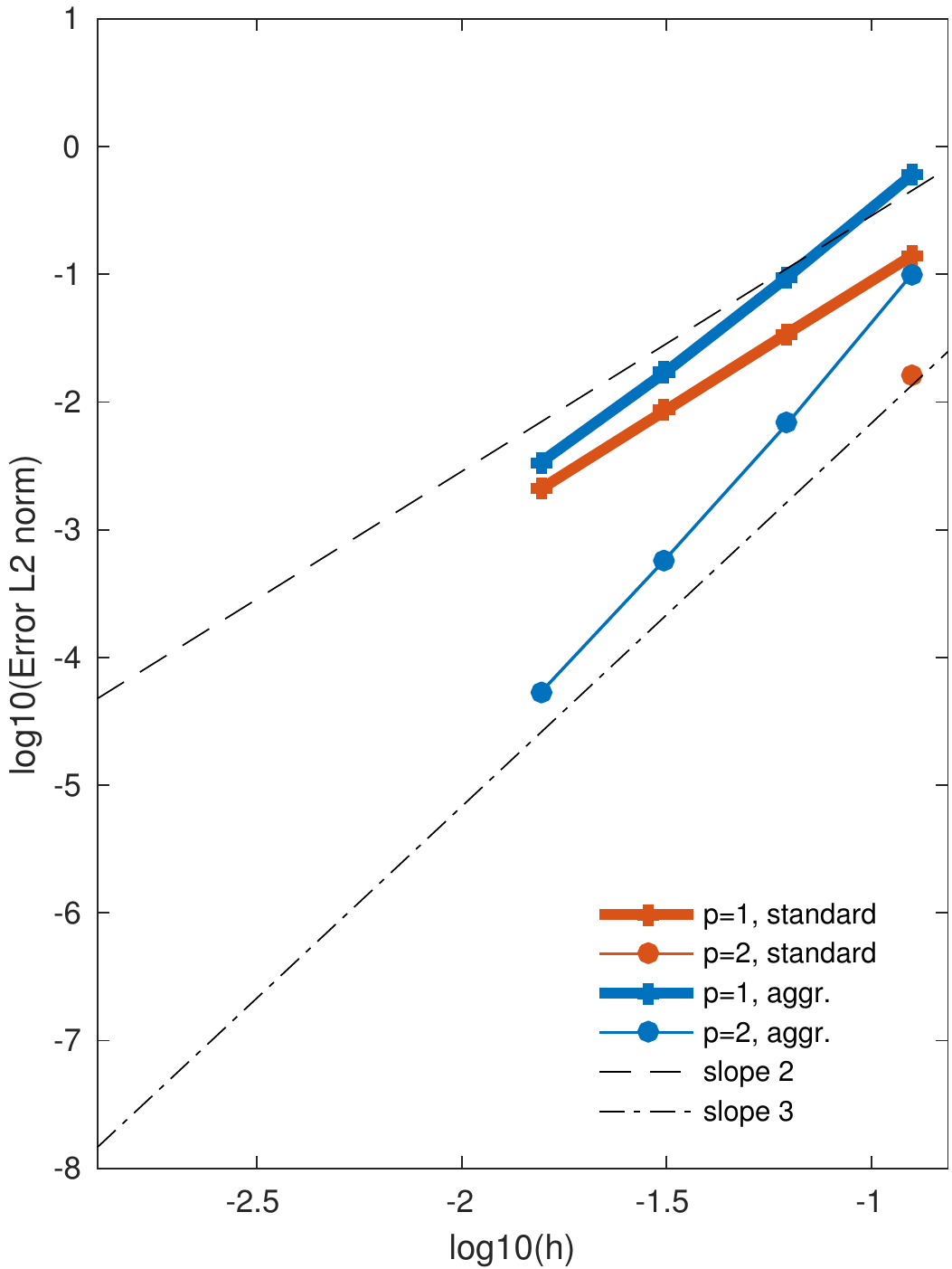}
    \caption{3D case}
    \label{fig:conv-test-l2err-b}
  \end{subfigure}
  \caption{Convergence of the discretization error in $L^2$ norm.}
  \label{fig:conv-test-l2err}
\end{figure}

\section{Conclusions}\label{sec:concl}

We have proposed a novel technique to construct \ac{fe} spaces designed to improve the conditioning problems associated with unfitted \ac{fe} methods. The spaces are defined using cell aggregates obtained by merging the cut cells to interior cells. In contrast to related methods in the literature, the proposed technique is easy to implement in existing \ac{fe} codes (it only involves cell-wise constraints) and it is general enough to deal with both continuous and \ac{dg} formulations. Another novelty with respect to previous works is that we include the
mathematical analysis of the method.  For elliptic problems, we have proved that 1) the novel \ac{fe} space leads to condition numbers that are independent from small cut cells, 2) the condition
number of the resulting system matrix scales with the inverse of the square of the size of the background mesh as in standard \ac{fe} methods, 3) the penalty parameter of
Nitsche's method is bounded from above, and 4) the optimal \ac{fe} convergence order is recovered. These
theoretical results are confirmed with 2D and 3D numerical experiments using both first and second order interpolations.

\bibliographystyle{myabbrvnat}
\bibliography{art026}

\end{document}